 \newtheorem{thm}{Theorem}
 \newtheorem{lem}{Lemma}[section]
  \newtheorem{cor}[lem]{Corollary}
 \newtheorem{prop}[lem]{Proposition}
 \newtheorem{defi}{Definition}
 \newtheorem{rem}{Remark}
 \numberwithin{equation}{section}
\newcommand{\RR}{\mathbb{R}}
\newcommand{\NN}{\mathbb{N}}
\newcommand{\1}{\mathbbm{1}}
\newcommand{\TT}{\mathbb{T}}
\newcommand{\vb}{|}
\newcommand{\norm}{\|}
\newcommand{\curl}{\mathrm{curl}\,}
\newcommand{\di}{\mathrm{div}\,}
\renewcommand{\j}{E}
\newcommand{\oj}{\overline E}
\newcommand{\tj}{\tilde E}
\newcommand{\bj}{\bar E}
\newcommand{\vnu}{\bm{\nu}}
\def\({\left(}
\def\){\right)}
\def\1{\mathbf{1}}
\def\a{{\alpha}}
\def\ep{\varepsilon}
\def\curl{{\rm curl\,}}
\def\supp{\mathrm{Supp}\,}
\def\div{\mathrm{div} \ }
\def\dist{\mathrm{dist} \ }
\def\E{\Sigma}
\def\g{\gamma}
\def\hal{\frac{1}{2}}
\def\indic{\mathbf{1}}
\def\j{E}
\def\l{{\ell}}
\def\nab{\nabla}
\def\wb{{\overline{m}}}
\def\bw{{\underline{m}}}
\def\mz{\mathbb{Z}}
\def\mn{\mathbb{N}}
\def\mr{\mathbb{R}}
\def\om{\Omega}
\def\p{\partial}
\def\ro{\rho}
\def\r|{\right|}
\def\sm{\setminus}
\def\vp{\varphi}
\def\Xint#1{\mathchoice
   {\XXint\displaystyle\textstyle{#1}}%
   {\XXint\textstyle\scriptstyle{#1}}%
   {\XXint\scriptstyle\scriptscriptstyle{#1}}%
   {\XXint\scriptscriptstyle\scriptscriptstyle{#1}}%
   \!\int}
\def\XXint#1#2#3{{\setbox0=\hbox{$#1{#2#3}{\int}$}
     \vcenter{\hbox{$#2#3$}}\kern-.5\wd0}}
\def\fint{\Xint-}
\begin{document}

\title[Renormalized energy equidistribution in 2D Coulomb systems]{Renormalized energy equidistribution and local charge balance in 2D Coulomb systems}
\author{Simona Rota Nodari and Sylvia Serfaty}
\date{July 11, 2013}

\begin{abstract}
We consider two related problems: the first is the minimization of the ``Coulomb renormalized energy" of Sandier-Serfaty, which corresponds to the total Coulomb interaction of point charges in a uniform neutralizing background (or rather variants of it). The second corresponds to the minimization of the Hamiltonian of a two-dimensional ``Coulomb gas" 
or ``one-component plasma", a system of $n$ point charges with Coulomb pair interaction, in a confining potential (minimizers of this energy also correspond to ``weighted Fekete sets"). In both cases we investigate the microscopic structure of minimizers, i.e. at the scale corresponding to the interparticle distance. We show that in any large enough microscopic set, the value of the energy and the number of points are ``rigid" and completely determined by the macroscopic density of points.  In other words, points and energy are ``equidistributed" in space (modulo appropriate scalings).  The number of points in a ball is in particular known up to an error proportional to the radius of the ball. We also prove a result on the maximal and minimal distances between points.
Our approach involves fully exploiting the minimality by  reducing to  minimization problems with fixed boundary conditions posed on smaller  subsets.\end{abstract}

\maketitle

\section{Introduction}
The ``renormalized energy" $W$, or more precisely Coulomb renormalized energy, was introduced in  \cite{sandierserfaty} where it appeared as a limiting energy for vortices in the Ginzburg-Landau model of superconductivity. It provides a way of computing a total Coulomb interaction for an {\it infinite} number of point charges in the plane, ``screened" by a constant density charge of the opposite sign, system which is also  called a  ``one-component plasma" or a ``jellium" in physics. The renormalized energy is computed as a ``thermodynamic limit", i.e. by taking averages of a certain energy computed on boxes whose sizes tend to $+\infty$, for a complete definition see below.
In \cite{sandierserfaty} it was conjectured that the ``Abrikosov" triangular lattice $\mathbb{Z}+ \mathbb{Z}e^{i\frac{\pi}{3}}$, properly scaled, achieves the minimum of $W$. This is supported by the fact that the triangular lattice  is proven to achieve the minimum of $W$ among ``simple" lattices of same volume, and in agreement with the observations in superconductors or triangular lattices of vortices, named in that context ``Abrikosov lattices". Proving rigorously that the triangular lattice achieves the minimum among all possible configurations remains a completely open question.

In \cite{sandierserfatygas} the same renormalized energy $W$ was also extracted as a limit for the minimization of the Hamiltonian associated to a two-dimensional Coulomb gas (also called ``one-component plasma"), 
\begin{equation}\label{wn}
w_n(x_1,\dots, x_n)= -\sum_{i\neq j} \log |x_i-x_j| + n\sum_{i=1}^n V(x_i)\end{equation}
where $x_1, \dots, x_n$ are $n$ points in the plane and $V$ is a confining potential growing sufficiently fast at infinity, and $n\to \infty$.  Due to its link with random matrices and determinantal processes, this Hamiltonian is particularly of interest when considered in a statistical mechanics setting, i.e.  with temperature, and its minimizers or ground states essentially correspond to the states in the limit of  zero temperature.

Since the question of identifying the minimum and minimizers of $W$ seems too hard, one can try to prove a weaker statement, namely that for a minimizer, the energy is ``equidistributed" i.e. asymptotically the same on any large enough square, regardless of where it is centered. This is inspired by such results in a paper of Alberti-Choksi-Otto \cite{aco} obtained for a somewhat similar energy arising in the context of the ``Ohta-Kawasaki model", for which it is also expected that the minimizers are periodic.

We will prove such a result here. However, the function $W$  defined in \cite{sandierserfaty} being defined as an average over squares centered at the origin and of size tending to infinity, it does not feel a compact perturbation of the configuration.  In some sense, $W$ is too ambiguous. Hence one cannot hope to prove the desired statement, unless  some boundary conditions  are fixed. One of the goals of this paper is also to study alternate minimization problems (with boundary condition, or with periodic boundary conditions), compare them, and show the result for these problems.

Another question we address, and which is closely related, is the equidistribution of the points in minimizers of $W$. Again, under appropriate boundary conditions, we will show that large boxes contain asymptotically the same number of points, simply proportional to their area and with an error proportional to the width of the box only, regardless of where they are centered. These results are optimal and point towards agreement with the idea of minimizers having some periodic behavior.

Our approach, designed to prove equidistribution of points and energy for $W$, works equally well to prove the same for minimizers of the Coulomb gas energy  \eqref{wn} in the limit $n\to \infty$. Note that such points are ``weighted Fekete sets" (for reference see \cite{safftotik}) and have attracted attention as such. It is known that these points concentrate on a subset of the plane, that we will denote $\E$, with a density proportional to  $\mu_0$, corresponding to the ``equilibrium measure", both being functions of the potential $V$ (for more details see below). We assume here that $\E$ is compact and has a regular enough boundary.   More precisely,  it is known that the leading order behavior for minimizers of $w_n$ is given by
\begin{equation}\label{eqcvminim}
\frac{1}{n}\sum_{i=1}^n \delta_{x_i} \rightharpoonup \mu_0\end{equation} in the weak sense of probability measures. The macroscopic behavior of the points is thus understood, but the distances  between neighboring points is typically $n^{-1/2}$, this is what we call the {\it microscopic scale}.   In \cite{sandierserfatygas} it was  proven that  the distribution of the points at the microscopic scale, or equivalently after blow-up by $\sqrt{n}$ and after letting $n \to \infty$, is governed by the renormalized energy $W$, but this was done via a ``probabilistic" or averaged formulation, which we can completely bypass here.

In  \cite{aoc}, Ameur and Ortega-Cerd\`a proved that such points satisfy an asymptotic equidistribution property at the microscopic scale:  a given microscopic  ball $B$  of radius $R/\sqrt{n}$ contains $\sim n\int_{B}\mu_0(x)dx$ as $n\to \infty$ then $R \to \infty$.
More precisely they showed, through a method based on ``Beurling-Landau densities" and ``correlation kernel estimates", that
\begin{equation}\label{aoc}
\limsup_{R\to \infty} \limsup_{n\to \infty} \frac{D(x_n, R)}{R^2} = 0\end{equation}
as long as $\dist (x_n,  \partial (\sqrt{n} \E)) \ge \log^2 n$, where
 \begin{equation}\label{scalingr2}
 D(x,R):=\# \( \{x_i\} \cap B(x/\sqrt{n}, R/\sqrt{n})\) - n\mu_0(B(x/\sqrt{n}, R/\sqrt{n})).\end{equation} Here, we prove a result which is a bit stronger  because it says
 that 
 \begin{equation}\label{srequi}
 \limsup_{R\to \infty} \limsup_{n\to \infty} \frac{D(x_n, R)}{R} <+\infty\end{equation} with balls replaced by squares in the definition of $D$. This
  obviously implies \eqref{aoc}, and seems to be optimal since it cannot be improved even if the points form a periodic lattice. The result is however, for now, limited to points $x_n$ such that $\dist (x_n,   \partial (\sqrt{n} \E)) \ge n^\beta$ for some power $\beta\in (0,1)$ (we believe we could get any $\beta>0$ but we did not pursue it here). In \cite{lev} an estimate similar to \eqref{srequi} is proven for Fekete points on complex manifolds.
  In contrast with \cite{aoc,lev} our method 
  is simply energy-comparison based. 
  

 In addition, we also prove that the ``renormalized energy" $W$ is  equidistributed at the microscopic scale in arbitrary square, again staying sufficiently far away from $\p \E$. This hints again towards the Abrikosov lattice, since it is expected that minimizers of $W$ look like it.  It also improves on the result of \cite[Theorem 2]{sandierserfatygas},  where it was established that minimizers of $w_n$, seen after blow-up at scale $\sqrt{n}$  around a point in $\E$, tend to minimize $W$  for {\it almost every} blow-up center. In other words, it was an averaged result. Here what we show proves that for true minimizers, this holds after blow-up around {\it any} point sufficiently inside $\E$. At the same time it dispenses  with having to use the ``probabilistic"  framework of \cite{sandierserfaty} based on the ergodic theorem. 

Again, our method  is energy and comparison-based, as in \cite{aco}, and relies on the tools from \cite{sandierserfaty}.

Let us finally mention that minimizers of the Coulomb gas energy (i.e. \eqref{wn} but with $-\log $ replaced by the appropriate Coulomb kernel) in dimensions different from $2$ is also of interest. In dimension $1$, they are essentially completely understood, cf. \cite{kunz,bl,am}. In higher dimension, we refer to \cite{rs} for recent results.  

We next state the definitions and our  results  more precisely.

\subsection{The Renormalized Energy: Definitions}
 In this subsection, we give the exact definition of this renormalized energy. We follow here the conventions of \cite{sandierserfatygas}, in particular,  compared to \cite{sandierserfaty} the vector-fields have been rotated by $\pi/2$. We also give more general definitions, relative to bounded domains, that we will need here.

For a given set $\Omega $ in the plane, $|\Omega|$ will denote its Lebesgue measure.
In what follows, $K_R$ denotes the square $[-R,R]^2$, and $K_R(x)= x +K_R$ the square  centered at $x$ and of sidelength $2R$.

\begin{defi}\label{defAmOmega}
Let $m$ be a positive number and $\Omega$ a  simply connected open  subset of $\RR^2$. Let $\j$ be a vector field in $\Omega$. We say $\j$ belongs to the admissible class $\mathcal A_m(\Omega)$ if
\begin{equation}\label{eqdefAmOmega}
\di \j=2\pi(\nu-m),\quad \curl \j=0\quad \mbox{in}\ \Omega\\
\end{equation}
where $\nu$ has the form
$$
\nu=\sum_{p\in\Lambda}\delta_p\quad\mbox{for some discrete set}\ \Lambda\subset {\Omega}
$$
and if $\Omega=\RR^2$ we require
\begin{equation}\label{eqcondnu}
\frac{\nu(K_R)}{\vb K_R\vb}\quad\mbox{is bounded by a constant independent of}\ R>1.
\end{equation}
\end{defi}
In this paper we will not make use of the condition \eqref{eqcondnu} borrowed from the definition of \cite{sandierserfaty}.

Here the physical interpretation of  $\j$ is that it is like an  ``electric field" generated by the point charges at $\Lambda$ which are screened by an opposite uniform background of density $m$.

This can be generalized to charges that are  screened by a non-uniform background.
\begin{defi}\label{defBrhoOmega}
Let  $\Omega$ be a simply connected open subset of $\RR^2$, and $\rho$  be a nonnegative $L^\infty(\Omega)$ function. Let $\j$ be a vector field in $\Omega$. We say $\j$ belongs to the admissible class $\mathcal B_\rho(\Omega)$ if
\begin{equation}\label{eqdefBrhoOmega}
\di \j=2\pi(\nu-\rho),\quad \curl \j=0\quad \mbox{in}\ \Omega\\
\end{equation}
where $\nu$ has the form
$$
\nu=\sum_{p\in\Lambda}\delta_p\quad\mbox{for some discrete set}\ \Lambda\subset {\Omega}.
$$
\end{defi}

\begin{rem} If $\rho$ is constant then $\mathcal A_\rho(\Omega)=\mathcal B_{\rho}(\Omega)$.
Moreover, we denote $\mathcal A_m=\mathcal A_m(\RR^2)$ and $\mathcal B_\rho=\mathcal B_\rho(\RR^2)$.
\end{rem}

\begin{defi}Let $\Omega$ be a  simply connected open  subset of $\RR^2$, and $\rho$ 
 a nonnegative $L^\infty(\Omega)$ function.  Let $\j$ be a vector field in $\Omega$ such that
\begin{equation}\label{eqdefBrhoOmegabis}
\di \j=2\pi(\nu-\rho)\quad \mbox{in}\ \Omega\\
\end{equation}
where $\nu$ has the form
$$
\nu=\sum_{p\in\Lambda}\delta_p\quad\mbox{for some discrete set}\ \Lambda\subset \Omega.
$$
Moreover, assume that $\curl \j=0$ in $\cup_{p\in\Lambda}B(p,\eta_0)$ for some small $\eta_0>0$.
Then for any continuous  nonnegative function $\chi$, we let
\begin{equation}\label{defWchi}
W(\j,\chi)=\lim_{\eta\to 0}\left(\frac{1}{2}\int_{\Omega\backslash \cup_{p\in\Lambda}B(p,\eta)}\chi\vb \j\vb^2+\pi\log\eta\sum_{p\in\Lambda}\chi(p)\right).
\end{equation}
\end{defi}

For any family of sets $\{\mathbf U_R\}_{R>0}$ in $\RR^2$ we use the notation $\chi_{\mathbf U_R}$ for positive cutoff functions satisfying for some constant $C$ independent of $R$,
\begin{equation}
	\label{defcutoff}
\chi_{\mathbf U_R} \le 1, \quad	\vb \nabla \chi_{\mathbf U_R}\vb \le C,\quad \mathrm{supp}(\chi_{\mathbf U_R})\subset \mathbf U_R,\quad \chi_{\mathbf U_R}(x)=1\ \text{if}\ d(x,\mathbf{U}_R^c)\ge 1.
\end{equation}
We will use this mainly for squares. In the rest of the paper, when we write $\chi_{K_L}$ we will always mean any nonnegative function satisfying \eqref{defcutoff} relative to the square $K_L$.

\begin{defi}\label{defrenenergy} The renormalized energy $W$ is defined, for $\j\in \mathcal A_m$, by
\begin{equation}
	\label{defW}
	W(\j)=\limsup_{R\to+\infty}\frac{W(\j,\chi_{K_R})}{\vb K_R \vb},
\end{equation}
with $\{\chi_{K_R}\}_R$ satisfying \eqref{defcutoff} for the family of squares $\{K_R\}_{R>0}$.
\end{defi}
Thus $W$ is defined as an energy per unit volume, where the energy $\int |E|^2 $ needs to be computed in a ``renormalized way" \`a la Bethuel-Brezis-H\'elein \cite{bbh}, according to \eqref{defWchi}, in order to remove the  divergent contribution  of the singularity in $1/r$  that $E$ carries around  each point $p \in \Lambda$.

More precisely, if $\j\in \mathcal A_m(\Omega) $ or $\mathcal B_\ro (\om)$ then we may write $\j= - \nabla H$ for some $H$ satisfying $$- \Delta H= 2\pi\Big(\sum_{p\in \Lambda} \delta_p-\ro\Big),$$ and we can check that $\j \in \mathcal C^1(\overline{\om} \sm \Lambda)$. Also (and even if $\curl \j=0$ only near the $p$'s) in  the neighborhood of each $p\in \Lambda$ we have the decomposition $\j(x)=\nabla \log \vb x-p\vb+f(x)$ where $f$ is $\mathcal C^1$, and it easily follows that the limit \eqref{defWchi} exists if $\chi$ is compactly supported. It also follows that $\j$ belongs to $L^p_{loc}(\om)$ for any $p<2$, and that taking the trace of $\j$ on any regular enough one-dimensional curve makes sense.

The following additional facts and remarks about $W$ are mostly borrowed from \cite{sandierserfaty}:
\begin{itemize}
	\item In the definition \eqref{defW}, the squares  $\{K_R\}_{R}$ can be replaced by other families of (reasonable) shapes $\{\mathbf U_R\}_{R}$, this yields a definition of a renormalized energy $W_U$, where the letter $U$ stands for the family $\{\mathbf U_R\}_R$. The minimizers and the value of the minimum of $W_U$ are independent of $U$ however.
	\item The value of $W$ does not depend on $\{\chi_{K_R}\}_R$  as long as it satisfies \eqref{defcutoff}.  The need for the cutoffs $\chi_{K_R}$ is to avoid problems with points that are on or very near the boundary, which would cause
$W(\j, \indic_{K_R})$ to be infinite.  In fact $W(\j, \indic_\om)$ makes sense (by natural extension of \eqref{defWchi}) if and  only if some boundary value is known   for $\j$  which excludes points on the boundary of $\Omega$.
When no such boundary value is known, we need to resort instead to $W(\j, \chi_{\Omega})$ where $\chi_\Omega$ is as in \eqref{defcutoff}.
	\item It is easy to check that if $\j$ belongs to $\mathcal A_m$, $m>0$, then $\j'=\frac{1}{\sqrt m}\j\left(\frac{\cdot}{\sqrt m}\right)$ belongs to $\mathcal A_1$ and
	\begin{equation}
		\label{eqscalingW}
		W(\j)=m\left(W(\j')-\frac{\pi}{2}\log m\right).
	\end{equation}
\item $W$ is bounded below and admits a minimizer over $\mathcal A_1$, hence also over  $\mathcal A_m $ by \eqref{eqscalingW}.
	 In particular,
	\begin{equation}
		\label{eqscalingminW}
		\min_{\mathcal A_m}W=m\left(\min_{\mathcal A_1}W-\frac{\pi}{2}\log m\right).
	\end{equation}
	In what follows, we  sometimes  denote for shortness
	\begin{equation}
		\label{defsigma*m}
		\sigma^{*}_m=\min_{\mathcal A_m}W.
	\end{equation}
	\item Because the number of points and the volume  are infinite when $\om = \mr^2$, the interaction over large balls needs to be normalized by the volume, as in a thermodynamic limit. Thus $W$ does not feel compact perturbations of the configuration of points. Even though the interactions are long-range, this is not difficult to justify rigorously.
	\item When the set of points $\Lambda$ is itself exactly a lattice $\mathbb Z\vec{u}+\mathbb Z\vec{v}$ then $W$ can be expressed explicitly through the Epstein Zeta function of the lattice, cf. \cite{sandierserfaty}. Moreover, using results from number theory, one finds  (cf.  \cite[Theorem 2]{sandierserfaty}) that the unique minimizer of $W$ over lattice configurations of fixed volume is the triangular lattice. This supports the conjecture that the Abrikosov triangular lattice is a global minimizer of $W$, with a slight abuse of language since $W$ is not a function of the points, but of their associated electric  field $\j$.
\item $W$ can be relaxed as a function of the points only by setting
$$\mathbb W(\nu)= \inf_{\j\in \mathcal A_m, \eqref{eqdefAmOmega}\ \text{holds}} W(\j).$$
This defines a measurable function of $\nu$, see \cite{sandierserfatygas}.
\end{itemize}
Here, in order to describe the local behavior of minimizers of $W$, we need to impose some conditions: we will consider either Dirichlet type boundary conditions, or periodic boundary conditions.  Otherwise, since  $W$ does not  feel compact perturbations of the points, it is impossible to locally characterize  minimizers of $W$ itself.
We will prove however that the questions of minimizing $W$ over $\mathcal A_m$ or minimizing $W$ over  domains with boundary or periodic conditions, become equivalent as the size of the domains tend to $+\infty$. This is part of the content of our results below.

We next define what we mean by boundary conditions.
For any $L>0$, we recall that 
$K_L(a)=a+[-L,L]^2$,
with the understanding that if the argument $a$ is absent, we take $a=0$.
In all that follows $\vnu$ denotes the outer unit normal to a set.
Let $p\in (1,2)$, $a\in\mathbb R^2$, $L>0$ and  $\varphi\in L^p(\partial K_L(a))$. Let $M>0$ and $\gamma\in\left(\frac{3-p}{2},1\right)$.
Let $\rho$ be a nonnegative $L^\infty(K_L(a))$ function.
We introduce the following hypotheses:
\begin{equation}
\label{h1boundary}
\frac{1}{2\pi}\int_{\partial K_L(a)}\varphi+\int_{K_L(a)}\rho(x)\,dx \in \NN;
\tag{$\mathrm{HB}_1$}
\end{equation}
\begin{equation}
\label{h2boundary}
\int_{\partial K_L(a)}\vb\varphi\vb^p\le ML^{2-\gamma}.
\tag{$\mathrm{HB}_2$}
\end{equation}
We also consider the sets of vector fields with normal trace $  \vp$ on $\p K_{L(a)}$:\begin{equation}\label{eqdefAmphi}
\mathcal A_{m,\varphi}(K_L(a))=\left\{\j\in \mathcal A_m(K_L(a)) \vb\, \j\cdot \vnu=\varphi \mbox{ on } \partial K_L(a)\right\},
\end{equation}
\begin{equation}\label{eqdefBphirho}
\mathcal B_{\rho,\varphi}(K_L(a))=\left\{\j\in \mathcal B_\rho(K_L(a)) \vb\, \j\cdot \vnu=\varphi \mbox{ on } \partial K_L(a)\right\}.
\end{equation}
Note that for $\mathcal A_{m, \vp} (K_L(a))$ and $\mathcal B_{\ro, \vp} (K_L(a))$ to be nonempty, we need the compatibility condition \eqref{h1boundary} to be satisfied.
\begin{rem}\label{reminterseclambda}
We have $\j\in \mathcal B_{\ro,\varphi}(K_L)$ with  $\varphi\in L^p(\partial K_L)$ for some $p\in (1,2)$ if and only if  $\Lambda\cap \partial K_L=\emptyset$. Indeed, in a neighborhood of $q\in\Lambda$, we have the decomposition $\j(x)=\nabla \log\vb x-q\vb+f(x)$ where $f$ is continuous. But one may check that if $q\in \p K_L$, then  $\nab\log \vb x-q\vb \cdot \vnu$ is not an $L^p(\p K_L)$ function (it is a distribution). 
 This proves  $\Lambda \cap \partial K_L= \emptyset$. Conversely, if $\Lambda\cap \p K_L= \emptyset$,  we  deduce that $\j$ is $C^1$ near $\p K_L$ hence the trace $\j \cdot \vnu $ makes sense in $L^p(\p K_L)$.
So we could as well add the requirement $\Lambda \cap \p K_L(a)=\emptyset$ in the definitions of $\mathcal A_{m,\vp}$ and $\mathcal B_{\ro, \vp}$.
In addition this justifies that $W(\j, \1_{K_L(a)})$ makes sense for such vector fields.

\end{rem}

\begin{defi} Let $p\in (1,2)$, $a\in\mathbb R^2$, $L>0$ and  $\varphi\in L^p(\partial K_L(a))$. Let $\rho$ be a nonnegative $L^\infty(K_L(a))$ function. Assume that  \eqref{h1boundary} is satisfied.
We define
\begin{equation}\label{eqdefsigmaphi}
\sigma_\vp (K_L(a); \ro) \ \text{resp.} \ \sigma_\varphi(K_L(a);m)=\min\limits_{\j\in \mathcal B_{\rho, \vp} (K_L(a)) \ \text{resp.} \  \mathcal A_{m,\varphi}(K_L(a))}\frac{W(\j,\1_{K_L(a)})}{\vb K_L(a)\vb},
\end{equation}
according to whether $\rho$ is equal to a constant $m$ or not.
\end{defi}

\begin{rem}\label{remexistencemin}		
	We will prove later that if $\varphi$ satisfies \eqref{h2boundary}, then the minimum of $W(\j,\1_{K_L(a)})$ over $\mathcal B_{\rho,\varphi}(K_L(a))$ is achieved for all $a\in \RR^2$ and $L>0$ fixed, i.e.  $\sigma_\varphi(K_L(a); m)$ and $\sigma_\varphi(K_L(a); \rho)$ are well defined.
\end{rem}		



Finally, we turn to the periodic setting.
When the set of points $\Lambda$ is periodic with respect to some lattice $\mathbb Z\vec{u}+\mathbb Z\vec{v}$ it can be viewed as a set of $n$ points $a_1,\ldots,a_n$ over the torus $\mathbb T_{(\vec{u},\vec v)}=\mathbb R^2/(\mathbb Z\vec{u}+\mathbb Z\vec{v})$, and we can give an explicit formula for $W$ in terms of $a_1, \dots, a_n$ (see \cite{sandierserfaty,sandierserfatygas}).

\begin{defi}\label{defAmper}
Let $m$ be a positive number, and  $(\vec{u},\vec v)$ a basis of $\mathbb R^2$. Let $\j$ be a vector field in $\mathbb T=\RR^2/(\mz  \vec{u} + \mz \vec{v}) $. We say $\j$ belongs to the admissible class $\mathcal A_{m,per}(\mathbb T)$ if
\begin{equation}\label{eqdefAmper}
\di \j=2\pi(\nu-m),\quad \curl \j=0\quad \mbox{in}\ \mathbb T\\
\end{equation}
where $\nu$ has the form
$$
\nu=\sum_{p\in\Lambda}\delta_p\quad\mbox{for some discrete set}\ \Lambda\subset \mathbb T.
$$
\end{defi}

It is clear that such vector fields $\j$ exist if and only if $\#\Lambda= m |\mathbb T| $.
 As shown in \cite{sandierserfaty}, if $\j\in \mathcal A_{m, per}(\mathbb{T})$ then $W(\j)$  (viewed as the renormalized energy of the periodic vector-field in all  of $\mr^2$) is given by
\begin{equation}\label{eqminper}
W(\j)=\frac{1}{\vb \mathbb T\vb}\lim_{\eta\to 0}\left(\frac{1}{2}\int_{\mathbb T \backslash \cup_{p\in \Lambda}B(p,\eta)}\vb \j\vb^2+\pi\#\Lambda\log\eta\right)=\frac{W(\j,\1_{\mathbb T})}{\vb \mathbb T\vb}
\end{equation}
For simplicity, we will restrict ourselves to square tori (but our approach works in general tori) and in the sequel, we denote $\mathbb T_{L}=\RR^2/(2L\mathbb Z)^2$.

\begin{defi} Let $m$ be a positive number, $L>0$ and $\mathbb T_{L}=\RR^2/(2L\mathbb Z)^2$ such that $m\vb \mathbb T_L\vb\in \mathbb N$.
We define
\begin{equation}\label{eqdefsigmaper}
\sigma_{per}(L;m)=\min\limits_{\j\in \mathcal A_{m,per}(\mathbb T_L)}\frac{W(\j,\1_{\mathbb T_L})}{\vb \mathbb T_{L}\vb}=\min\limits_{\j\in \mathcal A_{m,per}(\mathbb T_L)}W(\j).
\end{equation}
\end{defi}

\subsection{Main results on the renormalized energy}

\begin{thm}\label{thmmaincb}
 Let $p\in(1,2)$ and $m$ be a positive number. Then:
\begin{enumerate}
\item for all sequences of real numbers $L$ such that $m\vb K_L\vb \in \NN$, we have
\begin{equation}\label{eqlimsigma0cb}
\lim\limits_{L\to+\infty} \sigma_{0}(K_L;m)=\sigma_m^*;
\end{equation}
\item  given $\gamma\in\left(\frac{3-p}{2},1\right)$ and $M>0$, we have
\begin{equation}\label{eqlimsigmaphicb}
\lim\limits_{L\to+\infty} \sigma_{\varphi}(K_L;m)=\sigma_m^*.
\end{equation}
uniformly w.r.t. $\varphi$ such that \eqref{h1boundary} and \eqref{h2boundary} are satisfied in $K_L$.
\end{enumerate}

Moreover, there exists $\beta \in(0,1)$ such that the following holds: let $\j_\varphi$ be a minimizer for $\sigma_{\varphi}(K_L;m)$ and $\nu$ be associated via \eqref{eqdefAmOmega}.
Then there exists $\bar c>0$, $C>0$,   depending on $p$, $\gamma$, $m$, $M$,  such that for every $\l\ge \bar c$ and $a\in K_L$ such that if $d(K_\l(a),\partial K_L)\ge  L^\beta$, we have \begin{equation}\label{eqenergycb}
\left\vb\frac{W(\j_{\varphi},\chi_{K_{\l}(a)} )}{\vb K_{\l(a)}\vb}-\sigma_m^*\right\vb\le o(1)_{\l\to+\infty}
\end{equation}
 uniformly w.r.t. $\varphi$ such that \eqref{h1boundary} and \eqref{h2boundary} are satisfied in $K_L$, and
\begin{equation}\label{eqnumpointscb}
\left\vb   \nu(K_\ell(a))-m\vb K_\l(a)\vb \right\vb\le C \l.
\end{equation}

If in addition $\l$ and $\varphi$ are such that $\int_{\partial K_L\cap K_\l(b)}\vb \varphi\vb ^p\le M \l^{2-\gamma}$ for every $b\in \partial K_L$ then the result \eqref{eqenergycb}  holds for every $K_\l(a)\subset K_L$ (i.e holds up to the boundary).  If the assumption $\int_{\partial K_L\cap K_1(b)}\vb \varphi\vb ^p\le M $ holds   for some $M>0$  independent of $L$  and  every $b\in \partial K_L$, then both \eqref{eqenergycb} and \eqref{eqnumpointscb} hold  for every $\l \ge \bar{c}$ and every $K_\l(a)\subset K_L$.
\end{thm}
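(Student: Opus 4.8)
The plan is to prove everything by an energy–comparison argument between the constrained problems $\sigma_\varphi(K_L;m)$, $\sigma_0(K_L;m)$, $\sigma_{per}(L;m)$ and the global minimum $\sigma_m^*$, exploiting the key fact (to be established separately, cf. Remark \ref{remexistencemin}) that under \eqref{h1boundary}–\eqref{h2boundary} the constrained minima are attained, together with the screening/mass-transport type constructions from \cite{sandierserfaty} that allow one to glue together admissible vector fields across the boundary of a square at the cost of a controlled error.

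\emph{Step 1: the three asymptotic equalities \eqref{eqlimsigma0cb}–\eqref{eqlimsigmaphicb}.} First I would show $\liminf_{L\to\infty}\sigma_0(K_L;m)\ge\sigma_m^*$ (and similarly for $\sigma_\varphi$): given a minimizer $\j_L$ of the constrained problem on $K_L$, extend it by the ``constant-background'' field outside $K_L$ to produce a competitor in $\mathcal A_m(\mathbb R^2)$ (possible for $\sigma_0$ since the normal trace vanishes; for $\sigma_\varphi$ one uses \eqref{h2boundary} to absorb the boundary layer, whose $W$-contribution is $O(L^{2-\gamma})=o(L^2)$ because $\gamma>(3-p)/2$ forces the relevant Hölder exponents to close). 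Dividing by $|K_L|\to\infty$ and using that $W$ is a $\limsup$ of box averages gives the lower bound. For the matching upper bound, take a near-minimizer of $W$ over $\mathcal A_m$, restrict it to $K_L$ and correct its normal trace on $\partial K_L$ to the prescribed $\varphi$ (or to $0$) by a divergence-free correction supported in a unit-width boundary collar, again with energy cost $o(L^2)$; this produces an admissible competitor for $\sigma_\varphi(K_L;m)$ showing $\limsup\le\sigma_m^*$. The uniformity in $\varphi$ is automatic since all estimates only use the bound $M$ in \eqref{h2boundary}. One also gets $\sigma_{per}(L;m)\to\sigma_m^*$ the same way (periodic gluing is even easier).

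\emph{Step 2: local rigidity of the energy, \eqref{eqenergycb}.} Fix $\j_\varphi$ a minimizer on $K_L$ and a sub-square $K_\ell(a)$ with $d(K_\ell(a),\partial K_L)\ge L^\beta$. The lower bound $W(\j_\varphi,\chi_{K_\ell(a)})/|K_\ell(a)|\ge\sigma_m^*-o(1)_{\ell\to\infty}$ follows because the restriction of $\j_\varphi$ to $K_\ell(a)$, completed to a global field by the background outside, is a competitor in $\mathcal A_m$. For the reverse inequality — this is the crux — suppose the local energy on some $K_\ell(a)$ exceeded $\sigma_m^*+\delta$ for a fixed $\delta>0$. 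Then I would \emph{surgically replace} $\j_\varphi$ inside $K_\ell(a)$ by a near-optimal field for $\sigma_{\varphi'}(K_\ell(a);m)$ having the \emph{same} normal trace $\varphi':=\j_\varphi\cdot\vnu$ on $\partial K_\ell(a)$, which by Step 1 applied at scale $\ell$ (using Remark \ref{reminterseclambda} to know $\varphi'\in L^p$, and that $\varphi'$ inherits an $L^p$-bound of the form \eqref{h2boundary} on $\partial K_\ell(a)$ for \emph{most} sub-squares, by an averaging/pigeonhole argument over a grid of sub-squares together with the global energy bound $W(\j_\varphi,\1_{K_L})\le\sigma_m^*|K_L|+o(L^2)$) has energy $\le(\sigma_m^*+o(1)_{\ell})|K_\ell(a)|$. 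This strictly lowers $W(\j_\varphi,\1_{K_L})$ once $\ell$ is large enough and $\ell\ll L^\beta$, contradicting minimality — provided the ``bad'' sub-squares with large boundary flux can be handled, which is where the margin $d(\cdot,\partial K_L)\ge L^\beta$ and the freedom in $\beta$ are spent (the boundary collar of width $\sim L^\beta$ carries at most a fraction of the total flux). The uniform $o(1)_{\ell\to\infty}$ rate then comes from Step 1's uniform convergence in $\varphi$.

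\emph{Step 3: the point count \eqref{eqnumpointscb}.} Writing $\j_\varphi=-\nabla H$ with $-\Delta H=2\pi(\nu-m)$, integrate over $K_\ell(a)$:
\begin{equation}
\nu(K_\ell(a))-m|K_\ell(a)|=\frac{1}{2\pi}\int_{\partial K_\ell(a)}\j_\varphi\cdot\vnu.
\end{equation}
So it suffices to bound $\int_{\partial K_\ell(a)}|\j_\varphi\cdot\vnu|$ by $C\ell$. By the mean-value/coarea argument one is free to slide each of the four sides of $K_\ell(a)$ within a unit band and pick a good radius; on such a good slice, $\int |\j_\varphi|^2$ over the (slightly enlarged) annular region is controlled by the local energy $W(\j_\varphi,\chi_{K_{\ell+1}(a)})\le(\sigma_m^*+o(1))|K_{\ell+1}(a)|=O(\ell^2)$ from Step 2, plus the $\pi\log\eta\sum\chi(p)$ renormalization, which is $O(\ell^2)$ since the number of nearby points is $O(\ell^2)$ (a consequence of the same energy bound via a Besicovitch-type covering / lower bound on $W$ for clustered points). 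Then Cauchy–Schwarz on a one-dimensional slice of length $O(\ell)$ gives $\int_{\text{slice}}|\j_\varphi\cdot\vnu|\le (O(\ell))^{1/2}(O(\ell^2))^{1/2}$, which is $O(\ell^{3/2})$ — not quite enough, so one refines: split the slice into unit segments, on each use that $\int_{\text{seg}}|\j_\varphi|^2=O(1)$ on average (energy density bounded per unit square, again by Step 2 at scale $\ell=O(1)$, i.e. $\ell\ge\bar c$), hence $\int_{\text{unit seg}}|\j_\varphi\cdot\vnu|=O(1)$ and summing over $O(\ell)$ segments gives the desired $O(\ell)$. This is exactly why the constant $\bar c$ and the lower bound $\ell\ge\bar c$ appear: one needs the per-unit-square energy bound, i.e. \eqref{eqenergycb} already known at the smallest admissible scale.

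\emph{Step 4: up-to-the-boundary statements.} If $\varphi$ itself satisfies $\int_{\partial K_L\cap K_\ell(b)}|\varphi|^p\le M\ell^{2-\gamma}$ for all $b$, then the ``bad sub-square'' issue in Step 2 disappears for sub-squares touching $\partial K_L$ too, since the inherited boundary data already satisfies the right scaled bound; one runs the same surgery without needing the collar, giving \eqref{eqenergycb} for every $K_\ell(a)\subset K_L$. If moreover the unit-scale bound $\int_{\partial K_L\cap K_1(b)}|\varphi|^p\le M$ holds uniformly in $L$, then the per-unit-square energy estimate of Step 3 also extends to squares meeting the boundary, yielding \eqref{eqnumpointscb} everywhere.

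\emph{Main obstacle.} The delicate point is Step 2's upper bound: controlling the measure of sub-squares on which the induced normal trace $\varphi'=\j_\varphi\cdot\vnu$ fails the scaled $L^p$-bound needed to invoke Step 1 at scale $\ell$, and ensuring the surgery's boundary-matching correction has negligible cost. This is the step that forces the restriction $d(K_\ell(a),\partial K_L)\ge L^\beta$ (rather than any fixed distance) and ties the admissible $\beta\in(0,1)$ to the Hölder exponent constraints coming from $p\in(1,2)$ and $\gamma\in((3-p)/2,1)$; pushing $\beta\to 0$ would require a more careful layer-cake decomposition of the flux, which we do not pursue here.
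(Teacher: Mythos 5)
Your high-level strategy matches the paper's: establish the asymptotic equality of the constrained minima with $\sigma_m^*$ (Step~1), then use the crucial observation that a minimizer on $K_L$ is a minimizer on any sub-square with respect to its own boundary trace (Step~2, which is the paper's Lemma~\ref{propestimationgb}), and finally count points via the divergence theorem (Step~3). However, there is a genuine gap in Step~2 that the paper's proof is specifically designed to overcome, and it is not cosmetic.

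The gap is the absence of the \emph{bootstrap over scales}. Your pigeonhole argument starts from the global budget $W(\j_\varphi,\1_{K_L})\lesssim L^2$ and selects, via a mean-value argument, sub-squares of size $\ell$ with a ``good'' $L^p$ boundary trace. But the $L^p$ norm of $\j_\varphi$ over an annulus inside $K_\ell(a)$ is controlled (Lemma~\ref{lemnormp}) by a power of the \emph{global} energy, i.e.\ something like $L^{2/p}\log^{1/2}L$, and after slicing this gives $\int_{\partial K_t(a)}|\j_\varphi|^p\lesssim L^{2}/\ell^{\gamma}$, not the scaled bound $M\ell^{2-\gamma}$ that is required to invoke the Step~1 convergence at scale~$\ell$. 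This closes only when $\ell$ is comparable to a fixed power $L^{1/\delta}$ of $L$ (this is exactly the hypothesis $\ell\ge L^{1/\delta}$ in Lemma~\ref{lemgoodboundary}). To reach small $\ell$, and in particular $\ell=\bar c$, one must first prove the local energy bound at scale $L^{1/\delta}$, then restart the pigeonhole \emph{from that smaller box with its freshly-proved energy bound}, go down to $L^{1/\delta^2}$, and iterate roughly $\log\log L/\log\delta$ times. Each iteration costs $O(L^{\gamma/\delta})$ in distance to $\partial K_L$, and that cumulative loss, not ``the collar carrying a fraction of the flux'', is what forces the margin $d(K_\ell(a),\partial K_L)\ge L^\beta$ with $\beta>\gamma/\delta$. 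Without the iteration your argument only reaches scales $\ell\gtrsim L^{1/\delta}$, and only for ``most'' centers $a$, which is strictly weaker than \eqref{eqenergycb}. Also, for the lower bound in Step~2 you assert that restricting $\j_\varphi$ to $K_\ell(a)$ and ``completing by the background outside'' yields an $\mathcal A_m$-competitor; this does not hold as stated since the restriction has a nonzero normal trace, and one must periodize/screen. The paper bypasses both directions at once by noting $W(\j_\varphi,\1_{K_t(a)})/|K_t(a)|=\sigma_\psi(K_t(a);m)$ exactly, where $\psi$ is $\j_\varphi$'s own trace, and then applying the uniform convergence $\sigma_\psi\to\sigma_m^*$.

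A secondary but real issue is in Step~3: you bound $\int_{\mathrm{seg}}|\j_\varphi\cdot\vnu|$ by Cauchy--Schwarz using ``$\int_{\mathrm{seg}}|\j_\varphi|^2=O(1)$ per unit square,'' but $\j_\varphi$ has $1/|x-p|$ singularities at the points, so $\int|\j_\varphi|^2$ on a 2D unit square is infinite and a mean-value selection of a good 1D slice cannot be made from an $L^2$ bound. The correct quantity is the $L^1$ (or $L^p$, $p<2$) norm, which \emph{is} controlled by the local renormalized energy plus the local point count (Lemma~\ref{lemnormp}); the paper then writes $\left|\int\chi(\nu-m)\right|\le\frac{1}{2\pi}\int|\nabla\chi||\j_\varphi|$ with $|\nabla\chi|$ supported in a unit collar, avoiding the $L^2$ issue entirely.
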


\begin{thm}\label{thmmainper}
Let $m$ be a positive number and $\mathbb T_{L}=\RR^2/(2L\mathbb Z)^2$. Then for all sequences of real numbers $L$ such that $m\vb \mathbb T_L\vb \in \NN$, we have
\begin{equation}\label{eqlimsigmaper}
\lim\limits_{L\to+\infty} \sigma_{per}(L;m)=\sigma_m^*.
\end{equation}
Moreover, let
 $\j$ be a minimizer for $\sigma_{per}(L;m)$ and $\nu$ be associated via \eqref{eqdefAmOmega}.
Then, there exists $\bar c>0$, $C>0$, depending only on $m$, such that for every $\l\ge \bar c$ and $a\in \TT_L$, we have
\begin{equation}\label{eqenergyper}
\left\vb\frac{W(\j,\chi_{K_{\l}(a)} )}{\vb K_{\l(a)}\vb}-\sigma_m^*\right\vb\le o(1)_{\l\to+\infty},
\end{equation}
\begin{equation}\label{eqnumpointsper}
\left\vb\nu(K_\l(a)) -m\vb K_\l(a)\vb \right\vb    \le C\l .
\end{equation}
\end{thm}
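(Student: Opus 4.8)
\emph{Lower bound and the limit \eqref{eqlimsigmaper}.} The inequality $\sigma_{per}(L;m)\ge\sigma_m^*$ holds for every admissible $L$: if $\j\in\mathcal A_{m,per}(\TT_L)$, its periodic extension to $\RR^2$ lies in $\mathcal A_m$, so $W(\j)\ge\sigma_m^*$, while $W(\j)=W(\j,\1_{\TT_L})/|\TT_L|$ by \eqref{eqminper}; taking the infimum gives the claim. Hence \eqref{eqlimsigmaper} reduces to $\limsup_{L\to\infty}\sigma_{per}(L;m)\le\sigma_m^*$, which I would obtain exactly as the upper bound in Theorem \ref{thmmaincb}: start from a near-minimizer $\j_*$ of $W$ over $\mathcal A_m$, so $W(\j_*,\chi_{K_L})/|K_L|\le\sigma_m^*+o(1)$, and modify it inside a collar $K_L\setminus K_{L-\tau_L}$ with $1\ll\tau_L=o(L)$ by the Sandier--Serfaty screening construction, so that the new field $\tilde\j$ is curl-free on $K_L$, has divergence $2\pi(\tilde\nu-m)$ with $\tilde\nu$ a sum of exactly $m|K_L|$ Dirac masses (using $m|K_L|=m|\TT_L|\in\NN$), and \emph{vanishes on $\partial K_L$}; the extra cost is $O(\tau_L L)=o(L^2)$. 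The $2L\mathbb Z^2$-periodization of $\tilde\j$ then lies in $\mathcal A_{m,per}(\TT_L)$ — vanishing of the full boundary trace makes it divergence- and curl-compatible across cell walls — with energy per unit volume $\le\sigma_m^*+o(1)$.

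\emph{Reduction of the local statements to Theorem \ref{thmmaincb}.} Let $\j$ minimize $\sigma_{per}(L;m)$ (existence as in Remark \ref{remexistencemin}). From the bound just proved, $W(\j,\1_{\TT_L})=(\sigma_m^*+o(1))|\TT_L|$, and $\#\Lambda=m|\TT_L|$; the standard lower bounds for $W$ (controlling $\int|\j|^2$ off fixed-size balls around $\Lambda$ by the renormalized energy plus the point count) then give $\int_{K_{2L}}|\j|^2\lesssim L^2$ off those balls, hence $\int_{K_{2L}}|\j|^p\lesssim L^2$ for $p\in(1,2)$. A mean-value argument in the translation parameter produces a fundamental domain $K_L(a_0)$ of $\TT_L$ with $\partial K_L(a_0)\cap\Lambda=\emptyset$ and $\int_{\partial K_L(a_0)}|\j\cdot\vnu|^p\lesssim L$, so that $\varphi_0:=\j\cdot\vnu$ on $\partial K_L(a_0)$ satisfies \eqref{h2boundary} (with $\gamma$ as close to $1$ as one wishes and $M=M(m)$) and \eqref{h1boundary} automatically, since $\frac1{2\pi}\int_{\partial K_L(a_0)}\varphi_0+m|K_L(a_0)|=\nu(K_L(a_0))=m|\TT_L|\in\NN$ because $K_L(a_0)$ is a fundamental domain. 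I would then check that $\j|_{K_L(a_0)}$ minimizes $\sigma_{\varphi_0}(K_L(a_0);m)$ up to an $o(1)$ slack in energy per unit volume: a strictly better competitor $\j'$ can, via a screening collar inside $K_L(a_0)$ matching the \emph{full} trace of $\j$ on $\partial K_L(a_0)$, be turned into $\hat\j$ agreeing with $\j'$ off the collar, carrying the prescribed number of points, and whose $2L\mathbb Z^2$-periodization lies in $\mathcal A_{m,per}(\TT_L)$ with torus energy $W(\j',\1_{K_L(a_0)})+o(L^2)$, which for $L$ large contradicts minimality of $\j$. Theorem \ref{thmmaincb} — or, strictly, its proof, which is insensitive to an $o(1)$-per-unit-volume slack in minimality — applied to $\j|_{K_L(a_0)}$ then yields \eqref{eqenergycb}--\eqref{eqnumpointscb} for all $K_\ell(a)$ with $d(K_\ell(a),\partial K_L(a_0))\ge L^\beta$. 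On the torus, for any target $K_\ell(a)$ one may choose $a_0$ near the antipode of $a$, within a ball of radius $\sim L$, while keeping the trace estimate and avoiding $\Lambda$; this places $K_\ell(a)$ at distance $\gg L^\beta$ from the cut once $L$ is large compared with $\ell$, and hence gives \eqref{eqenergyper}--\eqref{eqnumpointsper} for every $\ell\ge\bar c$ and $a\in\TT_L$ with $\bar c,C$ depending only on $m$, the $o(1)_{\ell\to\infty}$ being uniform in $a$ by homogeneity of the torus (the range $\ell$ comparable to $L$ being covered directly by $W(\j,\chi_{K_\ell(a)})/|K_\ell(a)|\to\sigma_{per}(L;m)\to\sigma_m^*$).

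\emph{Main obstacle.} As in Theorem \ref{thmmaincb}, the work is concentrated in the screening construction and the a priori estimates it feeds on, and two points demand the most care. First, admissibility on the torus: matching only the normal trace $\j\cdot\vnu$ across cell walls does not suffice, since a tangential mismatch produces a distributional curl, so the collar must be engineered to match the \emph{full} trace (normal and tangential), which the screening can deliver at a cost proportional to the collar volume plus $\int_\partial|\j|^2$. Second, the bootstrap showing that a minimizer has locally bounded energy, $W(\j,\chi_{K_\ell(a)})\lesssim\ell^2$ uniformly in $a$ and $L$: this is needed to control the trace terms and to run the mean-value selections, and it closes a potential circularity (local energy controls the trace cost, which controls the replaced local energy) by iterating the replacement inequality over dyadic scales. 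The remaining ingredients — the quantitative lower bounds relating the energy excess on a square to $|\nu(K_\ell(a))-m|K_\ell(a)||$ (yielding the optimal $O(\ell)$ rate in \eqref{eqnumpointsper}), the additivity of $W(\cdot,\1_U)$ over complementary sets, and the convergences $\sigma_0(K_\ell;m),\sigma_{\varphi_0}(K_\ell;m)\to\sigma_m^*$ — are inherited from what is established for Theorem \ref{thmmaincb}.
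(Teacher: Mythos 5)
Your overall strategy — lower bound is immediate, upper bound comes from a screening/periodization construction, and the local equidistribution reduces to the bootstrap of Theorem \ref{thmmaincb} applied to a well-chosen square — is in the same spirit as the paper, which gets \eqref{eqlimsigmaper} from Proposition \ref{proplimmincon} (the upper bound being cited from \cite[Corollary 4.4]{sandierserfaty}) and then states a periodic analogue of Lemma \ref{propestimationgb} (Lemma \ref{propestimationgbper}) and runs the same bootstrap. But there is a genuine gap in your comparison argument that the paper's framework is specifically designed to avoid.

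You claim that to glue a better competitor $\j'$ on a fundamental domain $K_L(a_0)$ to the minimizer $\j$ elsewhere and recover a torus-admissible field, one must "match the full trace (normal and tangential)" at the cut, and that a screening collar can deliver this at cost $o(L^2)$. Neither claim is right: prescribing the full trace of a divergence-form field on a boundary is over-determined, and the screening constructions (Proposition \ref{propj+j-} and its ancestors) only ever prescribe normal traces. The actual fix in the paper is different: pasting fields with matching normal traces produces no singular divergence but may create a distributional $\curl$ concentrated on the cut, and this is removed by the projection of Lemma \ref{lemdiv0} (or its torus analogue: solve $-\Delta\zeta = \curl\hat\j$ on $\TT_L$ and replace $\hat\j$ by $\hat\j + \nabla^\perp\zeta$), which strictly decreases the $L^2$ energy. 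As a consequence, a minimizer $\bar\j$ for $\sigma_{per}(L;m)$ is an \emph{exact} minimizer of $W(\cdot,\1_{K_\l(a)})$ on every subsquare $K_\l(a)\subset\TT_L$ with respect to its own normal trace — with no $o(1)$ slack and no auxiliary collar. This is exactly the content of Lemma \ref{propestimationgbper}; your detour through Theorem \ref{thmmaincb} on a carefully chosen fundamental domain, together with the extra verification that the bootstrap tolerates an $o(1)$ slack in minimality, is both more complicated than necessary and, as formulated, rests on a construction (exact full-trace matching by screening) that cannot be carried out. The same remark applies to your upper bound for \eqref{eqlimsigmaper}: screening makes $\j\cdot\vnu=0$ on $\partial K_L$, not $\j=0$, so raw periodization still has a curl on the grid; this is harmless only because one then projects it away, which is what underlies \cite[Corollary 4.4]{sandierserfaty}.
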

As announced, these results show that the minima of $W$ (as in Definition \ref{defrenenergy}), and of $W$ with periodic or fixed boundary conditions are asymptotically the same when the size of the boxes becomes large, and that minimizers have equidistributed energy and points, except possibly in a layer near the boundary in the case of Theorem~\ref{thmmaincb} --- this is however unavoidable, as it takes a certain distance for a configuration to ``absorb" or ``screen" the effect of a large or highly oscillating boundary data. These are analogues of the results proven in \cite{aco} for their energy.
Both results giving an error $o(\l^2)$ for the energy and $O(\l)$ for the number of points seem optimal: for the energy since  $W$ is itself defined as a limit over large size domains, and for the number of points, since a lattice configuration would give rise to the same error (note that  we are counting the error made on squares and not on balls, the latter one would be smaller -- more precisely in $R^{2/3}$ --  for a lattice). 
Recall finally that we cannot characterize $W(E_\vp, \1_{K_\l(a)})$ because of the possibility of points being on $\p K_\l(a)$ which is why we resort to $W(E_\vp, \chi_{K_\l(a)})$ instead. 

Note that the results \eqref{eqnumpointscb} and \eqref{eqnumpointsper} imply that away from the boundary
the distance from each point to its nearest neighbor is bounded above by some constant depending only on  $m$ (in fact scaling the problem to reduce to  $m=1$, one can see that it is $\frac{C}{\sqrt{m}}$, for some universal $C$). The opposite inequality, i.e. that the distance from each point to its nearest neighbor is bounded below by some $\frac{c}{\sqrt{m}}$ for $c>0$ universal can be obtained (at least in the periodic case, or away from the boundary) by the same argument we will give below for the Coulomb gas, due to E. Lieb \cite{lieb}.

\subsection{Main results on the Coulomb gas} \label{sec1.3}
We next turn to the Coulomb gas, more precisely to the analysis of minimizers of \eqref{wn}.

 We mentioned that \eqref{eqcvminim} holds, where $\mu_0$ is the unique minimizer over probability measures of the ``mean-field energy"
 \begin{equation}
 I(\mu)= - \iint_{\mr^2 \times \mr^2} \log |x-y| \, d\mu(x)\, d\mu(y) + \int_{\mr^2} V(x) \, d\mu(x)\end{equation}
 called the ``equilibrium measure" (see \cite[Chap. 1]{safftotik} and references therein).  We denote $\E = \supp (\mu_0)$.
 We use here the same assumptions as in \cite{sandierserfatygas} i.e.
 \begin{equation}\label{assumpV1}  \lim_{|x|\to +\infty} \frac{V(x)}{2} - \log|x| = +\infty, \end{equation}
\begin{equation}\label{assumpV2} \text{ $V$ is $\mathcal C^3$ and there exists $\bw,\wb>0$  s.t. $\bw\le \frac{\Delta V}{4\pi} \le \wb$,}\end{equation}
\begin{equation}\label{assumpV3} \text{$V$ is such that $\p \E$ is $\mathcal C^{1}$.}    \end{equation}
The assumption \eqref{assumpV1} ensures in particular that $I$ has a minimizer, which has compact support $\E$.  Also $\mu_0$ has a density $m_0$: $d\mu_0(x)=m_0(x)dx$ and $m_0=\frac{\Delta V}{4}\indic_{\E}.$ Hence, in view of \eqref{assumpV2} we have
\begin{equation}\label{minom}
0<\bw \le m_0 \le \wb,
\end{equation}
and
\begin{equation}
|\nab m_0|\le C.
\end{equation}

Because there are $n$ points in $\E$, a set of  fixed size, it is natural to blow-up everything to the scale $\sqrt{n}$, in order to obtain points that are separated by order $1$ distances. We will thus denote $x_i'= \sqrt{n} x_i$, but also $\E'=\sqrt{n} \E$ and $m_0'(x')= m_0(x'/\sqrt{n})$ the blown-up density of $\mu_0$, and $d\mu_0'= m_0' (x') dx'$. We will also write 
\begin{equation}\label{defnu'}
\nu_n'= \sum_{i=1}^n \delta_{x_i'}.\end{equation}

In \cite{sandierserfatygas} we studied minimizers of $w_n$ via $\Gamma$-convergence. More precisely we showed that $\frac{1}{n} ( w_n - n^2 I(\mu_0)+ \frac{n}{2}\log n)
$ $\Gamma$-converges to some suitable average of $W$ (as in Definition \ref{defrenenergy}), computed over blow-ups of configurations of points.
The starting point, that we will need here, is the exact ``splitting formula" of  \cite[Lemma 2.1]{sandierserfatygas}: for any $n$ and any  $x_1, \dots, x_n \in \mr^2$
\begin{equation}\label{splitting}
w_n(x_1, \dots, x_n) = n^2 I(\mu_0)- \frac{n}{2}\log n + 2n\sum_{i=1}^n \zeta(x_i)+
\frac{1}{\pi} W(E_n, \indic_{\mr^2}).\end{equation}
Here $\zeta$ is a fixed function (depending only on $V$),  given more precisely by $- \log * \mu_0 + \hal V - cst$ (see \cite{sandierserfatygas} for details), which satisfies 
$$\zeta \ge 0 \qquad \{\zeta = 0\} = \E.$$
  The vector field  $E_n$ is defined as the ``electric field" generated by the blown-up points and their background:
\begin{equation}\label{En}
E_n= 2\pi \nabla \Delta^{-1} \( \sum_{i=1}^n \delta_{x_i'}- m_0'(x') \)\end{equation}
where $2\pi \Delta^{-1}$ is the operator of convolution by $ -\log $, and the quantity $W(E_n, \indic_{\mr^2})$ is defined according to \eqref{defWchi}.
We then note that, if all the points are in $\E$ then $\sum_i \zeta(x_i)=0$, and so   being a minimizer of $w_n$ is equivalent to being a minimizer of $W(E_n,\indic_{\mr^2})$, in view of \eqref{splitting}. We will first show that it is the case for minimizers of $w_n$, and thus reduce ourselves to studying  minimizers of $W(E_n,\indic_{\mr^2})$. This is how the  analysis used for proving Theorems \ref{thmmaincb} and \ref{thmmainper} is then exactly suited. The main difference here is that we have to deal with a varying background (this works fine in the end because the background $m_0'$ is very slowly varying as $n\to \infty$).

After the splitting of \eqref{splitting}, the approach in \cite{sandierserfatygas} consisted in obtaining a general lower bound for $\frac{1}{n\pi} W(E_n, \indic_{\mr^2})$ in the limit $n\to \infty$, as well as a matching upper bound.   The conclusions were appropriate to ``almost minimizers", i.e. configurations whose energy is asymptotically the minimal energy. As a result they were weaker and they were averaged: they said that almost minimizers are such that blown-up configurations tend to minimize $W$, for ``almost every blow-up center in $\E$". It is normal that almost minimizers should admit the possibility of the energy being larger than expected on a set of asymptotically vanishing volume fraction. 

Here we work rather on true  minimizers, and obtain stronger results. This will use the local  minimality  of a minimizer (with respect to its own boundary condition) as well  the other  specific property of minimizers that all their points are in $\E$.
\begin{thm}\label{th3} Let $(x_1,\dots,x_n)$ be a minimizer of $w_n$. Let  $m_0$, $m'_0$ and $\j_n$ be as above.
The following holds:
\begin{enumerate}
	\item\label{itempointth3} for all $i\in [1,n]$, $x_i\in \Sigma$;
	\item\label{itemequidistth3} there exist $\beta \in (0,1)$, $\bar c>0$, $C>0$ (depending only on $\|m_0\|_{L^\infty} $),  such that for every $\l\ge \bar c$ and $a\in \E'$ such that $d (K_\l(a), \p \E') \ge  n^{\beta/2}$, we have
	\begin{equation}\label{137}
	\limsup_{n\to \infty}\frac{1}{\l^2}\left\vb W(\j_n,\chi_{K_{\l}(a)} )-\int_{K_\l(a)}\Big(\min_{\mathcal A_{m'_0(x)}}W\Big)\,dx\right\vb\le o(1)_{\l\to+\infty}.
	\end{equation}
	and
	\begin{equation}\label{eqnumpointsgas}
\limsup_{n\to \infty}	\left\vb \nu_n'(K_\l(a))-\int_{K_{\l}(a)}m'_0(x)\,dx\right\vb\le C\l,
	\end{equation}where $\nu_n'$ is defined in \eqref{defnu'}.
\end{enumerate}
\end{thm}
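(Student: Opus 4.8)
The plan is to deduce Theorem \ref{th3} from Theorem \ref{thmmaincb} (in its $\mathcal B_\rho$-variant) by first establishing item (1) and then reducing, via the splitting formula \eqref{splitting}, to the local study of a minimizer of $W(E_n,\indic_{\mr^2})$ with a slowly varying background.

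\textbf{Step 1: All points lie in $\E$.} I would argue by contradiction: suppose some $x_i\notin\E$, so $\zeta(x_i)>0$. The idea is to produce a competitor with strictly smaller $w_n$ by moving the offending point(s) into $\E$ (or onto $\p\E$). Using the splitting formula, $w_n = n^2 I(\mu_0)-\tfrac n2\log n + 2n\sum_i\zeta(x_i)+\tfrac1\pi W(E_n,\indic_{\mr^2})$, so the cost of a point outside $\E$ contains the term $2n\zeta(x_i)$, which is of order $n$ per unit of $\zeta$. One must check that relocating a point changes $W(E_n,\indic_{\mr^2})$ by at most $O(n)$ (indeed moving one point a bounded distance perturbs the renormalized energy in a controlled way, since each point contributes $O(\log n)$ to $W$ and the pairwise interactions are $O(n)$ in total), while the gain in $2n\sum_i\zeta$ is strictly positive and of the same order; a more careful accounting, pushing the point toward $\E$ along $-\nabla\zeta$, yields a net decrease. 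This is the kind of argument already present in \cite{sandierserfatygas}; alternatively one can invoke the maximum-principle/obstacle-problem characterization of $\mu_0$ directly. Once (1) holds, $\sum_i\zeta(x_i)=0$, so minimizing $w_n$ is \emph{equivalent} to minimizing $W(E_n,\indic_{\mr^2})$, and $E_n\in\mathcal B_{m_0'}(\mr^2)$.

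\textbf{Step 2: Localization to a cube and application of the boundary-value theorem.} Fix $a\in\E'$ and $\l\ge\bar c$ with $d(K_\l(a),\p\E')\ge n^{\beta/2}$. On an intermediate scale $L$ with $\l\ll L\ll n^{\beta/2}$ (say $L=n^{\beta/2}/2$, or $L$ a large fixed multiple of $\l$ — one optimizes at the end), I would restrict $E_n$ to $K_L(a)$: by Remark \ref{reminterseclambda} one may choose $L$ so that $\Lambda\cap\p K_L(a)=\emptyset$, and then $\varphi:=E_n\cdot\vnu$ on $\p K_L(a)$ makes sense in $L^p$. The crucial point is that $E_n$, \emph{restricted} to $K_L(a)$, is a minimizer of $W(\cdot,\indic_{K_L(a)})$ in the class $\mathcal B_{m_0',\varphi}(K_L(a))$: any competitor on $K_L(a)$ with the same normal trace, glued to $E_n$ outside, is admissible for the global problem $W(E_n,\indic_{\mr^2})$, whose minimality we have from Step 1. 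Since $m_0'$ is a \emph{slowly varying} background ($|\nabla m_0'|\le Cn^{-1/2}\to0$ and $m_0'$ stays in $[\bw,\wb]$), on the cube $K_L(a)$ it is uniformly close to the constant $m:=m_0'(a)$; the $\mathcal B_\rho$ version of Theorem \ref{thmmaincb} (the paper states its results are proven in the varying-background setting precisely for this application) then gives that $W(E_n,\chi_{K_\l(a)})/|K_\l(a)|$ is within $o(1)_{\l\to\infty}$ of $\sigma_{m_0'(a)}^*$, and more generally, after a covering/averaging argument over $a$, within $o(1)$ of $\fint_{K_\l(a)}\min_{\mathcal A_{m_0'(x)}}W\,dx$, which is \eqref{137}; likewise \eqref{eqnumpointscb} gives \eqref{eqnumpointsgas}. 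One must verify the hypothesis \eqref{h2boundary}, i.e. a bound $\int_{\p K_L(a)}|\varphi|^p\le ML^{2-\gamma}$ on the induced boundary data — this holds for \emph{most} choices of $L$ in a dyadic range by a mean-value/pigeonhole argument (averaging $\int|E_n|^p$ over a family of nested cubes and using that $E_n\in L^p_{loc}$ with the relevant integral growing like the area), which is exactly why one has the freedom to pick a good intermediate $L$.

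\textbf{Main obstacle.} The delicate point is Step 1 combined with the energy-perturbation estimate needed there and the uniformity of constants in $n$: one needs that $E_n\in L^p_{loc}$ with $p$-norms over cubes of side $L$ bounded \emph{polynomially} in $L$ with $n$-independent constants (so that \eqref{h2boundary} can be met on a good slice, and so that the $o(1)_{\l\to\infty}$ in Theorem \ref{thmmaincb} is genuinely uniform in $n$), together with control of the boundary layer of width $n^{\beta/2}$ being negligible at the macroscopic scale. The varying background requires re-examining the proof of Theorem \ref{thmmaincb}: one must check that replacing the constant $m$ by $m_0'(x)$ with $\|\nabla m_0'\|_\infty\to0$ only perturbs all the screening/comparison constructions by $o(1)$, using $|m_0'(x)-m_0'(a)|\le C L n^{-1/2}$ on $K_L(a)$ and absorbing the resulting error terms. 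Once this "frozen coefficient" reduction is justified, the rest is a diagonal extraction letting $n\to\infty$ first (so $m_0'\to$ constant on any fixed cube) and then $\l\to\infty$.
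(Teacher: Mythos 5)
Your Step 2 is broadly the paper's route: use the splitting formula \eqref{splitting} to reduce to local minimality of $E_n$, observe that the restriction of $E_n$ to a subdomain minimizes $W$ for its own boundary data (subject to keeping competitors' points in $\E'$ --- a caveat you should state, since it enters the gluing argument), and then run the varying-background version of Theorem~\ref{thmmaincb} (Corollary~\ref{proplimminnc}) using $\|\nabla m_0'\|_\infty = O(n^{-1/2})$. The ``covering/averaging over $a$'' you propose is in fact not needed: because $m_0'$ varies on scale $\sqrt n$, $\min_{\mathcal A_{m_0'(a)}}W$ is already within $o_n(1)$ of $\fint_{K_\l(a)}\min_{\mathcal A_{m_0'(x)}}W\,dx$ by continuity of $m\mapsto\min_{\mathcal A_m}W$ (see \eqref{eqscalingminW}); the work is instead all in the bootstrap and in Section~\ref{sectionnonconstant}, which you correctly gesture at.

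Step 1 as stated, however, has a genuine gap. Your comparison is: moving one point by a bounded amount changes $W(E_n,\indic_{\mr^2})$ by $O(n)$, while $2n\sum_i\zeta(x_i)$ gains by $2n\zeta(x_i)>0$, ``of the same order.'' This cannot conclude, because $\zeta(x_i)$ is not bounded below: a point at distance $\epsilon$ from $\p\E$ has $\zeta(x_i)\sim\epsilon^2$, so the gain is only $n\epsilon^2$, which is swamped by an $O(n)$ change in $W$. There is no way to run an order-of-magnitude argument here; one needs an exact cancellation. The paper's Lemma~\ref{lempropminpoints} provides it: moving $x_1'$ to $y'$ changes $W(E_n,\indic_{\mr^2})$ by \emph{exactly} $U(y')-U(x_1')$, where $U=H_n+\log|x-x_1'|$ is the potential with the self-interaction at $x_1'$ removed, and minimality of $w_n$ gives $U(x_1')+2n\zeta(x_1)\le U(y')+2n\zeta(y)$ for all $y$. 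Then one uses that $-\Delta U = 2\pi(\sum_{i\ge 2}\delta_{x_i'} - m_0')\ge 0$ outside $\E'$ (since $m_0'=0$ there) and $U\to+\infty$ at infinity, so $U$ attains its minimum over $\overline{\mr^2\setminus\E'}$ on $\p\E'$ by the minimum principle. Taking $y'=\bar y'\in\p\E'$ that minimum point, $\zeta(\bar y)=0\le\zeta(x_1)$ and $U(\bar y')< U(x_1')$ if $x_1'\notin\E'$, contradicting the minimality relation. Your parenthetical remark about a ``maximum-principle/obstacle-problem'' alternative is pointing in the right direction, but it is this superharmonicity-of-$U$ argument --- not a perturbative estimate on $W$ --- that actually closes the proof.
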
As already  mentioned,  \eqref{eqnumpointsgas} should be optimal and improves the result of \cite{aoc} but with a slightly stronger restriction on the distance to $\p \E$; while \eqref{137} says more, since it  says that minimizers have to  behave  at the microscopic scale  like minimizers of $W$ in the appropriate class $\mathcal A_{m_0'}$, and so can  be expected to look like Abrikosov triangular lattices (this is only a  conjecture of course). The proof of \eqref{eqnumpointsgas} is in fact derived from \eqref{137}.

The following result is adapted from the unpublished  result of Lieb  \cite{lieb} in the case of a constant background.
\begin{thm}[\cite{lieb}]\label{thlieb}
Let $(x_1,\dots,x_n)$ be a minimizer of $w_n$, and $x_i'= \sqrt{n} x_i$.
Then there exists $r_0>0$ depending only on  $\|m_0\|_{L^\infty}$ (hence on $V$) such that $$\min_{i\neq j} |x_i'-x_j'| \ge r_0.$$
\end{thm}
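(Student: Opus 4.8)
The plan is to follow Lieb's variational comparison argument, adapted to the varying background $m_0$. Suppose, for contradiction, that for a sequence $n\to\infty$ the minimizer $(x_1,\dots,x_n)$ of $w_n$ has two points, say $x_1$ and $x_2$, with $|x_1'-x_2'|$ arbitrarily small, i.e.\ $|x_1-x_2|=o(n^{-1/2})$ with the implied constant going to zero. The idea is to produce a competitor by moving one of these two points (say $x_1$) to a new location $y$ chosen so that the Hamiltonian strictly decreases, contradicting minimality. The key point is that by item~\eqref{itempointth3} of Theorem~\ref{th3} all points lie in $\Sigma=\supp\mu_0$, and since $\mu_0$ has density bounded above by $\|m_0\|_{L^\infty}$, a ball of radius $\sim r/\sqrt{n}$ around $x_2$ contains $O(r^2)$ of the $x_i$; hence for $r$ a fixed large constant there is ``room'' at distance $\sim r/\sqrt n$ from $x_2$ that is far (compared to $1/\sqrt n$ but still within the macroscopic support) from \emph{all} other points.

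The estimate to be made precise: write $w_n(x_1,\dots,x_n)-w_n(y,x_2,\dots,x_n)$ and isolate the terms involving the moved point. This difference equals
\begin{equation}
\label{lieb-diff}
2\sum_{j\ge 2}\log\frac{|y-x_j|}{|x_1-x_j|} - 2n\big(V(x_1)-V(y)\big).
\end{equation}
The term $j=2$ contributes $2\log\frac{|y-x_2|}{|x_1-x_2|}$, which is \emph{large and positive} (of order $|\log(|x_1-x_2|\sqrt n)|\to+\infty$) because $|x_1-x_2|\sqrt n\to 0$ while $|y-x_2|\sqrt n\sim r$. The remaining sum over $j\ge 3$ must be controlled: one moves $y$ within distance $O(1/\sqrt n)$ of $x_1$ at macroscopic scale (specifically $|y-x_1|\le (r+1)/\sqrt n$), so $\big|\log|y-x_j|-\log|x_1-x_j|\big|\le \frac{|y-x_1|}{\min(|y-x_j|,|x_1-x_j|)}$, and after summing over $j\ge 3$ this is bounded, using that the points have local density $O(n)$ (a consequence of \eqref{eqnumpointsgas}, or can be gotten independently by an a priori spacing bound), by $C\sqrt n$. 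Similarly the potential term is $O(n\cdot |x_1-y|)=O(\sqrt n)$ since $V\in\mathcal C^3$ and $|x_1-y|=O(n^{-1/2})$. Choosing $y$ requires that such a point exist at distance $\in[r/(2\sqrt n),(r+1)/\sqrt n]$ of $x_2$ and at distance $\ge 1/\sqrt n$ from every other point: this is where item~\eqref{itempointth3} of Theorem~\ref{th3} enters, guaranteeing we stay inside $\Sigma$ where $m_0\ge\bw>0$, so such free room exists by a volume/pigeonhole count for $r$ a fixed constant large enough depending on $\|m_0\|_{L^\infty}$.

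Putting these together, \eqref{lieb-diff} is bounded below by $2\log\frac{r}{|x_1-x_2|\sqrt n} - C\sqrt n$. Wait --- this shows the reasoning must be organized more carefully: the divergent gain $\log\frac{1}{|x_1-x_2|\sqrt n}$ is only of order $O(\log n)$ at worst, which does \emph{not} beat the $O(\sqrt n)$ error from the far-field sum. The correct route, which is Lieb's, is instead to \textbf{symmetrize}: rather than moving one point, one averages the energy over the position of $x_1$ in a small ball $B(x_2,\rho)$ with $\rho=r_0/\sqrt n$, or equivalently compares with the configuration where $x_1$ is replaced by its ``smeared'' version; the self-interaction $-\log|x_1-x_2|$ is replaced by $-\log\rho + O(1)$, and the crucial observation is that the \emph{other} interactions $\sum_{j\ge 3}(-\log|x_1-x_j|)$ and the potential term change only by $O(1)$ under this averaging (not $O(\sqrt n)$), because the logarithm and $V$ are, on average over the small ball, essentially evaluated at the center up to lower order --- the first-order terms cancel by symmetry of the averaging. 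Hence the energy strictly decreases unless $-\log|x_1-x_2|\le -\log\rho+C$, i.e.\ $|x_1-x_2|\ge r_0/\sqrt n$ with $r_0=e^{-C}$. I expect the \textbf{main obstacle} to be exactly this point: getting the error in the far-field and potential terms down to $O(1)$ rather than $O(\sqrt n)$, which forces the symmetrization/averaging device rather than a naive single-point move, together with verifying that the averaging ball can be placed inside $\Sigma$ using item~\eqref{itempointth3}. Once the averaging is set up, the remaining estimates are routine: bounding $\big|\,\fint_{B(x_2,\rho)}\log|z-x_j|\,dz - \log|x_2-x_j|\,\big|$ by a second-order Taylor/mean-value argument (the mean of $\log$ over a ball equals its value at the center up to a sign-definite correction, by subharmonicity, so in fact this step \emph{helps}), and similarly $\fint_{B(x_2,\rho)}V - V(x_2) = O(\rho^2\|\Delta V\|_\infty)=O(1/n)$.
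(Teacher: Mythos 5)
Your instinct to replace the single-point move by an averaging/smearing argument is the right one, and it is indeed close in spirit to Lieb's. But there is a concrete error in the implementation, and also the paper's own argument is structured rather differently, so let me address both.

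\textbf{The error.} You propose averaging over the position of $x_1$ in the ball $B(x_2,\rho)$, centered at $x_2$. That undoes the very gain you were after. Writing $S(z)=-2\sum_{j\ge 2}\log|z-x_j|+nV(z)$ and averaging over $B(x_2,\rho)$, the far-field terms become (by harmonicity of $\log$ away from the singularity) $-2\log|x_2-x_j|$, whereas the left-hand side has $-2\log|x_1-x_j|$; the difference $\sum_{j\ge 3}\bigl(\log|x_1-x_j|-\log|x_2-x_j|\bigr)$ is again an $O(\sqrt n)$ quantity in general, and the $O(1)$ cancellation you invoke does not happen. The ball must be centered at $x_1$. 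Then the far-field terms are untouched (or, if some $x_j$ lies inside $B(x_1,\rho)$, decrease, by superharmonicity of $-\log$), the self-interaction $-2\log|z-x_2|$ averages to $-2\log\rho+O(1)$ provided $|x_1-x_2|<\rho$, and the potential term satisfies $n\fint_{B(x_1,\rho)}V - nV(x_1)\le Cn\rho^2\|\Delta V\|_\infty$ (via the Green's-function identity rather than Taylor, which keeps the dependence at $\|\Delta V\|_\infty$ rather than $\|D^2V\|_\infty$ and hence at $\|m_0\|_{L^\infty}$ as stated). Choosing $\rho=r_0/\sqrt n$ yields $-2\log|x_1'-x_2'|\le -2\log r_0+C(1+r_0^2)$, which is the claim. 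Note that this corrected argument needs neither item (1) of Theorem~\ref{th3} nor any local density count; the pigeonhole discussion in your first paragraph is unnecessary for this route.

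\textbf{The paper's route.} The paper does not average over $w_n$ directly. It passes first to the potential $U=H_n+\log|x-x_1'|$ generated by the other charges and the background (via the splitting formula), where the variational inequality reads $U(x_1')\le U(y')+2n\zeta(y)$. Near $x_2'$, $U$ is split into a radial, monotone decreasing near-field part $U^{near}$ (the charge at $x_2'$ against a small compensating disc of background, chosen so that the enclosed background charge is $\le 1/2$, which forces monotonicity), a superharmonic remainder $U^{rem}$ (all other points plus the background outside the disc, hence no sources or sinks inside a smaller ball), and a uniformly bounded correction $U^{corr}$. Instead of averaging, the paper picks $\bar y'\in\partial B(x_2',r_2)$ where $U^{rem}$ is minimal; by the minimum principle for superharmonic functions, $U^{rem}(\bar y')\le U^{rem}(x_1')$, while $U^{near}(\bar y')$ is far below $U^{near}(x_1')$ if $|x_1'-x_2'|$ is small, and the corrections and $2n\zeta(\bar y)$ are $O(1)$. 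Item (1) of Theorem~\ref{th3} is used there, but not to create ``room'': it guarantees $x_2'\in\Sigma'$ (so the compensating disc really does contain a uniformly positive amount of background) and that $\dist(\bar y',\Sigma')\le 1$ (so $2n\zeta(\bar y)=O(1)$ via the obstacle-problem estimate $\zeta\le C\,\dist(\cdot,\Sigma)^2$). The two routes both ultimately rest on the subharmonicity of $\log$, but the paper's decomposition/minimum-principle version keeps the background inside the potential, whereas the averaging version, once fixed, bypasses the splitting formula and the structure of $\Sigma$ altogether.
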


On the other hand, \eqref{eqnumpointsgas} also easily implies, in view of \eqref{minom},  that there exists some $R_0>0$ (depending only on $\bw$ in \eqref{minom}) such that, if $d(x_i', \E') \ge n^{\beta/2}$,    the distance from any $x_i'$ to its nearest neighbor is bounded by $R_0$.
Combined with Theorem \ref{thlieb}, this establishes the following
\begin{cor}
\label{corodist}
 Let $(x_1,\dots,x_n)$ be a minimizer of $w_n$. Let  $m_0$, $m'_0$ and $\j_n$ be as above.
Then there exists $r_0>0$ and $R_0>0$  depending only on  $\mu_0$ (hence on $V$) such that,  if $d(x_i, \E) \ge n^{\frac{\beta-1}{2}}$, the distance from $x_i$ to its nearest neighbor    is in $[r_0/\sqrt{n}, R_0/\sqrt{n}]$.
\end{cor}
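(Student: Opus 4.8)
The plan is to obtain the Corollary by gluing together a lower and an upper bound on the nearest--neighbour distance: the lower bound is exactly Theorem~\ref{thlieb}, and the upper bound is the ``packing'' consequence of \eqref{eqnumpointsgas} alluded to in the paragraph preceding the statement. So essentially all that remains is to spell out that implication.

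For the lower bound there is nothing to do. Theorem~\ref{thlieb} gives $\min_{i\neq j}|x_i'-x_j'|\ge r_0$, with $r_0$ depending only on $\|m_0\|_{L^\infty}$, and since $x_i'=\sqrt n\,x_i$ this reads $\min_{i\neq j}|x_i-x_j|\ge r_0/\sqrt n$, for every $i$ and with no restriction near $\partial\E$.

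For the upper bound, fix an index $i$ such that $x_i$ is far inside $\E$, i.e. $d(x_i,\partial\E)$ is at least a fixed positive power of $n$ (this is the content of the hypothesis); in blown--up variables this says $d(x_i',\partial\E')$ is a positive power of $n$, and by item~\eqref{itempointth3} of Theorem~\ref{th3} we also know $x_i'\in\E'$, so $x_i'$ is an admissible centre in \eqref{eqnumpointsgas}. Choose a constant $\l\ge\bar c$, to be fixed below in terms of $\bw$ and of the constant $C$ of Theorem~\ref{th3}. Since $\l$ is a fixed number while $d(x_i',\partial\E')\to+\infty$, for $n$ large the square $K_\l(x_i')$ still satisfies $d(K_\l(x_i'),\partial\E')\ge n^{\beta/2}$ (if one is worried about the harmless loss of $\sqrt2\,\l$ in passing from the centre to the whole square, it suffices to have started from a slightly larger power of $n$ in the hypothesis). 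Applying \eqref{eqnumpointsgas} with $a=x_i'$ --- for which we use that the error term $C\l$ there does not depend on the admissible centre $a$, since $x_i'$ does depend on $n$ --- together with the bound $m_0'\ge\bw$ from \eqref{minom}, we get, for $n$ large,
\begin{equation*}
\nu_n'\big(K_\l(x_i')\big)\ \ge\ \int_{K_\l(x_i')}m_0'(x)\,dx-C\l-1\ \ge\ 4\bw\,\l^2-C\l-1.
\end{equation*}
Now fix $\l$ large enough (depending only on $\bw$ and $C$) that $4\bw\,\l^2-C\l-1\ge 2$. Then $K_\l(x_i')$ contains at least two of the points $x_j'$, and since one of them is its centre $x_i'$, there is $j\neq i$ with $x_j'\in K_\l(x_i')$, whence $|x_i'-x_j'|\le 2\sqrt2\,\l=:R_0$, i.e. $d\big(x_i,\{x_j\}_{j\neq i}\big)\le R_0/\sqrt n$. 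By construction $R_0$ depends only on $\bw$ and $C$, hence only on $\mu_0$. Combining with the lower bound yields the interval $[r_0/\sqrt n,R_0/\sqrt n]$ claimed.

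I do not anticipate any real obstacle: the substance is entirely in Theorems~\ref{th3} and~\ref{thlieb}, which are assumed here. The only mildly delicate points are (i) that the bound in \eqref{eqnumpointsgas} is uniform over admissible centres $a$, so that it may be applied with $a=x_i'$ depending on $n$, and (ii) the bookkeeping of the two scales in the boundary condition, namely that a square of bounded size centred at a point $x_i'$ lying at distance a positive power of $n$ from $\partial\E'$ is itself at distance $\ge n^{\beta/2}$ from $\partial\E'$ once $n$ is large.
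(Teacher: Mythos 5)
Your argument is correct and coincides with what the paper does: the lower bound is Theorem~\ref{thlieb} verbatim, and the upper bound is exactly the "packing" consequence of \eqref{eqnumpointsgas} together with $m_0' \ge \bw$ that the paper asserts (without spelling out) in the paragraph preceding the corollary. The two technical points you flag — uniformity of the constant $C$ in \eqref{eqnumpointsgas} with respect to the centre $a$, and the bookkeeping so that $d(K_\l(x_i'),\p\E')\ge n^{\beta/2}$ once $\l$ is a fixed constant — are genuine and are handled correctly; your silent reading of $d(x_i,\E)$ in the hypothesis as $d(x_i,\p\E)$ is also the intended one, since by item~(1) of Theorem~\ref{th3} all $x_i$ lie in $\E$.
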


\subsection{Open questions and plan of the paper}
Let us conclude by a set of open questions that should be solvable by the methods we used here:
\begin{itemize}
\item One could most likely adapt our method here to prove similar equidistribution results of points and energy for the one-dimensional renormalized energy introduced in \cite{ss1d} and for minimizers of  the one-dimensional log gases energy also studied in \cite{ss1d}.
\item In \cite{rs}, another renormalized energy is introduced, and extracted as a limiting energy for Coulomb gases in any dimension $d\ge 2$. There would remain to prove the same results as here for minimizers.
\item
Finally,  the methods used here should allow in principle (although the setting is significantly more complex) to obtain similar equidistribution results as Theorem \ref{th3} for the energy and the vortices of minimizers of the Ginzburg-Landau energy, thus improving again on the  averaged results obtained via $\Gamma$-convergence in   \cite{sandierserfaty}. More precisely this would mean, since in that context,  the equivalent of the mean-field limit measure $\mu_0$ is constant on its support $\Sigma$, that the density of vortices is uniform on $\Sigma$ on all scales much larger than the intervortex distance, and that the Ginzburg-Landau energy density on  such scales is also constant and asymptotic to $\min W$.
\end{itemize}

The paper and the proofs are organized as follows. All our proofs are energy-comparison based. They rely on the techniques introduced in \cite{sandierserfaty,sandierserfatygas}, in particular the fact that even though $W(\j, \chi)$ is an energy with a singular density, it can be replaced by a energy density which is bounded below, once the number of points near the boundary is well controlled. This is the content of Proposition \ref{prop49ss} which is recalled in Section \ref{sec3}, and allows to control $\j$ in $L^p$ for $p<2$ via $W(\j, \chi)$, see Lemma \ref{lemnormp}.
 
In Section \ref{sec2}, we start with the case of a constant background equal to $m$. By using extension lemmas, we show that $\sigma_0, \sigma_\vp $ and $\sigma_{per} $ are all asymptotically equal to $\sigma_m^* $ in the limit of large squares or tori.
In Section \ref{sec3} we prove Theorems \ref{thmmaincb} and \ref{thmmainper} by using a bootstrap argument: given a minimizer in a large box,  we show that by a mean-value argument we can find a much smaller box (but not too small either) with a good boundary trace. Using then that a minimizer is also a minimizer on any smaller box with respect to its own boundary data, and combining with the results of Section \ref{sec2} we deduce the value of the energy on the smaller box. A bootstrap argument is then used to go down to any small size box (as long as its size $\l$ is still bigger than some constant). Finally we show how controlling $W$ down to $O(1)$ scales allows to deduce \eqref{eqnumpointscb} or \eqref{eqnumpointsper}.

In Section \ref{sectionnonconstant} we turn to the case of  a varying background and show how to adapt similar  results  to Section \ref{sec2}, with error terms depending explicitly  on the oscillation of the background. 
In Section \ref{sec5} we turn to the Coulomb gas minimizers and prove that all their points lie in $\E$ as well as  Theorem \ref{thlieb} (these both rely on similar arguments, totally independent from the rest of the paper), then we conclude with the proof of Theorem \ref{th3} by adapting the ideas of Sections \ref{sec2}, \ref{sec3}, \ref{sectionnonconstant}. 
Finally, some technical results, mostly adapted from \cite{sandierserfaty}, are gathered in the appendix. 

\vskip .1cm
{\bf Acknowledgements: } Part of this work originates in the Master's thesis of H. Ben Moussa directed by Etienne Sandier and the second author.
We are grateful to Prof. Lieb for providing the idea of the  proof of Theorem \ref{thlieb} and allowing us to reproduce it here. We also would like to thank Etienne Sandier for many useful discussions. 
The research of both authors was  supported by a EURYI award. Moreover, the research of the first author was partially supported by the Grant ANR-10-BLAN 0101. We also thank the Forschungsinstitut f\"ur Mathematik at the ETH Z\"urich, where part of this work was completed, for its hospitality.

\section{Comparison of different minimization problems}\label{sec2}

In this section, we prove that under some assumptions on the function $\varphi$, the quantities $\sigma_\varphi(K_L(a); m)$ and $\sigma_\varphi(K_L(a); \rho)$ are well defined. Moreover, we compare the different minimization problems defined above. The results that we will prove in this section are the following.

\begin{prop}
	\label{existencemin} Let $p\in(1,2)$, $a\in \RR^2$ and $L>0$ be fixed. Let $\varphi\in L^p(\partial K_L(a))$ such that \eqref{h1boundary} and \eqref{h2boundary} are satisfied. If $\rho\in L^{\infty}(K_L(a))$, then the minimum of $W(E,\1_{K_L(a)})$ over $\mathcal B_{\rho,\varphi}(K_L(a))$ is achieved.
\end{prop}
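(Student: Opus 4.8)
The plan is to run the direct method of the calculus of variations, which in this setting collapses to a \emph{finite-dimensional} minimization once one observes that each admissible field is uniquely determined by the position of its points.

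First I would record two elementary facts. By the divergence theorem applied to \eqref{eqdefBrhoOmegabis}, every $E\in\mathcal B_{\rho,\varphi}(K_L(a))$ has $\#\Lambda=\frac{1}{2\pi}\int_{\partial K_L(a)}\varphi+\int_{K_L(a)}\rho$, which by \eqref{h1boundary} is a fixed integer $n$ independent of $E$; in particular the points never reach $\partial K_L(a)$ (Remark \ref{reminterseclambda}). Conversely, given $n$ distinct points $p_1,\dots,p_n$ in the interior of $K_L(a)$, the Neumann problem $-\Delta H=2\pi(\sum_i\delta_{p_i}-\rho)$ in $K_L(a)$, $\partial_\nu H=-\varphi$ on $\partial K_L(a)$, is solvable — its compatibility condition is exactly \eqref{h1boundary} — and unique up to an additive constant, so $E_{(p)}:=-\nabla H$ is the unique element of $\mathcal B_{\rho,\varphi}(K_L(a))$ with point set $\{p_1,\dots,p_n\}$ (uniqueness: the difference of two such fields is curl-free, divergence-free and has vanishing normal trace on the simply connected $K_L(a)$, hence is zero). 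Moreover, since on the compact curve $\partial K_L(a)$ one has $L^p\hookrightarrow H^{-1/2}$ for $p>1$, the field $E_{(p)}$ lies in $L^2$ near $\partial K_L(a)$, so $W(E_{(p)},\1_{K_L(a)})<+\infty$; thus the infimum to be attained is finite (if it were $+\infty$ there is nothing to prove).

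It therefore suffices to show that $\mathcal F(p_1,\dots,p_n):=W(E_{(p)},\1_{K_L(a)})$, defined on the set $\mathcal O$ of $n$-tuples of distinct interior points of $K_L(a)$ (modulo relabelling), attains its minimum. The map $(p_1,\dots,p_n)\mapsto E_{(p)}$ depends continuously on the points, and using the local decomposition $E_{(p)}(x)=\nabla\log|x-p_i|+(\text{continuous near }p_i)$ together with standard properties of the renormalized energy one checks that $\mathcal F$ is continuous on $\mathcal O$. What remains is the coercivity statement: the sublevel sets $\{\mathcal F\le C\}$ are compactly contained in $\mathcal O$, i.e. $\mathcal F\to+\infty$ as two points collide or a point tends to $\partial K_L(a)$. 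Granting this, $\{\mathcal F\le C\}$ is, for $C$ large, a nonempty compact set on which the continuous function $\mathcal F$ attains its minimum, and pulling back yields the minimizer $E=E_{(p^*)}\in\mathcal B_{\rho,\varphi}(K_L(a))$.

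For the collision case the argument is classical: the interaction of two unit charges at mutual distance $d$ contributes a term comparable to $-2\pi\log d$ to $W$ (equivalently, the part of $\frac12\int|E_{(p)}|^2$ coming from a small ball containing two nearby points behaves like $\pi|\log d|$ after renormalization), so $\mathcal F\to+\infty$ as $|p_i-p_j|\to 0$. I expect the genuine difficulty — and the place where the hypotheses $\varphi\in L^p$ with $p>1$ and \eqref{h2boundary} enter — to be ruling out escape of a point to the boundary. The mechanism: if $d_i:=d(p_i,\partial K_L(a))\to 0$, then $\nabla\log|x-p_i|$ has normal trace on $\partial K_L(a)$ of total mass $2\pi$ concentrating at the foot of $p_i$, so its $H^{-1/2}(\partial K_L(a))$ norm diverges (a Dirac mass on a curve is not in $H^{-1/2}$); since the total normal trace equals the fixed datum $\varphi$, whose norm is controlled by \eqref{h2boundary}, the remaining part of $E_{(p)}$ must carry a normal trace whose $H^{-1/2}$ norm likewise diverges, and since $\frac12\int|E_{(p)}|^2$ controls from below the squared $H^{-1/2}$ norm of the normal trace of that part (after subtracting bounded contributions), $\mathcal F\to+\infty$. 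Making this estimate quantitative — in particular handling the cross terms between the point fields and the boundary-correcting field — is the main technical point; everything else is routine elliptic theory plus the properties of the renormalized energy recalled in Section \ref{sec2}.
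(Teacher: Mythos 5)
Your reformulation as a finite-dimensional minimization---the admissible field $E_{(p)}$ is determined by its $n$ points, so minimize $\mathcal F(p_1,\dots,p_n)=W(E_{(p)},\1_{K_L(a)})$ over distinct interior $n$-tuples and invoke Weierstrass---is a genuinely different framing from the paper's, which runs the direct method on a minimizing sequence $E_n$: extract weak limits of $\nu_n$ and $E_n$, show the limiting point set avoids $\partial K_L(a)$ by a contradiction with $\varphi\in L^p$, $p>1$ (the normal trace of $\nabla\log|\cdot-p_n|$ would carry an $\mathrm{Arctg}(\ep/d_n)\to\pi/2$ contribution on arbitrarily short boundary segments, which $\varphi\in L^p$ forbids), rule out multiple limit points via \cite[Lemma 4.8]{sandierserfaty}, and then show $W(E_n,\1)\to W(E,\1)$. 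Both proofs face the same three facts---continuity of $W$ in the points, behaviour at collisions, behaviour at the boundary---but the paper's treatment of the boundary is strictly softer: to identify an admissible limit it never needs the energy to diverge, only that the weak limit cannot charge $\partial K_L(a)$. Your Weierstrass route requires the stronger statement $\mathcal F\to+\infty$ as $d_i:=d(p_i,\partial K_L(a))\to 0$, and that is where your sketch has a genuine gap. (The continuity of $\mathcal F$ on the set of distinct interior tuples is also only asserted; it is true and of the usual Bethuel--Brezis--H\'elein type, but it needs the same uniform-in-$\eta$ argument the paper spells out for $W(E_n,\1)\to W(E,\1)$.)

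The $H^{-1/2}$ mechanism you propose for the boundary coercivity is not the right one. First, $\frac12\int|E_{(p)}|^2=+\infty$ for every admissible configuration---the renormalization in \eqref{defWchi} exists precisely to remove this divergence---so it cannot be the quantity controlling a trace norm. Second, in the decomposition $E_{(p)}=\nabla\log|\cdot-p_i|+(\text{rest})$ the terms you fold into ``bounded contributions'' are not bounded: $W(\nabla\log|\cdot-p_i|,\1_{K_L(a)})$ and the cross term each diverge like $\pm\pi\log d_i$, so whether a useful lower bound survives hinges on the exact constants rather than on any soft inequality. The clean mechanism is the image charge. Write $E_{(p)}=E_\Lambda+E_{\mathrm{bdry}}=-\nabla H_\Lambda-\nabla H_{\mathrm{bdry}}$, where $E_\Lambda$ is the zero-Neumann field of the points in $K_L(a)$ (against $\rho$ shifted by a constant so as to restore compatibility) and $E_{\mathrm{bdry}}$ is independent of the points and carries the Neumann datum $\varphi$. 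Then $W(E_\Lambda,\1_{K_L(a)})$ blows up like $-\pi\log d_i$ (the point $p_i$ sees its same-sign reflected image at distance $2d_i$), the quantity $\frac12\int|E_{\mathrm{bdry}}|^2$ is a fixed constant, and the cross term reduces by Green's theorem to $\int_{\partial K_L(a)}H_\Lambda\,\varphi$, which stays bounded as $d_i\to0$ because $\|H_\Lambda\|_{L^{p'}(\partial K_L(a))}$ does (the local $\log$-singularity of $H_\Lambda$ near the foot of $p_i$ is in $L^{p'}$ while $\varphi\in L^p$). Carrying this out uniformly in the other points, in $\rho$, and near the corners of the square is real work, roughly comparable to the paper's contradiction argument; so while your plan is sound, the coercivity step as written does not close.
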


\begin{prop} \label{proplimmincon} Let $p\in(1,2)$ and $m$ be a positive number.
Then:
\begin{enumerate}
\item for all sequences of real numbers $L$ such that $m\vb K_L(a)\vb \in \NN$, we have
\begin{equation}\label{eqlimsigma0con}
\lim\limits_{L\to+\infty} \sigma_{0}(K_L(a);{m})=\sigma_{m}^*;
\end{equation}
\item for all sequences of real numbers $L$ such that $m\vb \mathbb T_L\vb \in \NN$, we have
\begin{equation}\label{eqlimsigmaperio}
\lim\limits_{L\to+\infty} \sigma_{per}(L;m)=\sigma_m^*.
\end{equation}
\item  given $\gamma\in\left(\frac{3-p}{2},1\right)$ and $M>0$, we have
\begin{equation}\label{eqlimsigmaphicon}
\lim\limits_{L\to+\infty} \sigma_{\varphi}(K_L(a);m)=\sigma_{m}^*.
\end{equation}
 uniformly w.r.t. $\varphi$ such that \eqref{h1boundary} and \eqref{h2boundary} are satisfied in $K_L(a)$.

\end{enumerate}
\end{prop}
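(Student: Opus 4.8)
The plan is to prove, for each of the three quantities, a lower bound $\ge \sigma_m^*$ (exactly, or up to $o(1)$) by a periodization/reflection argument, and a matching upper bound $\le \sigma_m^*+o(1)$ by restricting a near-minimizer of $W$ over $\mathcal A_m$ and altering it in a thin boundary layer so that it becomes an admissible competitor for the constrained problem; Proposition~\ref{existencemin} guarantees that the infima defining $\sigma_0,\sigma_{per},\sigma_\varphi$ are attained.

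\textbf{Lower bounds.} For the periodic problem this is immediate: any $\j\in\mathcal A_{m,per}(\mathbb T_L)$, viewed on $\RR^2$, lies in $\mathcal A_m$, and by \eqref{eqminper} its renormalized energy (in the sense of Definition~\ref{defrenenergy}) equals $W(\j,\1_{\mathbb T_L})/|\mathbb T_L|$; hence $\sigma_{per}(L;m)\ge\sigma_m^*$ for every admissible $L$. For $\sigma_0$, given $\j\in\mathcal A_{m,0}(K_L)$ write $\j=-\nabla H$: the condition $\j\cdot\vnu=0$ on $\partial K_L$ means $\partial_\vnu H=0$ there, so $H$ extends to $\RR^2$ by successive even reflections across the four sides of $K_L$. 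The Neumann condition makes this extension $C^1$ across the reflection lines, it is $(4L)$-periodic, and — since by Remark~\ref{reminterseclambda} one has $\Lambda\cap\partial K_L=\emptyset$, so no point is created on a reflection line — the field $-\nabla H^{\mathrm{ext}}$ belongs to $\mathcal A_{m,per}(\mathbb T_{2L})$. Both $\int|\cdot|^2$ and $\#\Lambda$ get multiplied by $4$ while $|\mathbb T_{2L}|=4|K_L|$, so one checks $W(-\nabla H^{\mathrm{ext}})=W(\j,\1_{K_L})/|K_L|\ge\sigma_m^*$; taking the infimum over $\j$ gives $\sigma_0(K_L;m)\ge\sigma_m^*$, with no error term. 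For $\sigma_\varphi$ with nontrivial $\varphi$, one first replaces a minimizer $\j_\varphi$, inside a thin annulus $K_{L+T}\setminus K_L$ (with $T$ of order $L^{1-\delta}$, $\delta$ small, chosen so that $m|K_{L+T}|\in\NN$), by a ``screening'' field interpolating between the trace $\varphi$ on $\partial K_L$ and a zero normal trace on $\partial K_{L+T}$, carrying $\sim m|K_{L+T}\setminus K_L|$ points; its energy and renormalized energy are $o(L^2)$, the contribution of $\varphi$ being controlled by \eqref{h2boundary}, which is $o(L^2)$ precisely because $\gamma>\tfrac{3-p}{2}$. Applying the reflection argument to the extended field on $K_{L+T}$ then yields $\sigma_\varphi(K_L;m)\ge\sigma_m^*-o(1)$, with an error depending only on $p,\gamma,M,m$, hence uniform in $\varphi$.

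\textbf{Upper bounds.} Fix $\epsilon>0$ and a minimizer $E^*\in\mathcal A_m$ of $W$; since $W$ is a $\limsup$ over squares centered at the origin, $W(E^*,\chi_{K_R})\le(\sigma_m^*+2\epsilon)|K_R|$ for all large $R$, and the same holds, uniformly for $|\tau|\le 1$, for the translates $E^*(\cdot-\tau)$. For the prescribed $L$, a mean-value argument over $\tau\in K_1$ selects $\tau^*$ for which both the $L^2$-energy and the number of points of $E^*(\cdot-\tau^*)$ in the boundary layer $K_L\setminus K_{L-1}$ are $O(L)$. One then performs a screening construction in the annulus $K_L\setminus K_{L-T}$: outside it the competitor equals $E^*(\cdot-\tau^*)$, inside it is an interpolating curl-free field with divergence $2\pi(\text{sum of Diracs}-m)$ and energy $o(L^2)$, realizing on $\partial K_L$ exactly the boundary behaviour required — zero normal trace for $\sigma_0$, equality of the traces on opposite sides for $\sigma_{per}$, the prescribed $\varphi$ for $\sigma_\varphi$ — and carrying enough points so that the total number inside $K_L$ (resp.\ $\TT_L$) equals the integer forced by the compatibility condition \eqref{h1boundary}; this requires adding or removing $O(LT)$ points near the boundary, at energetic cost $o(L^2)$. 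The resulting field is an admissible competitor, whence $\sigma_0(K_L;m),\ \sigma_{per}(L;m),\ \sigma_\varphi(K_L;m)\le\sigma_m^*+2\epsilon+o(1)_{L\to+\infty}$; letting $L\to+\infty$ and then $\epsilon\to 0$ finishes the proof, and since all error terms depend only on $p,\gamma,M,m$ one gets the uniformity in $\varphi$.

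\textbf{Main obstacle.} The heart of the argument is the screening/extension construction used in both directions: given good control of the field and of the number of points in a boundary layer of $K_L$, one must produce in a layer of controlled width a curl-free vector field with divergence $2\pi(\text{sum of Diracs}-m)$, realizing prescribed normal traces on the two boundaries of the layer, carrying an exact integer number of points, and with energy and renormalized energy that are genuinely $o(L^2)$ — in particular when the boundary datum is only in $L^p$ with $p<2$. Keeping the number of points exactly integral without spoiling the energy estimate, and tracking all constants so as to get uniformity in $\varphi$, are the delicate points; the choice of annulus width and the hypotheses $\gamma>\tfrac{3-p}{2}$, \eqref{h2boundary} are exactly what make the errors subquadratic in $L$. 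These are the ``extension lemmas'' alluded to in the introduction.
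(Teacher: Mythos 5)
Your proposal is correct and follows essentially the same route as the paper: even reflection/periodization of $-\nabla H$ for the lower bounds on $\sigma_0$ and $\sigma_{per}$, screening constructions (the paper's Proposition~\ref{propj+j-} and the cited results of \cite{sandierserfaty}) for the upper bounds, and reduction of the $\sigma_\varphi$ case to the zero-boundary case by extending across a thin annulus, exactly as in Lemma~\ref{corestimationmin}. The only cosmetic difference is that the paper invokes \cite[Proposition 4.2, Corollary 4.4]{sandierserfaty} for the upper-bound competitors rather than re-deriving them, and routes $\sigma_\varphi$ through $\sigma_0(K_{t_\pm})$ instead of screening directly to the datum $\varphi$.
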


We start with the following ``screening" proposition adapted from the ideas of \cite{sandierserfaty}. The first part allows to extend a given configuration with given boundary data satisfying \eqref{h2boundary} to a strip outside of a square,  bringing the boundary value to $0$, while keeping the number of points and the energy controlled (what matters is that they remain negliglible compared to the volume of the square). The second part allows to do the same inside a square.

In this proof, as well as in all the sequel, several lengthscales will appear: the lengthscale $\l$ of a given square, the lengthscale $\l^\gamma\ll \l $ which is the width needed to obtain good boundaries satisfying \eqref{h2boundary} by mean-value arguments, and the lengthscale $\l^\a  \ll \l^\gamma$ which is the width needed to transition from a boundary data with \eqref{h2boundary} to a zero boundary data.
These exponents will remain the same in the whole paper.

We will most often  assume that the background densities $\ro$ are in $\mathcal C^{0,\lambda}$ with $\hal \le \lambda \le 1$.

\begin{prop}\label{propj+j-} Let $p\in(1,2)$, $\gamma\in\left(\frac{3-p}{2},1\right)$, $\lambda \in [\hal, 1]$,  $M$ a positive constant and $a\in\RR^2$. Let $\varphi\in L^p(\partial K_\l(a))$ satisfying \eqref{h2boundary} on $\p K_\l(a)$. There exist constants $1<\a<\gamma$ and $\beta\in (0,1)$, depending on $p$, $\gamma$, $\lambda$,  
 for which the following holds.
\begin{enumerate}[label=\emph{\arabic*.}]
\item
 Let  $\rho$ be a $\mathcal C^{0,\lambda}(K_{\l+2\l^\a}(a)\backslash K_{\l}(a))$ function for which there exist $\underline \rho, \overline \rho>0$ such that $\underline \rho \le \rho(x)\le \overline \rho$.
 There exist $C,c$ positive constants depending only on $p$, $\gamma$, $M$, $\underline \rho$ and $\overline \rho$,
 such that for all $\l\ge c$ the following holds:

\label{itemj+} There exist $t_+\in[\l+\l^\a,\l+2\l^\a]$ and $\j_+:\mathcal K_+^a\to \RR^2$ with $\mathcal K_+^a=K_{t_+}(a)\backslash K_\l(a)$ such that
	\begin{equation}
		\label{defj+}
		\left\{
		\begin{aligned}
			&\di \j_+=2\pi \Big(\sum_{p\in\Lambda_+}\delta_p-\rho\Big)&&\quad\text{in}\ \mathcal K_+^a\\
			&\j_+\cdot\vnu=0&&\quad{on}\ \partial K_{t_+}(a)\\
			&\j_+\cdot\vnu=\varphi&&\quad\text{on}\ \partial K_{\l}(a)
		\end{aligned}
		\right.
	\end{equation}
	where $\Lambda_+$ is a discrete subset of the interior of $\mathcal K_+^a$,  with
 \begin{equation}
 \label{eqnumpointsj+} \# \Lambda_+\le C \l^{1+\a},\end{equation}
  whose elements have distances, and distances to the boundary of $\mathcal K_+^a$,  all bounded below by a constant depending only on $\overline \ro$, and it holds that $\curl \j_+=0 $ in a neighborhood of these points; moreover, we have
	\begin{align}
		\label{energyj+}
		W(\j_+,\1_{\mathcal K^a_+})\le&\, C \l^{1+\beta}+ C \l^2 \left(\norm \rho\norm_{C^{0,\lambda}(\mathcal K^a_+)}\l^{\beta\lambda}\right)\left(1+\norm \rho\norm_{C^{0,\lambda}(\mathcal K^a_+)}\l^{\beta\lambda}\right),
	\end{align}
	and for any nonnegative  function $\chi\le 1$ such that $\chi=1$ on $\partial K_{l}(a)$, $\chi=0$ on $\partial K_{t_+}(a)$ and $\vb \nabla \chi\vb$ is bounded, we have
	\begin{align}
		\label{energyj+chi}
		W(\j_+,\chi)\le&\, C( \l^{1+\beta}+\l^{1+\a})+ C \l^2 \left(\norm \rho\norm_{C^{0,\lambda}(\mathcal K^a_+)}\l^{\beta\lambda}\right)\left(1+\norm \rho\norm_{C^{0,\lambda}(\mathcal K^a_+)}\l^{\beta\lambda}\right).
	\end{align}
\item\label{itemj-} 
Let $\frac{1}{2}\le\lambda\le 1$ and $\rho$ be a $\mathcal C^{0,\lambda}(K_{\l}(a)\backslash K_{\l- 2\l^{\a}}(a))$ function for which there exist $\underline \rho, \overline \rho>0$ such that $\underline \rho \le \rho(x)\le \overline \rho$.
 There exist $C,c$ positive constants  depending only on $p$, $\gamma$, $M$, $\underline \rho$ and $\overline \rho$,
such that for all $\l\ge c$ the following holds:

There exist $t_-\in[\l-2\l^\a,\l-\l^\a]$ and $\j_-:\mathcal K_-^a\to \RR^2$ with $\mathcal K_-^a=K_{\l}(a)\backslash K_{t_-}(a)$ such that
	\begin{equation}
		\label{defj-}
		\left\{
		\begin{aligned}
			&\di \j_-=2\pi \Big(\sum_{p\in\Lambda_-}\delta_p-\rho\Big)&&\quad\text{in}\ \mathcal K_-^a\\
			&\j_-\cdot\vnu=0&&\quad{on}\ \partial K_{t_-}(a)\\
			&\j_-\cdot\vnu=\varphi&&\quad\text{on}\ \partial K_{\l}(a)
		\end{aligned}
		\right.
	\end{equation}
	where $\Lambda_-$ is a discrete subset of the interior of $\mathcal K_-^a$ satisfying the same properties as $\Lambda_+$, among which
\begin{equation}
\label{eqnumpointj-}
		\# \Lambda_- \le  C\l^{1+\a},\end{equation}
	\begin{align}
		\label{energyj-}
		W(\j_-,\1_{\mathcal K^a_-})\le&\, C \l^{1+\beta}+ C \l^2 \left(\norm \rho\norm_{C^{0,\lambda}(\mathcal K^a_-)}\l^{\beta\lambda}\right)\left(1+\norm \rho\norm_{C^{0,\lambda}(\mathcal K^a_-)}\l^{\beta\lambda}\right),
	\end{align}and for any nonnegative function $\chi \le 1$ such that $\chi=1$ on $\partial K_{t-}(a)$, $\chi=0$ on $\partial K_{l}(a)$ and $\vb \nabla \chi\vb$ is bounded, we have
	\begin{align}
		\label{energyj-chi}
		W(\j_-,\chi)\le&\, C ( \l^{1+\beta} +\l^{1+\a} ) + C \l^2 \left(\norm \rho\norm_{C^{0,\lambda}(\mathcal K^a_-)}\l^{\beta\lambda}\right)\left(1+\norm \rho\norm_{C^{0,\lambda}(\mathcal K^a_-)}\l^{\beta\lambda}\right).
	\end{align}
\end{enumerate}
\
\end{prop}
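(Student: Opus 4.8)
The plan is to construct the screening vector fields $\j_\pm$ explicitly, by a three-step procedure: first, a mean-value argument to select a good radius $t_\pm$ on which the boundary trace of an auxiliary field is small; second, the placement of a controlled number of point charges $\Lambda_\pm$ in the annular region so that the total charge balances the background $\rho$ on each dyadic sub-annulus; and third, the solution of a linear elliptic problem (with Neumann data $\varphi$ on $\p K_\l(a)$ and $0$ on $\p K_{t_\pm}(a)$) that realizes the prescribed divergence, followed by a bound on its $W$-energy. I will treat the case of $\j_+$ in detail; the case of $\j_-$ is identical after the obvious change of which square is the inner one, and this is why the H\"older regularity requirement $\lambda\ge\tfrac12$ is imposed (it is only genuinely needed for the interior construction, where the curvature of the inner square enters).

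\emph{Step 1 (choice of $t_+$).} On the dyadic-type family of radii $t\in[\l+\l^\a,\l+2\l^\a]$, consider $\int_{\p K_t(a)}|\psi|^p$ for a suitable trace $\psi$ (the normal component of a reference field built from $\varphi$ and $\rho$). By a mean-value/pigeonhole argument over a range of length $\l^\a$, there is a $t_+$ on which this integral is $\le C\l^{-\a}\,(\text{total})$; the total is controlled by \eqref{h2boundary} together with the mass of $\rho$ over the annulus, which is $O(\l^{1+\a})$ since the annulus has width $\le 2\l^\a$ and the bound $\rho\le\overline\rho$. This is where the exponents $\a<\gamma$ and the constraint $\gamma>\frac{3-p}{2}$ enter: they guarantee that after paying the $\l^{-\a}$ gain, the leftover boundary energy is $o(\l^2)$, in fact $\le C\l^{1+\beta}$ for a suitable $\beta<1$ depending on $p,\gamma,\lambda$.

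\emph{Step 2 (point charges).} Partition $\mathcal K_+^a$ into $O(\l^{1-\a})$ rectangular cells of sidelength $\sim\l^\a$; on each cell the background mass $\int\rho$ is $O(\l^{2\a})$, and since $\rho$ is bounded below by $\underline\rho$ and above by $\overline\rho$ this is a bounded-density amount of charge, so we may place $O(\l^{2\a})$ points, well-separated (distances bounded below by a constant depending on $\overline\rho$) and away from cell boundaries, with local charge exactly balancing $\int_{\text{cell}}\rho$ up to rounding — the rounding error of $O(1)$ per cell is absorbed into an adjacent cell or a global correction, using \eqref{h1boundary} to make the grand total an integer. This gives $\#\Lambda_+=O(\l^{1-\a}\cdot\l^{2\a})=O(\l^{1+\a})$, which is \eqref{eqnumpointsj+}. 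Around each point we arrange $\curl\j_+=0$ by making $\j_+$ radial there.

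\emph{Step 3 (the field and its energy).} Having fixed the discrete measure $\nu_+=\sum_{p\in\Lambda_+}\delta_p$, solve $-\Delta H_+=2\pi(\nu_+-\rho)$ on $\mathcal K_+^a$ with Neumann data $\partial_\nu H_+=-\varphi$ on $\p K_\l(a)$, $\partial_\nu H_+=0$ on $\p K_{t_+}(a)$ (solvability is exactly \eqref{h1boundary}), and set $\j_+=-\nabla H_+$. To bound $W(\j_+,\1_{\mathcal K_+^a})$ one splits $\j_+$ into the sum of the local singular parts $\nabla\log|x-p|$ (cut off near each $p$) and a remainder $f$ that solves an elliptic problem with $L^p$ boundary data of size $O(\l^{1+\beta/p})$ and right-hand side controlled by $\|\rho\|_{C^{0,\lambda}}$; the renormalized self-energies of the singular parts contribute $O(\#\Lambda_+)=O(\l^{1+\a})$ after the $\pi\log\eta$ subtraction, the cross terms and the $\int|f|^2$ term are estimated by elliptic regularity and a dyadic decomposition, yielding the quadratic-in-$\|\rho\|_{C^{0,\lambda}}$ bound of \eqref{energyj+} (the $\l^{\beta\lambda}$ factors come from measuring the oscillation of $\rho$ across the width-$\l^\a$ construction region). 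The estimate \eqref{energyj+chi} for a general cutoff $\chi$ follows by adding the boundary-layer contribution $\int|\nabla\chi|^2|\j_+|^2$-type term, which is $O(\l^{1+\a})$ since $|\nabla\chi|$ is bounded and $\j_+$ is controlled in $L^p_{loc}$ near $\p K_{t_+}$ via Lemma~\ref{lemnormp}.

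The main obstacle is Step 3, specifically obtaining the energy bound with the sharp power $\l^{1+\beta}<\l^2$ rather than the trivial $\l^2$: this requires carefully exploiting that the leftover boundary data after Step 1 is small (not merely bounded) and that the point charges balance $\rho$ \emph{locally} on scale $\l^\a$, so that the field $f$ is small away from a thin layer. Threading the three exponents $\a<\gamma<1$ and $\beta<1$ through all three steps so that every error term is simultaneously $o(\l^2)$ — and in fact $\le C\l^{1+\beta}$ — is the delicate bookkeeping; the inequality $\gamma>\frac{3-p}{2}$ is precisely what makes this possible, and $\a$ is then chosen in the resulting nonempty window $(1,\gamma)$.
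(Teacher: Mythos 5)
Your outline has the right general shape (mean--value argument, partition into size-$\l^\a$ cells, elliptic estimates), but two of its three steps are genuinely off or incomplete, and the proposal does not reach a proof.

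\textbf{Step 1 misidentifies what the mean-value argument is for.} In the actual construction, the pigeonhole is used to choose the strip widths (and then $t_\pm$) so that the quantity $\int_{S_t}\rho + \frac{1}{2\pi}\int_{\partial K_\l\cap\partial S_t}\varphi$ is a positive \emph{integer}. This is what allows each cell of the partition to receive a whole number of point charges so that the Neumann problem on it is solvable; it is not a smallness condition on a boundary trace. Your version, ``pigeonhole over $t$ to find $\int_{\partial K_t}|\psi|^p$ small for a reference field $\psi$,'' is not coherent as stated: the field $\psi$ whose trace you would pigeonhole on does not exist yet (you are trying to construct it), and the ``total'' you propose to divide by $\l^{-\a}$ is not controlled by the mass of $\rho$ alone, since an $L^p$ norm of $\psi$ on radial circles is not the same as the volume integral of $\rho$. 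Pigeonholing on the trace of an \emph{existing} field is what Lemma~\ref{lemgoodboundary} does, in a different part of the argument; here you have nothing to take the trace of.

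\textbf{Step 3 waves at the hard estimate rather than doing it.} Solving one global Neumann problem $-\Delta H_+ = 2\pi(\nu_+ - \rho)$ on the annulus and then bounding $W(-\nabla H_+, \1)$ by $C\l^{1+\beta}$ is exactly the step that requires work, and the proposal offers no mechanism for it. The paper sidesteps this by constructing the field \emph{piecewise}: each $\l^\a$-cell is further subdivided into $O(1)$-cells carrying one point each, on which Lemma~\ref{lemsysfdelta} gives an explicit Neumann function with renormalized energy $O(1)$ and $L^q$-norm $O(1)$; summing gives the $O(\l^{1+\a})$ term. The oscillation of $\rho$ and the boundary data $\varphi_i$ are then corrected by a \emph{separate} vector field obtained from Lemma~\ref{lemsysum} on each $\l^\a$-cell, whose $L^q$-norm is estimated ($q$ up to $2p>2$); cross-terms are handled by H\"older, which is what lets $\beta<1$ emerge. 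This additive decomposition $\j = \j_1+\j_2$ with disjoint estimating mechanisms is what makes the energy bound computable; a single global Neumann solve would require a careful analysis of the regular part of $H_+$ near $\l^{1+\a}$ well-separated singularities in an annulus of width $\l^\a$, which you would need to supply.

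\textbf{Step 2 is close but glosses the rounding.} You accept $O(1)$ rounding errors per cell and absorb them ``into an adjacent cell or a global correction.'' The paper avoids all rounding by choosing the cell boundaries (via the integer-valued pigeonhole of Step 1) so that each cell's compatibility condition~\eqref{eqcondRi} is satisfied \emph{exactly}; only then can Lemma~\ref{lemsysfdelta} and Lemma~\ref{lemsysum} be applied cell-by-cell. Your absorbing scheme could perhaps be made to work, but it is not obvious that a chain of $O(1)$ corrections across $O(\l^{1-\a})$ cells keeps the energy below $\l^{1+\beta}$; this would need to be checked. In short: the proposal correctly identifies the three components of the proof but misassigns the role of the pigeonhole argument and does not furnish the explicit piecewise construction on which the energy estimate rests.
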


		
The proof of this proposition uses a few contruction lemmas, mostly adapted from \cite{sandierserfaty} to a nonconstant background, which we now state and whose proof is in the appendix.

\begin{lem}\label{lemsysum} Let $\mathcal{R}$ be a rectangle with sidelengths in $\left[\frac{L}{2},\frac{3L}{2}\right]$. Let $p\in (1,2)$. Let $\varphi\in L^p(\partial\mathcal{R})$ be a function which is $0$ except on one side of the rectangle $\mathcal{R}$. Let  $\rho$ be a nonnegative $\mathcal C^{0}$  function.
Let $m$ be a constant such that $(m-\rho_{\mathcal R})|\mathcal{R}|=-\frac{1}{2\pi}\int_{\partial\mathcal{R}}\varphi$ where $\rho_{\mathcal R}:=\fint_{\mathcal R}\rho(x)\,dx$. Then the mean zero solution to
\begin{equation}\label{eqsysurho}
\left\{\begin{aligned}
-\Delta u &=2\pi (m-\rho(x))& &\mbox{in}& &\mathcal{R}\\
\nabla u \cdot \vnu&=\varphi & & \mbox{on}& &\partial\mathcal{R}
\end{aligned}\right.
\end{equation}
satisfies for every $q\in [1,2p]$
\begin{equation}\label{eqnormurho}
\int_{\mathcal{R}}|\nabla u|^q\le C_{p,q}L^{2-\frac{q}{p}}\|\varphi\|^q_{L^p(\partial\mathcal{R})}+CL^{q+2}\norm \rho-\rho_\mathcal{R}\norm_{L^{\infty}(\mathcal R)}^q.
\end{equation}
\end{lem}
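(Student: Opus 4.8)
\textbf{Plan of proof for Lemma \ref{lemsysum}.}

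The plan is to reduce to the case of a constant background by writing $u = u_1 + u_2$, where $u_1$ absorbs the non-constant part of $\rho$ and $u_2$ solves the corresponding problem with $\rho$ replaced by its average $\rho_{\mathcal R}$. First I would let $u_2$ be the mean-zero solution of $-\Delta u_2 = 2\pi(m - \rho_{\mathcal R})$ in $\mathcal R$ with $\nabla u_2 \cdot \vnu = \varphi$ on $\partial \mathcal R$; this is well-posed precisely because the compatibility condition $(m-\rho_{\mathcal R})|\mathcal R| = -\frac{1}{2\pi}\int_{\partial\mathcal R}\varphi$ is exactly the hypothesis imposed, so $\int_{\mathcal R} (m-\rho_{\mathcal R}) = -\frac{1}{2\pi}\int_{\partial\mathcal R}\nabla u_2\cdot\vnu$ matches by the divergence theorem. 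Then $u_1 := u - u_2$ is the mean-zero solution of $-\Delta u_1 = 2\pi(\rho_{\mathcal R} - \rho(x))$ in $\mathcal R$ with homogeneous Neumann condition $\nabla u_1\cdot\vnu = 0$ on $\partial\mathcal R$, whose right-hand side has zero mean by definition of $\rho_{\mathcal R}$, so it too is solvable.

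For $u_2$, since the source term is a constant and the only data is $\varphi$ supported on one side of the rectangle, I would scale $\mathcal R$ to a fixed reference rectangle of comparable proportions (the sidelengths lie in $[\frac{L}{2},\frac{3L}{2}]$, so this is a bounded family of shapes), apply the standard $L^q$ elliptic estimate / trace theory for the Neumann problem on that reference domain — controlling $\|\nabla u_2\|_{L^q}$ by $\|\varphi\|_{L^p(\partial\mathcal R)}$ using that $q \le 2p$ guarantees the Sobolev/trace embedding in two dimensions goes through — and then track the powers of $L$ that come out of undoing the scaling. The constant part $m-\rho_{\mathcal R}$ of the source can be further split off: it contributes an explicit affine-type piece (a suitable quadratic) whose gradient is bounded by $C L |m-\rho_{\mathcal R}| \le C L\|\rho - \rho_{\mathcal R}\|_{L^\infty}$, and after multiplying by $|\mathcal R|^{1/q}\sim L^{2/q}$ this gives the $L^{q+2}\|\rho-\rho_{\mathcal R}\|^q_{L^\infty}$ term in \eqref{eqnormurho}. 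The $\varphi$-part then contributes $C_{p,q} L^{2 - q/p}\|\varphi\|_{L^p(\partial\mathcal R)}^q$ after the scaling bookkeeping (the boundary has measure $\sim L$, so $\|\varphi\|_{L^p(\partial\mathcal R)}$ itself carries an $L^{1/p}$ factor under rescaling, which produces the exponent $2 - q/p$).

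For $u_1$, the homogeneous Neumann problem with $L^\infty$ source of zero mean, elliptic regularity gives $\|u_1\|_{W^{2,s}(\mathcal R)} \lesssim \|\rho - \rho_{\mathcal R}\|_{L^s(\mathcal R)}$ for any $s \in (1,\infty)$, hence $\|\nabla u_1\|_{L^q} \lesssim L^{1 + 2/q}\|\rho-\rho_{\mathcal R}\|_{L^\infty}$ after tracking the scaling (one power of $L$ from integrating the source once more than for a bounded source, plus the volume factor); this is dominated by the second term on the right of \eqref{eqnormurho}. Combining the two bounds via $|\nabla u|^q \le 2^{q-1}(|\nabla u_1|^q + |\nabla u_2|^q)$ and absorbing constants yields the claim. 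The main obstacle I anticipate is not any single estimate but the careful bookkeeping of the $L$-powers through the rescaling to the reference rectangle — in particular making sure the trace term really produces the exponent $2 - q/p$ and not something worse, and checking that the constraint $q \in [1,2p]$ is exactly what is needed for the two-dimensional trace embedding $W^{1,q}(\mathcal R) \hookrightarrow L^p(\partial\mathcal R)$ (equivalently its dual form) to hold with the right scaling.
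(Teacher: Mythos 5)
Your decomposition is essentially the paper's: the paper writes $u=u_1+u_2+u_3$, where $u_1$ handles the constant source $2\pi(m-\rho_{\mathcal R})$ with the side-averaged boundary datum $\bar\varphi$, $u_2$ is harmonic with the mean-zero datum $\varphi-\bar\varphi$, and $u_3$ handles the oscillation $2\pi(\rho_{\mathcal R}-\rho)$ with zero Neumann data; you merge the first two into a single piece and treat it by scaling plus $L^p$ Neumann trace theory rather than by quoting the explicit computation of \cite[Lemma 4.15]{sandierserfaty}. Your scaling bookkeeping for the exponent $2-\frac{q}{p}$ and your bound $\|\nabla u_1\|_{L^q}\lesssim L^{1+2/q}\|\rho-\rho_{\mathcal R}\|_{L^\infty}$ for the zero-mean, zero-Neumann piece both agree with the paper, and the range $q\le 2p$ is indeed what the two-dimensional embedding $W^{1/p,p}\hookrightarrow L^{2p}$ delivers.

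There is, however, one incorrect step: the inequality $|m-\rho_{\mathcal R}|\le \|\rho-\rho_{\mathcal R}\|_{L^\infty(\mathcal R)}$ that you use to file the constant-source contribution under the second term of \eqref{eqnormurho} is not justified and is false in general ($m$ is determined by $\varphi$ through the compatibility condition, so one can have $\rho$ constant, hence $\|\rho-\rho_{\mathcal R}\|_{L^\infty}=0$, while $m\neq\rho_{\mathcal R}$ as soon as $\int_{\partial\mathcal R}\varphi\neq 0$). The fix is immediate and is what the paper implicitly does: from $(m-\rho_{\mathcal R})|\mathcal R|=-\frac{1}{2\pi}\int_{\partial\mathcal R}\varphi$ and H\"older on the one side carrying $\varphi$ one gets $|m-\rho_{\mathcal R}|\le CL^{-1-\frac1p}\|\varphi\|_{L^p(\partial\mathcal R)}$, so the explicit quadratic piece contributes $\int_{\mathcal R}(CL|m-\rho_{\mathcal R}|)^q\le C L^{2-\frac{q}{p}}\|\varphi\|^q_{L^p(\partial\mathcal R)}$, i.e.\ it lands under the \emph{first} term of \eqref{eqnormurho}, not the second. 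With that correction your argument goes through.
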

\begin{lem}[\cite{sandierserfaty}]\label{lemsysfdelta} Let  $m$ be a positive constant. Let $\mathcal R$ be a rectangle of barycenter $0$, sidelengths in $\sqrt{\frac{1}{m}}\left[\frac{1}{2},\frac{3}{2}\right]$, and such that $m|\mathcal{R}|=1$. Then the solution to
\begin{equation}\label{eqsysfdelta}
\left\{\begin{aligned}
-\Delta f &=2\pi(\delta_0-m)& &\mbox{in}& &\mathcal{R}\\
\nabla f \cdot  \vnu&=0 & & \mbox{on}& &\partial\mathcal{R}
\end{aligned}\right.
\end{equation}
satisfies
\begin{equation}\label{eqenergyf}
\lim_{\eta\to0}\left\vb\int_{\mathcal{R}\backslash B(0,\eta)}|\nabla f|^2+2\pi\log\eta\right\vb\le C
\end{equation}
where $C$ 
 depends only on $m$, and for every $1\le q<2$
\begin{equation}\label{eqnormf}
\int_{\mathcal{R}}|\nabla f|^q\le C_q  m^{\frac{q}{2}-1},
\end{equation}
where $C_q$ depends only on $q$.
\end{lem}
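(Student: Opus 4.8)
The plan is to reduce to the normalized case $m=1$ by scaling, and then to analyze that problem by splitting off the logarithmic singularity of $f$ at the origin. First I would set $\mathcal R' := \sqrt m\,\mathcal R$, a rectangle of barycenter $0$ with sidelengths now in $[\tfrac12,\tfrac32]$ and $|\mathcal R'| = m|\mathcal R| = 1$, and put $g(y) := f(y/\sqrt m)$. A direct change of variables shows that $g$ solves $-\Delta g = 2\pi(\delta_0 - 1)$ in $\mathcal R'$ with $\nabla g\cdot\vnu = 0$ on $\partial\mathcal R'$ (the coefficient $1$ in front of $\delta_0$ being forced by integrating the equation against the constant and using $|\mathcal R'|=1$, as in the compatibility condition of Lemma~\ref{lemsysum}), and that
\[
\int_{\mathcal R}|\nabla f|^q = m^{q/2-1}\int_{\mathcal R'}|\nabla g|^q,
\]
\[
\int_{\mathcal R\setminus B(0,\eta)}|\nabla f|^2 + 2\pi\log\eta = \int_{\mathcal R'\setminus B(0,\sqrt m\,\eta)}|\nabla g|^2 + 2\pi\log(\sqrt m\,\eta) - \pi\log m.
\]
Hence \eqref{eqenergyf} and \eqref{eqnormf} follow from the corresponding statements for $g$ with \emph{universal} constants, the additive term $-\pi\log m$ accounting for the stated $m$-dependence in \eqref{eqenergyf}. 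Since $\operatorname{dist}(0,\partial\mathcal R')\ge\tfrac14$ and $\mathcal R'$ ranges over a compact family of rectangles, all constants below are universal.

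Next I would write $g = -\log|\cdot| + h$, so that $h$ solves $\Delta h = 2\pi$ in $\mathcal R'$ with Neumann datum $\psi := \nabla\log|\cdot|\cdot\vnu$ on $\partial\mathcal R'$, a smooth function with $|\psi|\le 4$ since $0$ stays at distance $\ge\tfrac14$ from $\partial\mathcal R'$. Testing the equation for $h$ against $h - \fint_{\mathcal R'}h$ and using the trace inequality together with the Poincar\'e--Wirtinger inequality on the (Lipschitz) rectangle gives $\|\nabla h\|_{L^2(\mathcal R')}\le C$; normalizing $g$ (hence $h$) to have mean zero then yields $\|h\|_{L^2(\mathcal R')}\le C$, and interior elliptic estimates on a ball around $0$ strictly inside $\mathcal R'$ give $\|h\|_{C^1}\le C$ there. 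This proves \eqref{eqnormf} at once: since $\nabla g = -x/|x|^2 + \nabla h$ and $\mathcal R'\subset B(0,2)$, for $q<2$
\[
\int_{\mathcal R'}|\nabla g|^q \le 2^{q-1}\Big(\int_{B(0,2)}|x|^{-q}\,dx + |\mathcal R'|^{1-q/2}\,\|\nabla h\|_{L^2(\mathcal R')}^q\Big) \le C_q .
\]

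For the renormalized energy in \eqref{eqenergyf} I would apply Green's formula on $D_\eta := \mathcal R'\setminus\overline{B(0,\eta)}$, on which $\Delta g = 2\pi$: the boundary integral over $\partial\mathcal R'$ vanishes by the Neumann condition, leaving
\[
\int_{D_\eta}|\nabla g|^2 = \int_{\partial B(0,\eta)} g\,\partial_n g\,d\sigma - 2\pi\int_{D_\eta} g,
\]
with $n$ the normal pointing into the ball. Inserting $g = -\log\eta + h$ and $\nabla g = -x/|x|^2 + \nabla h$, using $\Delta h = 2\pi$ (which gives $\int_{\partial B(0,\eta)}x\cdot\nabla h\,d\sigma = O(\eta^3)$) and the $C^1$ bound on $h$ near $0$, a short computation yields $\int_{\partial B(0,\eta)}g\,\partial_n g\,d\sigma = -2\pi\log\eta + 2\pi h(0) + o(1)$ as $\eta\to0$, while $\int_{D_\eta}g\to\int_{\mathcal R'}g$. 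Thus the limit in \eqref{eqenergyf} exists (and is independent of the chosen additive constant), and for the mean-zero normalization it equals $2\pi h(0)$, bounded by a universal constant by the previous paragraph. Undoing the scaling then adds $-\pi\log m$, giving the bound $C + \pi|\log m|$, which depends only on $m$.

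The only mildly delicate ingredient is the elliptic regularity for $h$ on the rectangle, but it enters only through the $L^2$ energy estimate --- which relies merely on the trace and Poincar\'e inequalities, valid on any Lipschitz domain --- and through interior estimates on a ball away from $\partial\mathcal R'$; in particular no regularity up to the corners of $\mathcal R'$ is needed, and everything else is a routine change of variables or integration by parts.
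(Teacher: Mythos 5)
Your proof is correct, and it follows the standard route (rescale to $m=1$, split off $-\log|\cdot|$, energy/trace estimates for the regular part $h$, and a renormalized Green's formula near the singularity), which is exactly the argument of the cited reference; the present paper does not reprove this lemma but imports it from \cite{sandierserfaty}. The scaling bookkeeping ($m^{q/2-1}$ for \eqref{eqnormf} and the additive $-\pi\log m$ for \eqref{eqenergyf}) and the uniformity of the trace/Poincar\'e constants over the compact family of rectangles are handled correctly, so there is nothing to fix.
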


We now give the proof of Proposition \ref{propj+j-}. The proof is obtained by arguments similar to those of \cite[Proposition 4.2]{sandierserfaty} but we need to keep more carefully track of the errors and the exponents. Moreover, we have to deal with the fact that the background is a non-constant function $\rho$.

\begin{proof}[Proof of Proposition \ref{propj+j-}]
Without loss of generality, we may assume $a=0$ and $K_\l= [-\l,\l]^2$. 
We construct $\j_-$ on $K_\l\backslash K_{t_-}$; the proof for $\j_+$ is \smallskip similar.

\noindent\textbf{Step 1.} We create a layer of width $\l^\alpha$ on which we connect to a zero boundary data.
\\
  Let $\alpha\in\left(\frac{2-\gamma}{1+p},\frac{p-2+\gamma}{p-1}\right)$; such an $\alpha$ exists because $\gamma>\frac{3-p}{2}$. Since $p<2$ and $\gamma<1$, we have $\alpha<\gamma$.
We start by building a strip near the lower part of $\p K_\l$ which is compatible with the desired boundary data.
For $t>0$, denote $S_t= [-\l,\l]\times [-\l, -\l + t]$.  We claim there exists $t\in [\hal \l^\alpha,\l^\alpha]$ such that 
\begin{equation}\label{claimf}
f(t):=\int_{S_t} \ro(x)\, dx+ \frac{1}{2\pi} \int_{\p K_\l \cap \p S_t}\vp \in \mn^*.
\end{equation}
By H\"older's inequality and \eqref{h2boundary}, we have, for any $t \in [\hal \l^\alpha, \l^\alpha]$, 
$$\left|   \int_{\p K_\l \cap \p S_t}\vp \right|\le M^{\frac{1}{p}}\l^{\frac{2-\gamma}{p}}\l^{1-\frac{1}{p}}.$$
Therefore,  
$$f(\hal \l^\alpha)= \int_{S_{\hal \l^\alpha}}\ro(x)\, dx + \frac{1}{2\pi} \int_{\p K_\l \cap \p S_{\hal \l^\alpha}}\vp 
\ge \underline  \ro  \l^{1+\alpha} -  \frac{1}{2\pi} M^{\frac{1}{p}}\l^{1+ \frac{1-\gamma}{p}}\ge 0 $$
for $\l$ large enough (depending on $p, \gamma, M, \underline{\ro}$), since $\alpha >\frac{1-\gamma}{p}$. 
It follows that 
\begin{equation*}f(t) \ge f(\hal \l^\alpha) +  \underline \ro  (t- \hal \l^\alpha) \l -    M^{\frac{1}{p}}\l^{1+ \frac{1-\gamma}{p}}
\\ \ge  \underline \ro  (t- \hal \l^\alpha) \l - \frac{1}{2\pi}   M^{\frac{1}{p}}\l^{1+ \frac{1-\gamma}{p}}.
\end{equation*}
Thus $f(\l^\alpha)>1$ again for $\l $ large enough.
By a mean-value argument, since $f(t)$ is continuous, we find that there exists $t \in [\hal\l^\alpha, \l^\alpha]$ for which \eqref{claimf} holds.

Let $t$ be that value. We next split the strip $S_t$ into a finite number of rectangles $\mathcal R_i$ of width $\in [\hal \l^\alpha, \l^\alpha]$.
This follows the same reasoning: we claim there exists $s\in [\hal \l^\alpha, \l^\alpha]$ such that 
\begin{equation}\label{claimh}
 h(s):= \int_{[-\l, -\l + s]\times [-\l, -\l +t]} \ro(x)\, dx+ \frac{1}{2\pi} \int_{\p K_\l \cap \p ([-\l, -\l + s]\times [-\l, -\l +t]) }\vp \in \mn^*.
\end{equation}
By H\"older's inequality and \eqref{h2boundary} we have 
$$h(\hal \l^\alpha) \ge  \frac{1}{4}\underline \ro \l^{2\alpha} - C M^{\frac{1}{p}}\l^{\frac{2-\gamma}{p}} \l^{\alpha(1-\frac{1}{p})} \ge 0,$$
when $\ell $ is large enough. For that it suffices to check that $\alpha\left(1-\frac{1}{p}\right)+\frac{2-\gamma}{p}<2\alpha$, which is true by choice of $\alpha$.
Arguing in the same way as above, we find that $h(\l^\alpha)- h(\hal \l^\alpha)>1$ and thus by a mean value argument there exists $s\in [\hal \l^\alpha, \l^\alpha]$ such that \eqref{claimh} holds.
We define the first rectangle $\mathcal R_1$ to be $[ -\l, -\l + s]\times [-\l, -\l +t]$. We may then iterate this reasoning to build a rectangle $\mathcal R_2 $ of the form 
$[-\l +s, \l+ s+r] \times [-\l, \l +t]$ for some $r \in [\hal \l^\alpha, \l^\alpha]$, etc, until the whole strip $S_t$ is exhausted. If the last rectangle is too narrow, we may merge it with the one before last, and this ensures a collection $\{\mathcal R_i\}$ of rectangles of sidelengths in $[\hal \l^\alpha, \frac{3}{2}\l^\alpha]$ partitioning $S_t$  and such that 
\begin{equation}\label{quantirect}
 \int_{\mathcal R_i } \ro(x)\, dx+ \frac{1}{2\pi} \int_{\p K_\l \cap \p\mathcal R_i }\vp \in \mn^*.\end{equation}

The construction can then be repeated near the other three sides of $K_\l$: we may find three disjoint strips in $K_\l\backslash S_t$, which we can each again split into rectangles $\mathcal R_i$ on which \eqref{quantirect} holds. 
We still denote by $\{\mathcal R_i\}$ the total collection of rectangles and we observe that  $K_\l\backslash \cup_i \mathcal R_i$ is a \smallskip rectangle.

\noindent{\bf Step 2.} We define $\j_-$ in each $\mathcal R_i$.
We let $\varphi_i$ denotes the restriction of $\varphi$ to $\p \mathcal R_i \cap \p K_\l$ extended by $0$ on the rest of $\partial\mathcal{R}_i$.
The condition \eqref{quantirect}  means that 
  \begin{equation}\label{eqcondRi}
\int_{\mathcal{R}_i} \rho(x)\,dx+\frac{1}{2\pi}\int_{\partial \mathcal R_i}\varphi_i\in \NN.
\end{equation}
We also let 
\begin{equation}\label{defrho_i}
\tilde \rho_i= \rho_{\mathcal R_i}+\frac{\int_{\partial \mathcal R_i}\varphi_i}{2\pi |\mathcal{R}_i|}
\end{equation}
where we denote, for any rectangle $\mathcal R$, $\rho_{\mathcal R}=\fint_{\mathcal R}\rho(x)\,dx$. Using H\"older's inequality and \eqref{h2boundary} again, we have
\begin{align*}
\vb \tilde \rho_i-\rho_{\mathcal R_i} \vb\le C \l^{\alpha\left(-1-\frac{1}{p}\right)}\left(\int_{\partial K_\l}\vb \varphi\vb^p\right)^{1/p}\le C_M \l^{\alpha(1-\frac{1}{p} ) + \frac{2-\gamma}{p} }.    
\end{align*}
Since $\alpha>\frac{2-\gamma}{1+p}$, we deduce that $\vb\tilde  \rho_i- \rho_{\mathcal R_i}  \vb=o(1)_{\l\to +\infty}$.

 By  (\ref{eqcondRi}) and \eqref{defrho_i}, we have $ \tilde \rho_i\vb\mathcal R_i\vb\in\mathbb N$, and since $\tilde \rho_i$ is equivalent to $\rho_{\mathcal R_i}$, this integer number belongs to $[\hal \underline \rho |\mathcal R_i|, 2\overline\ro |\mathcal R_i|]$.  We may then partition $\mathcal R_i$ into $ \tilde \rho_i\vb\mathcal R_i\vb$  rectangles $\mathcal R_{ik}$, whose sidelengths are in $\sqrt{\frac{1}{\tilde \rho_i}}\left[\frac{1}{2},\frac{3}{2}\right]$ and such that for each $i,k$, we have $\tilde \rho_i\vb\mathcal R_{ik}\vb=1$. 
 
 On each of these rectangles, we apply Lemma \ref{lemsysfdelta} with $m=\tilde\rho_i$, this yields a function $f_{ik}$ satisfying (\ref{eqenergyf}) and (\ref{eqnormf}). We then define the vector field $\j_1$ in $\mathcal G:=\cup_i\mathcal R_i$ by $\j_1=-\nabla f_{ik}$ in each $\mathcal R_{ik}$. Since no divergence is created at the interface between the $\mathcal R_{ik}$ (because $\nab f_{ik} \cdot \vnu=0$), we obtain
\begin{equation}\label{eqsysj1}
\left\{
\begin{aligned}
&\di \j_1=2\pi\Big(\sum_{p\in \Gamma}\delta_p-\sum_i \tilde \rho_i \1_{\mathcal R_i}\Big)&&\mbox{in }\mathcal G\\
&\j_1\cdot\vnu=0&&\mbox{on }\partial \mathcal G
\end{aligned}
\right.
\end{equation}
where $\Gamma$ is the union over $i,k$ of the centers of the rectangles $\mathcal R_{ik}$.
Moreover, by (\ref{eqenergyf}) and (\ref{eqnormf}), $\j_1$ satisfies
\begin{equation}\label{eqenergyj1}
\lim_{\eta\to0}\left\vb\frac{1}{2}\int_{\mathcal{G}\backslash \cup B(p,\eta)}|\j_1|^2+\pi\#\Gamma\log\eta\right\vb\le C \l^{1+\alpha}
\end{equation}
and for $q<2$
\begin{equation}\label{eqnormj1}
\int_{\mathcal{G}}|\j_1|^q\le C_q \l^{1+\alpha},
\end{equation}
since the number of $\mathcal R_{ik}$ is of order $\vb\mathcal R_i\vb=O(\l^{2\alpha})$ for each $i$, and the number of rectangles $\mathcal R_{i}$ is $O(\l^{1-\alpha})$. The constant $C$ depends only on $p,\gamma, M, \underline \rho$ and $\overline \rho$; while $C_q$ depends also on $q$.

Next, since $( \tilde \rho_i- \rho_{\mathcal R_i})\vb\mathcal R_i\vb=\frac{1}{2\pi}\int_{\partial \mathcal R_i}\varphi_i$, we may apply Lemma \ref{lemsysum} in each $\mathcal R_i$ with $\tilde g=-\varphi_i$ for boundary data. Then we define the vector field $\j_2=-\nabla u_i$ which satisfies
\begin{equation}\label{eqsysj2}
\left\{
\begin{aligned}
&\di \j_2=2\pi \Big(\sum_i \tilde \rho_i \1_{\mathcal R_i}-  \rho\Big)&&\mbox{in }\mathcal G\\
&\j_2\cdot\vnu=-\tilde g&&\mbox{on }\partial \mathcal G
\end{aligned}
\right.
\end{equation}
(again no divergence is created at the interface).
We recall that $\tilde g=-\varphi$ on $\partial K_\l$ and $0$ on the rest of $\partial\mathcal G$. Moreover, for every $q\in[1,2p]$, we have
\begin{equation*}
\int_{\mathcal{R}_i}|\j_2|^q\le C_{p,q}\l^{\alpha\left({2-\frac{q}{p}}\right)}\|\varphi_i\|^q_{L^p(\partial\mathcal{R}_i)}+C\l^{\alpha(q+2)}\norm \rho-\rho_{\mathcal{R}_i}\norm_{L^{\infty}(\mathcal R_i)}^q.
\end{equation*}
Adding these relations, we obtain
\begin{align*}
\int_{\mathcal{G}}|\j_2|^q\le& C_{p,q}\l^{\alpha\left({2-\frac{q}{p}}\right)}\left(\sum_i\|\varphi_i\|^q_{L^p(\partial\mathcal{R}_i)}\right)+C\l^{\alpha(q+2)}\left(\sum_i \norm \rho-\rho_{\mathcal{R}_i}\norm_{L^{\infty}(\mathcal R_i)}^q \right).
\end{align*}
Since $\frac{q}{p}>1$ and the number of rectangles in $\mathcal G$ is $O(\l^{1-\alpha})$, we deduce
\begin{equation*}
\int_{\mathcal{G}}|\j_2|^q\le C_{p,q}\l^{\alpha\left({2-\frac{q}{p}}\right)}\left(\int_{\partial K_\l}\vb \varphi\vb^p +\l^{1-\alpha}\right)^{\frac{q}{p}}+C\l^{\alpha(q+2)}\left(\sum_i \norm \rho-\rho_{\mathcal{R}_i}\norm_{L^{\infty}(\mathcal R_i)}^q \right).
\end{equation*}
Then, using the hypothesis (\ref{h2boundary}) and the fact that $\norm \rho-\rho_{\mathcal{R}_i}\norm_{L^{\infty}(\mathcal R_i)}\le C\norm \rho\norm_{\mathcal C^{0,\lambda}}\l^{\alpha\lambda}$, we have
\begin{equation}\label{eqnormj2}
\int_{\mathcal{G}}|\j_2|^q\le C_{p,q}\l^{\alpha\left({2-\frac{q}{p}}\right)+\frac{q}{p}(2-\gamma)}+C\l^{\alpha q+\alpha +1}\norm \rho\norm_{\mathcal C^{0,\lambda}}^q\l^{\alpha q\lambda }
\end{equation}
with $C_{p,q}$ a positive constant that depends on $p,q$ and $M$, and $C$ a universal constant. Moreover for all $q<\frac{2p(1-\alpha)}{2-\alpha-\gamma}$, we have $\alpha\left({2-\frac{q}{p}}\right)+\frac{q}{p}(2-\gamma)<2$. We remark that since $\alpha\in\left(\frac{2-\gamma}{1+p},\frac{p-2+\gamma}{p-1}\right)$, we have $\frac{2p(1-\alpha)}{2-\alpha-\gamma}>2$. Thus, we may find some $q>2$ such that
\begin{equation}\label{eqnormj2bis}
\int_{\mathcal{G}}|\j_2|^q\le C_{p,q}\l^{\sigma}+C\l^{\alpha q+\alpha +1}\left(\norm \rho\norm_{\mathcal C^{0,\lambda}}\l^{\alpha\lambda}\right)^q
\end{equation}
for some $\sigma<2$.

Finally, we define $\j_-=\j_1+\j_2$ and $\nu=\sum_{p\in \Gamma}\delta_p$ on $\mathcal{G}$. In view of \eqref{eqsysj1} and \eqref{eqsysj2}, the vector field $\j_-$ satisfies
\begin{equation}\label{eqsysj}
\left\{
\begin{aligned}
&\di \j_-=2\pi (\nu-\rho)&&\mbox{in }\mathcal G\\
&\j_-\cdot\vnu=\varphi&&\mbox{on }\partial K_\l\\
&\j_-\cdot\vnu=0&&\mbox{on }\partial \mathcal G\backslash\partial K_\l
\end{aligned}
\right..
\end{equation}
and
\begin{equation}\label{eqnorm2j}
\int_{\mathcal{G}\backslash \cup B(p,\eta)}|\j_-|^2=\int_{\mathcal{G}\backslash \cup B(p,\eta)}|\j_1|^2+\vb \j_2\vb^2+2 \j_1\cdot \j_2.
\end{equation}
Using Lemma \ref{lemsumvf} combined with \eqref{eqenergyj1}, \eqref{eqnormj1}, \eqref{eqnormj2} and \eqref{eqnormj2bis}, we obtain
\begin{align*}
		W(E_-,\1_\mathcal G)\le&\, W(E_1,\1_\mathcal G)+\frac{1}{2}\norm E_2\norm^2_{L^2(\mathcal G)}+\norm E_1\norm_{L^{q'}(\mathcal G)}\norm E_2\norm_{L^{q}(\mathcal G)}\\
		\le&\, C(\l^{1+\alpha}+\l^{\alpha\left({2-\frac{2}{p}}\right)+\frac{2}{p}(2-\gamma)}+ \left(\norm \rho\norm_{C^{0,\lambda}(\mathcal G )}\l^{\alpha\lambda}\right)^2 \l^{1+3\alpha})\\
		&+ C \l^{\frac{1+\alpha}{q'}}{\left(\l^{\frac{\sigma}{q}}+\l^{\frac{1+\alpha}{q}+\alpha}\norm \rho\norm_{\mathcal C^{0,\lambda}}\l^{\alpha\lambda}\right)}\\
		\le&\,  C\left( \l^{1+\alpha}+\l^{\alpha\left({2-\frac{2}{p}}\right)+\frac{2}{p}(2-\gamma)}+\l^{\frac{1+\alpha}{q'}+\frac{\sigma}{q}}\right)\\
		&+C \left(\norm \rho\norm_{C^{0,\lambda}(\mathcal G )}\l^{\alpha\lambda}\right)^2 \l^{1+3\alpha}+C \left(\norm \rho\norm_{C^{0,\lambda}(\mathcal G )}\l^{\alpha\lambda}\right)\l^{1+2\alpha}\nonumber\\
\le&\, C\l^{1+\beta_1}+C\l^2 \left(\left(\norm \rho\norm_{C^{0,\lambda}(\mathcal G)}\l^{\alpha\lambda+{2\alpha-1}}\right)+\left(\norm \rho\norm_{C^{0,\lambda}(\mathcal G)}\l^{\alpha\lambda+\frac{3\alpha-1}{2}}\right)^2\right)\nonumber
\end{align*}
for some $\beta_1\in(0,1)$. Indeed, $$\alpha\left({2-\frac{2}{p}}\right)+\frac{2}{p}(2-\gamma)<2$$ and $$\frac{1+\alpha}{q'}+\frac{1}{q}\sigma<2\left(\frac{1}{q'}+\frac{1}{q}\right)=2.$$ {Moreover, if we choose $\alpha<\frac{2\lambda+1}{2\lambda+3}$, we obtain}
\begin{align}\label{eqenergyj}
W(E_-,\1_\mathcal G)
\le\, C\l^{1+\beta_1}+C\l^2 \left(\left(\norm \rho\norm_{C^{0,\lambda}(\mathcal G)}\l^{\beta_2\lambda}\right)+\left(\norm \rho\norm_{C^{0,\lambda}(\mathcal G)}\l^{\beta_3\lambda}\right)^2\right)
\end{align}
with $\beta_1,\beta_2,\beta_3\in (0,1)$. Such an $\alpha$ exists since $\frac{2\lambda+1}{2\lambda+3}\ge\frac{1}{2}>\frac{2-\gamma}{1+p}$ whenever $\lambda\ge\frac{1}{2}$. To summarize $\alpha$ must be chosen in the interval $\left(\frac{2-\gamma}{1+p},\delta\right)$ where $\delta=\min \left\{\frac{p-2+\gamma}{p-1}, \frac{2\lambda+1}{2\lambda+3} \right\}$. \smallskip

\noindent\textbf{Step 3.} There remains to define $t_-$ and  extend  $\j_-$ to $\mathcal D:= K_\l \backslash (\mathcal G\cup K_{t_-})$.
 We proceed as follows.
First of all, by a mean value argument we remark that if $\l$ is sufficiently large, there exists $t_-\in [\l-2\l^\alpha,\l-\l^\alpha]$ such that $\int_{\mathcal D} \rho(x)\,dx\in \NN$. 
 We then need to  partition $\mathcal D$ into rectangles $\mathcal D_i$ over which $\int_{\mathcal D_i} \ro $ is an integer. To do this, we repeat essentially the same as in Step 1, except we no longer have to deal with nonzero boundary conditions.
First, starting from the edges, we split 
  $\mathcal D$ into strips of width $\in [\hal\l^\a, \l^\a]$ on which $\int \ro$ is an integer, then we split again each strip into rectangles of sidelengths in $[\hal \l^\a, \l^\a]$ on which $\int \ro$ is an integer. This exhausts $\mathcal D$ since $\int_{\mathcal D} \ro\in \mn$, however because of the corners, some of the cells need to be ``L-shaped" polygons instead of rectangles (this doesn't cause any serious problem however).
  Once the $\mathcal D_i$ are thus constructed, we proceed to construct $\j_-$ in each $\mathcal D_i$, operating exactly as in $\mathcal R_i$ in  Step 1 and Step 2 (except replacing $\vp_i$ by $0$ everywhere) (i.e. splitting each $\mathcal D_i$ into many rectangles of size $O(1)$). We obtain  a vector field $\j_-$ in $\mathcal D$, with $\tilde \Gamma$ the corresponding set of points, satisfying an upper bound for $W(\j_-, \1_{\mathcal D})$ which is at least as good as \eqref{eqenergyj}.
  Combining this with the results of Steps 1 and 2, we obtain a vector field $\j_-$ satisfying the desired properties \eqref{defj-} and \eqref{energyj-}. We 
  also easily check that $\Lambda_-$ satisfies the other desired properties: its cardinal is bounded by the volume concerned i.e. $O(\l^{1+\alpha})$ and its points are separated by a fixed distance (depending only on $\overline \rho$), by \smallskip construction.

\noindent {\bf Step 4.}  There remains to prove \eqref{energyj-chi}. This follows the proof of \cite[Proposition 3.1]{sandierserfaty}.
First, we note that by definition of $W$ and  since the points in $\Lambda_-$ are separated by a fixed distance, say $4r_0$,
\begin{equation}\label{}
\hal \int_{\mathcal K^a_-\backslash \cup_{p \in \Lambda_-} B(p, r_0) } |\j_-|^2 \le W(\j_-, \1_{\mathcal K^a_-}) + \pi \# \Lambda_- \log \frac{1}{r_0} +C
\end{equation}
On the other hand 
\begin{multline}\label{mnb1}
W(\j_-, \chi)= \lim_{\eta\to 0}  \hal \int_{\mathcal K^a_-\backslash \cup_{p \in \Lambda_-} B(p, \eta) } |\j_-|^2 \chi+ \pi \sum_{p\in \Lambda_-} \chi(p) \log \eta
\\ 
\le  \hal  \int_{\mathcal K^a_-\backslash \cup_{p \in \Lambda_-} B(p, r_0) } |\j_-|^2+
\lim_{\eta\to 0} \sum_{p \in \Lambda_-} \(\hal \int_{B(p, r_0) \backslash B(p, \eta)} |\j_-|^2 \chi 
+  \pi  \chi(p) \log \eta\) \\
\le W(\j_-, \1_{\mathcal K^a_-}) + \pi \# \Lambda_- \log \frac{1}{r_0} +C
+ \lim_{\eta\to 0} \sum_{p \in \Lambda_-} \chi(p) 
\(\hal \int_{B(p, r_0) \backslash B(p, \eta)}|\j_-|^2 + \pi \log \eta\)\\ + \lim_{\eta\to 0} \sum_{p \in \Lambda_-} \hal\int_{B(p, r_0) \backslash B(p, \eta)} |\j_-|^2 (\chi - \chi(p)).\end{multline}
We may also bound 
$\sum_{p \in \Lambda_-} 
\( \int_{B(p, r_0) \backslash B(p, \eta)}\hal |\j_-|^2 + \pi \log \eta\)$ by $W(\j_-,\1_{\mathcal K^a_-})$.
It  remains to control the last term in the right-hand side of \eqref{mnb1}. Let us define 
$\Phi(t)= \hal \int_{\mathcal K^a_-\backslash \cup_{p \in \Lambda_-} B(p, t) } |\j_-|^2.$
We have $\Phi(t) \le W(\j_-,\1_{\mathcal K^a_-}) + \pi \# \Lambda_- \log \frac{1}{t}+C,$ and 
$\Phi'(t)= - \sum_{p \in \Lambda_-} \int_{\p B(p, t) }|\j_-|^2 $.
On the other hand, since $\chi$ is Lipschitz, we have 
\begin{multline*}
\sum_{p \in \Lambda_-} \int_{B(p, r_0) \backslash B(p, \eta)} |\j_-|^2 (\chi - \chi(p)) \le 
C \sum_{p \in \Lambda_-} \int_{B(p, r_0) \backslash B(p, \eta)} |\j_-|^2 |x-p|
\\ = - C \int_{\eta}^{r_0} \Phi'(t) t\, dt\\
= - C \( \Phi(r_0) r_0- \Phi(\eta) \eta+ \int_{\eta}^{r_0} \Phi(t)\, dt\) \le C W(\j_-,\1_{\mathcal K^a_-})+ C \# \Lambda_-+o_\eta(1),\end{multline*}
where $C$ depends only on $r_0$.
Inserting into \eqref{mnb1} we obtain 
$$W(\j_-, \chi)\le C W(\j_-, \1_{\mathcal K^a_-}) + C \# \Lambda_-$$
and combining with \eqref{energyj-} and \eqref{eqnumpointj-}, we obtain the result \eqref{energyj-chi}

\end{proof}

\subsection{Proof of Proposition \ref{existencemin}} Let $\{E_n\}_n$ be a minimizing sequence for $$\inf_{\mathcal B_{\rho,\varphi}(K_L(a))}W(E,\1_{K_L(a)}),$$ and let $\Lambda_n$ be the associated set of points, and $\nu_n= \sum_{p\in \Lambda_n} \delta_p$. 
In view of the constitutive relation \eqref{eqdefBrhoOmega}, the boundary data $\varphi$ and the density $\rho$ completely determine the number of points, so we have that $\{\nu_n\}_n$ is a  bounded  family of measures.
Up to extraction of a subsequence, we may write that $\nu_n$ converges weakly to some $\nu$ of the form $\sum_{p\in \Lambda} d_p \delta_p$ for some finite  set of points $\Lambda$ and some positive integers $d_p$.
\smallskip 

 { Next we show that, up to another extraction, $E_n$ converges in the sense of distributions to some $E$.
 Let $X$ be a smooth vector field vanishing  on $\partial K_L(a)$. We may write a Helmoltz decomposition $X=\nab \zeta+ \nab^\perp \xi$, with $\xi=0$ on $\partial K_L(a)$, and one may check that $\|\zeta\|_{L^\infty}$ is controlled by, say, the $C^1$ norm of $X$. Integrating by parts and using  the fact that $E_n\in B_{\rho,\varphi}(K_L(a))$, we have 
 \begin{multline*}
 \int_{K_L(a)} E_n \cdot X= \int_{K_L(a)} E_n \cdot (\nab \zeta+ \nab^\perp  \xi) = \int_{\partial K_L(a)} \varphi \zeta-2\pi \int_{K_L(a)} (\nu_n-\rho) \zeta  \\
 \le C \|\zeta\|_{L^\infty} \le C \|X\|_{C^1}.\end{multline*} 
 It thus follows that $E_n$ is bounded in the sense of distributions in $K_L(a)$ and converges, up to extraction, to some $E$, which must satisfy 
 $$\div E= 2\pi (\nu-\ro)\qquad \curl E=0.$$
 Letting $K_\eta= K_L(a) \backslash  \cup_{p\in \Lambda} B(p, \eta)$, we have that  $\div (E-E_n)= \curl (E-E_n)= 0$ in $K_\eta$ for $n$ large enough depending on $\eta$ (such that  $\Lambda_n \cap K_\eta= \emptyset$).
 Elliptic regularity then implies that 
 the convergence of $\j_n$  is in fact uniform in $K_{2\eta}$, i.e.  away from $\Lambda$.}
\smallskip 

We next claim that $\Lambda \cap \p K_L(a)=\emptyset$. If not there would be some point $p_n\in \Lambda_n$ such that $\dist(p_n, \p K_L(a))\to 0$. Up to a change  of coordinates, we may assume that $p_n \to 0\in \p K_L(a)$ and that the segment $[0, \ep_0]\times \{0\}$ is included in $\p K_L(a)$ and contains no other point of $\Lambda$. 
An expansion of $\j_n$ near $p_n$ shows that 
\begin{equation}\label{vpbord1}
\j_n(x)= \frac{x-p_n}{|x-p_n|^2} + f_n(x)\end{equation} hence
\begin{equation}\label{vpbord}
\vp = \j_n \cdot \vnu= \frac{(x-p_n) \cdot \vnu}{|x-p_n|^2} + f_n(x)\cdot \vnu \quad \text{on } [0,\ep_0]\times \{0\},\end{equation}
where $f_n$ converges uniformly on $[0,\ep_0]\times \{0\}$.
On the other hand, computing explicitly, we have, for any $\ep<\ep_0$, 
$$\int_{[0,\ep]\times \{0\}} \left| \frac{(x-p_n) \cdot \vnu}{|x-p_n|^2}\right|\, dx= 
\int_0^\ep\frac{p_n^{(2)}}{t^2 + p_n^{(2)}} \, dt= \mathrm{Arctg} \, \frac{\ep}{p_n^{(2)}}$$
where $p_n^{(2)}$ is the second coordinate of $p_n$ and tends to $0$.
Since $\vp\in L^p(\p K_L(a))$ and $f_n$ converges uniformly, inserting into \eqref{vpbord} we must have 
$\mathrm{Arctg} \, \frac{\ep}{p_n^{(2)}}\le Co_\ep(1)$ as $\ep \to 0$,  where $C$ is independent of $n$ and $\ep$. But for any fixed $\ep$, $\mathrm{Arctg} \, \frac{\ep}{p_n^{(2)}}\to \frac{\pi}{2}$ as $n \to \infty$, a contradiction. It follows that the claim holds.
From this we deduce two facts: first    $\j_n$ converges uniformly in a neighborhood of $\p K_L(a)$, therefore $\j$ satisfies $\j\cdot \vnu = \vp $ on $\p K_L(a)$ as well.  { Second, since $\Lambda\cap K_L(a) = \emptyset$ and $\Lambda$ is a finite set of points,  the points in  $\Lambda_n$ stay at distance from $\p K_L(a)$ which is bounded below,  hence  the argument in the proof of  \cite[Lemma 4.8]{sandierserfaty} allows to show that $d_p=1$ for every $p\in \Lambda$, i.e. there are no multiple points. We may thus conclude that $\j$   belongs to $\mathcal B_{\rho,\varphi}(K_L(a))$. }
\smallskip

 To complete the proof there  remains to show that
\begin{equation*}
	\lim_{n\to+\infty} W(\j_n,\1_{K_L(a)})=W(\j,\1_{K_L(a)}).
\end{equation*}
But using again  that the points in  $\Lambda_n$ stay at distance from $\p K_L(a)$ which is bounded below, this follows easily by the argument of \cite{sandierserfaty}, end of the proof of Lemma 4.8: since $\j_n \to \j$ uniformly away from $\Lambda$, we have  for any $\eta>0$,
$$\lim_{n\to \infty} \frac12\int_{K_L(a)\sm \cup_{p\in \Lambda_n} B(p, \eta)} |\j_n|^2 +\pi  \# \Lambda_n \log \eta = \frac12\int_{K_L(a)\sm \cup_{p\in \Lambda} B(p, \eta)} |\j|^2 +\pi  \# \Lambda \log \eta$$
and the result will follow if the convergence is uniform with respect to $\eta$ (so that we may reverse the $\eta\to 0$ and $n\to \infty$ limits).
This follows, as in \cite{sandierserfaty} from the expansion \eqref{vpbord1} with $f_n$ uniformly bounded near each $p_n$, implying
$$\left|\hal \int_{B(p_n, \eta_2)\sm B(p_n,\eta_1)}\vb \j_n\vb^2+ \pi \log \frac{\eta_1}{\eta_2}\right|\le C \sqrt{\eta_2}$$
thus proving the uniformity of the convergence.

We note that the existence of a minimizer of $W(\j, \mathbb{T}_L)$ on $\mathcal A_{m,per}(\mathbb{T}_L)$ can be proven with the same arguments.


\subsection{Proof of Proposition \ref{proplimmincon}}
With the help of Proposition \ref{propj+j-}, we have a first easy comparison between the quantities $\sigma_0(\cdot;\rho)$ and $\sigma_\varphi(\cdot;\rho)$.

\begin{lem}\label{corestimationmin} Under the hypotheses of Proposition \ref{propj+j-}, there exist $c,C$ positive constants and $\beta\in(0,1)$ such that for all $\l\ge c$, there exist
$t_+\in[\l+\l^\a,l+2l^\a]$ and $t_-\in[\l-2\l^\a,\l-\l^\a]$ such that
\begin{align}\label{eqestimationminlow}
\sigma_0(K_{t_+}(a);\rho)-C  \left(\frac{1}{\l^{1-\beta}}+\left(\norm \rho\norm_{C^{0,\lambda}(\mathcal K^a_+)}\l^{\beta\lambda}\right)\left(1+\norm \rho\norm_{C^{0,\lambda}(\mathcal K^a_+)}\l^{\beta\lambda}\right)\right)\le \sigma_\varphi(K_\l(a);\rho),\\
\label{eqestimationminup}
\sigma_\varphi(K_\l(a);\rho)\le \sigma_0(K_{t_-}(a);\rho)+ C  \left(\frac{1}{\l^{1-\beta}}+\left(\norm \rho\norm_{C^{0,\lambda}(\mathcal K^a_-)}\l^{\beta\lambda}\right)\left(1+\norm \rho\norm_{C^{0,\lambda}(\mathcal K^a_-)}\l^{\beta\lambda}\right)\right),
\end{align}
where $\mathcal K_+^a=K_{t_+}(a)\backslash K_\l(a)$ and $\mathcal K_-^a=K_{\l}(a)\backslash K_{t_-}(a)$.
The  constants $c,C$ and $\beta<1$ depend only on $p$, $\gamma$, $M$, $\underline \rho$ and $\overline \rho$.
\end{lem}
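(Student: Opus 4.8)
The idea is that Proposition \ref{propj+j-} is precisely the tool that converts a configuration with boundary data $\varphi$ on $\partial K_\l(a)$ into one with zero boundary data on a slightly larger or slightly smaller square, at a controlled energy cost, and vice versa. So the two inequalities \eqref{eqestimationminlow} and \eqref{eqestimationminup} should follow by gluing constructions. I would first fix $p,\gamma,\lambda,M$ and let $\alpha,\beta$ be the exponents produced by Proposition \ref{propj+j-}; in fact I expect $\beta$ here to be chosen so that the bulk error terms $\l^{1+\beta}$ and $\l^2(\norm\rho\norm_{C^{0,\lambda}}\l^{\beta\lambda})(1+\cdots)$ in \eqref{energyj-}, \eqref{energyj+} (and their $\chi$-versions \eqref{energyj-chi}, \eqref{energyj+chi}) divided by $\vb K_\l(a)\vb\sim\l^2$ give exactly the error $\frac{1}{\l^{1-\beta}}+(\norm\rho\norm_{C^{0,\lambda}}\l^{\beta\lambda})(1+\norm\rho\norm_{C^{0,\lambda}}\l^{\beta\lambda})$ appearing in the statement. (The $\l^{1+\a}$ terms are absorbed since $\a<\gamma<1$, so $\l^{1+\a}/\l^2=\l^{\a-1}\to 0$ and is dominated by $\l^{\beta-1}$ after possibly enlarging $\beta$.)

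For the upper bound \eqref{eqestimationminup}: take a near-minimizer $\j_0$ for $\sigma_0(K_{t_-}(a);\rho)$, where $t_-\in[\l-2\l^\a,\l-\l^\a]$ is the value furnished by part \ref{itemj-} of Proposition \ref{propj+j-} applied with the given $\varphi$; note $\j_0\cdot\vnu=0$ on $\partial K_{t_-}(a)$ by definition of $\sigma_0$. Glue $\j_0$ on $K_{t_-}(a)$ with the field $\j_-$ from Proposition \ref{propj+j-}\ref{itemj-} on the annulus $\mathcal K_-^a=K_\l(a)\setminus K_{t_-}(a)$: since both have zero normal trace on $\partial K_{t_-}(a)$, no extra divergence is created across the interface, and the resulting field $\j$ lies in $\mathcal B_{\rho,\varphi}(K_\l(a))$. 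Then $W(\j,\1_{K_\l(a)})=W(\j_0,\1_{K_{t_-}(a)})+W(\j_-,\1_{\mathcal K_-^a})$ (the energy is additive over disjoint regions, as the cross terms vanish by the matching normal-trace condition — or simply because the regions are disjoint and $\chi=\1$ splits). Dividing by $\vb K_\l(a)\vb$, using $W(\j_0,\1_{K_{t_-}(a)})=\vb K_{t_-}(a)\vb\,\sigma_0(K_{t_-}(a);\rho)\le\vb K_\l(a)\vb\,\sigma_0(K_{t_-}(a);\rho)$ when $\sigma_0(K_{t_-}(a);\rho)\ge 0$ (and a separate elementary bound, or an extra absorbed error term, when it is negative — $\sigma_0$ is bounded below so $\vb K_\l(a)\vb-\vb K_{t_-}(a)\vb=O(\l^{1+\a})$ contributes only a lower-order term), and bounding $W(\j_-,\1_{\mathcal K_-^a})/\vb K_\l(a)\vb$ by \eqref{energyj-} gives exactly \eqref{eqestimationminup}. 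For the lower bound \eqref{eqestimationminlow}: run the symmetric argument outward. Let $\j_\varphi$ be a near-minimizer for $\sigma_\varphi(K_\l(a);\rho)$ (which exists by Proposition \ref{existencemin}), glue it with $\j_+$ from part \ref{itemj+} on $\mathcal K_+^a=K_{t_+}(a)\setminus K_\l(a)$ — they match with trace $\varphi$ on $\partial K_\l(a)$ and $\j_+$ has zero trace on $\partial K_{t_+}(a)$ — to obtain an admissible competitor for $\sigma_0(K_{t_+}(a);\rho)$. Hence $\vb K_{t_+}(a)\vb\,\sigma_0(K_{t_+}(a);\rho)\le W(\j_\varphi,\1_{K_\l(a)})+W(\j_+,\1_{\mathcal K_+^a})$, and rearranging, dividing by $\vb K_\l(a)\vb$ and using \eqref{energyj+} together with $\vb K_{t_+}(a)\vb\ge\vb K_\l(a)\vb$ yields \eqref{eqestimationminlow} after absorbing the $O(\l^{1+\a})$ volume discrepancy into the error.

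The main obstacle is purely bookkeeping rather than conceptual: one must be careful that (i) Proposition \ref{propj+j-} requires $\varphi$ to satisfy \eqref{h2boundary} on $\partial K_\l(a)$, which is exactly the standing hypothesis, and that the Hölder regularity of $\rho$ holds on the relevant thin annuli $\mathcal K_\pm^a$ — which it does since $\rho\in C^{0,\lambda}$ on all of $K_L$; (ii) the sign issue when $\sigma_0(K_{t_\pm}(a);\rho)$ may be negative, handled by the lower bound on $W$ (boundedness below of the renormalized energy) so that replacing $\vb K_{t_\pm}(a)\vb$ by $\vb K_\l(a)\vb$ costs only $|\vb K_{t_\pm}(a)\vb-\vb K_\l(a)\vb|\cdot|\sigma_0|=O(\l^{1+\a})$, which after division by $\l^2$ is $O(\l^{\a-1})$ and thus absorbable into $\frac{1}{\l^{1-\beta}}$; and (iii) checking that gluing two fields with matching normal trace across a common boundary indeed produces a field whose distributional divergence picks up no singular part on the interface — this is the standard fact used repeatedly in \cite{sandierserfaty} and already invoked in the proof of Proposition \ref{propj+j-}. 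None of these is serious; the content of the lemma is entirely carried by Proposition \ref{propj+j-}.
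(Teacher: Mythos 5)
Your strategy coincides with the paper's: for \eqref{eqestimationminup} glue a minimizer of $\sigma_0(K_{t_-}(a);\rho)$ on $K_{t_-}(a)$ with the field $\j_-$ of Proposition \ref{propj+j-} on the annulus $\mathcal K_-^a$, for \eqref{eqestimationminlow} glue a minimizer of $\sigma_\varphi(K_\l(a);\rho)$ with $\j_+$ outward, then use additivity of $W(\cdot,\1)$ over the two regions together with \eqref{energyj-}, \eqref{energyj+}; your bookkeeping of the $O(\l^{1+\a})$ volume discrepancy and of the exponents matches what is needed.

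There is, however, one concrete gap. You assert that the glued field ``lies in $\mathcal B_{\rho,\varphi}(K_\l(a))$'' (resp.\ is an admissible competitor for $\sigma_0(K_{t_+}(a);\rho)$). This is not true as stated: membership in $\mathcal B_{\rho,\varphi}$ requires $\curl \j=0$ in the \emph{whole} domain, whereas Proposition \ref{propj+j-} only guarantees $\curl \j_\pm=0$ in a neighborhood of the points of $\Lambda_\pm$ --- the construction pieces together gradients of separate Neumann problems, whose tangential traces need not match across the interior interfaces --- and, likewise, matching normal traces on $\p K_{t_-}(a)$ or $\p K_\l(a)$ controls the divergence across the interface but not the tangential jump, which contributes a singular curl there. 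So the glued field cannot be compared directly to the infima defining $\sigma_0$ and $\sigma_\varphi$, which are taken over curl-free fields. The paper closes exactly this gap with Lemma \ref{lemdiv0}: one subtracts $\nabla^{\bot}\zeta$ where $-\Delta\zeta=\curl\j$ with $\zeta=0$ on the boundary, obtaining a field with the same divergence and normal trace, zero curl, and smaller $W(\cdot,\1)$. Inserting this correction at the point where you pass from the glued field to the minimization problem makes your argument complete; everything else in your proposal is sound.
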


\begin{rem} Note that if $\rho$ is constant, Lemma \ref{corestimationmin} means that
\begin{equation}\label{eqestiminconst}
\sigma_0(K_{t_+}(a);\rho)-C{\l^{\beta-1}}\le  \sigma_\varphi(K_\l(a);\rho)\le \sigma_0(K_{t_-}(a);\rho)+ C {\l^{\beta-1}}
\end{equation}
\end{rem}


\begin{proof}[Proof of Lemma \ref{corestimationmin}]  Without loss of generality, we may assume $a=0$.

Let $\j$ be  a minimizer for $\sigma_\varphi(K_\l;\rho)$, i.e. $W(\j,\1_{K_\l})=\sigma_\varphi(K_\l;\rho)\vb K_l\vb$. Let us extend $E$ to $K_{t_+}\backslash K_\l$ via Proposition \ref{propj+j-}: we set $E=E_+$ in $K_{t_+}\backslash K_\l$ where $E_+$ is given by the first part of Proposition \ref{propj+j-} and thus satisfying \eqref{energyj+}. This extended $\j$ satisfies the hypothesis of Lemma \ref{lemdiv0} on $K_{t_+}$ with $\varphi=0$ (indeed note again that the normal components coincide on $\partial K_\l$); hence, there exists $\j\in \mathcal B_{\rho,0}(K_{t_+})$ such that $W(\tj,\1_{K_{t_+}})\le W(\j,\1_{K_{t_+}})$. As a consequence, using \eqref{energyj+},
\begin{align*}
\sigma_0(K_{t_+};\rho)&\vb K_{t_+}\vb \le W(\tj,\1_{K_{t_+}})\le W(\j,\1_{K_{t_+}})=W(\j,\1_{K_{\l}})+W(\j,\1_{K_{t_+}\backslash K_\l})\\
&\le \sigma_\varphi(K_{\l};\rho)\vb K_{\l}\vb + C \left(\l^{1+\beta}+\l^2\left(\norm \rho\norm_{C^{0,\lambda}(\mathcal K_+)}\l^{\beta\lambda}\right)\left(1+\norm \rho\norm_{C^{0,\lambda}(\mathcal K_+)}\l^{\beta\lambda}\right)\right).
\end{align*}

Next, let $\j$ be a minimizer of $\sigma_0(K_{t_-};\rho)$ in $K_{t_-}$ and $\j=\j_-$ in $K_\l\backslash K_{t_-}$ with $\j_-$ constructed via the second part of Proposition \ref{propj+j-}. This extended $\j$ satisfies the assumptions of Lemma \ref{lemdiv0} as before, thus there exists $\tj\in \mathcal B_{\rho,\varphi}(K_\l)$ such that $W(\tj,\1_{K_{\l}})\le W(\j,\1_{K_{\l}})$. Consequently, using \eqref{energyj-}, we have
\begin{align*}
\sigma_\varphi(K_{\l};&\rho)\vb K_{\l}\vb \le W(\tj,\1_{K_{\l}})\le W(\j,\1_{K_{\l}})=W(\j,\1_{K_{t_-}})+W(\j,\1_{K_{\l}\backslash K_{t_-}})\\
&\le \sigma_0(K_{t_-};\rho)\vb K_{t_-}\vb +C \left(\l^{1+\beta}+\l^2\left(\norm \rho\norm_{C^{0,\lambda}(\mathcal K_-)}\l^{\beta\lambda}\right)\left(1+\norm \rho\norm_{C^{0,\lambda}(\mathcal K_-)}\l^{\beta\lambda}\right)\right).
\end{align*}
\end{proof}

We now have  all the ingredients to prove the following proposition that gives a comparison between the quantities $\sigma_0(\cdot;m)$, $\sigma_\varphi(\cdot;m)$, $\sigma_{per}(\cdot,m)$ and $\sigma^*_m=\min_{\mathcal A_m} W$ for  $m$ a positive number. This correspond to the case of a constant background; the case of a non-constant background, useful for the Coulomb gas problem,  is more involved and  is treated in Section \ref{sectionnonconstant}.



\begin{proof}[Proof of Proposition \ref{proplimmincon}] Without loss of generality, we may assume $a=0$ and $m=1$; the case of general $m$ follows by \eqref{eqscalingW}. We reproduce here arguments of \cite[Section 4]{sandierserfaty}.

First, let us show that $\sigma_1^*\le \sigma_0(K_L;1)$. Let $\j_0$ be a minimizer of $ \sigma_0(K_L;1)$.
By Remark \ref{reminterseclambda},  $\Lambda\cap \partial K_L=\emptyset$. Moreover, since $\curl\,\j_0=0$ in $K_L$ and $\j_0\cdot\vnu=0$, we may write $\j_0=-\nabla H_0$ with $\partial_{\vnu} H_0=0$ on $\partial K_L$. Thus defining $H_0$ on $[-L,3L]\times[-L,3L]$ by reflections  across the sides of the square $K_L$ we have $-\Delta H_0=2\pi\left(\sum_{p\in \Lambda_0} \delta_p-1\right)$, where $\Lambda_0$ is obtained from $\di\,\j_0+2\pi$ by reflections  across the sides of the square $K_L$. Moreover, $H_0(-L,y)=H_0(3L,y)$ and $H_0(x,-L)=H_0(x,3L)$ so that we may periodize $H_0$ to have it defined on $\RR^2$.
Then $\j:=-\nabla H_0\in \mathcal A_{1}$ and since everything is periodic, $W$ can be computed through the results of \cite[Section 3.1]{sandierserfaty}:
$$
\sigma_1^*\le W_K(\j)=\frac{W(\j,\1_{K_{2L}})}{4 \vb K_L\vb}=\frac{4 W(\j_0,\1_{K_L})}{4\vb K_L\vb}=\sigma_0(K_L;1).
$$
Next, let $\j$ be a minimizer  of $W$ in $\mathcal A_1$. As a consequence of \cite[Proposition 4.2]{sandierserfaty}, there exists a sequence $\{\j_L\}_{L^2\in \NN}\subset\mathcal A_{1,0}(K_L)$ such that
\begin{equation*}
\frac{W(\j_L,\1_{K_L})}{\vb K_L\vb}\le \sigma_1^*+o(1)_{L\to +\infty}.
\end{equation*}
 By definition of $\sigma_{0}(K_L;1)$, it follows that
$
\sigma_0(K_L;1)\le \sigma_1^*+o(1)_{L\to +\infty}.
$

 Similarly, as a consequence of \cite[Corollary 4.4]{sandierserfaty}, there exists a minimizing sequence for $\min_{\mathcal A_1}W$ consisting of $\mathbb{T}_L$-periodic vector fields. Hence, since $\sigma_{per}(L;1)$ is the minimum among all $\j$ which are $\mathbb{T}_L$ periodic, we deduce
$
\sigma_{per}(L;1)\le \sigma_1^*+o(1)_{L\to +\infty}.
$
Moreover, it is clear that $\sigma_1^*\le \sigma_{per}(L;1)$. Indeed, $\sigma_{per}(L;1)=\min\limits_{\j\in \mathcal A_{1,per}(\mathbb T_L)}W(\j)\ge \min_{\mathcal A_1}W=\sigma_1^*$.

Finally, to obtain (\ref{eqlimsigmaphicon}) with $m=1$, we use (\ref{eqlimsigma0con}) with $m=1$ and combine it with Lemma \ref{corestimationmin}.
\end{proof}

\section{Proof of Theorem \ref{thmmaincb} and Theorem \ref{thmmainper}}
\label{sec3}

In this section, we turn to the main proofs. We use previous results in  the case where the function $\rho$ is constant, that means $\rho(x)=m$ for all $x\in\RR^2$ and for some positive number $m$. Note that in this particular case the condition \eqref{h1boundary} reduces to
\begin{equation}
\label{h1boundary}
\frac{1}{2\pi}\int_{\partial K_L(a)}\varphi+ m\vb K_L(a)\vb \in \NN.
\tag{$\mathrm{HB}_1$}
\end{equation}

\subsection{Preliminary results}

We start by recalling some technical results from \cite{sandierserfaty} that we will need repeatedly. 

The following result from  \cite[Proposition 4.9]{sandierserfaty} is crucial because it shows how the energy density associated to $W(\j, \chi)$ can be replaced by an essentially positive density, at a cost depending on the number of points near the boundary (this is why we always need to have good controls on the number of points in addition to control on $W(\j, \chi)$).
\begin{prop}[\cite{sandierserfaty}]\label{prop49ss}
Assume $U\subset \mathbb R^2$ is open, and let $\hat U$ denotes the set $\{x\vb d(x,U)<1\}$. Assume  $(\j,\nu)$ are such that $\nu=\sum_{p\in \Lambda}\delta_p$ for some finite subset $\Lambda$ of $\widehat U$ and $\di \j=2\pi(\nu-\ro)$ in $\widehat U$ and { $\curl E =  0 $ in $\cup_{p\in \Lambda} B(p,\eta)\cap \widehat{U}$ for some $\eta>0$, with $\ro\in L^{\infty}(\widehat U)$. } Then there exists a measure $g$ supported on $\widehat U$ and such that
\begin{itemize}
\item $g\ge -C(\norm \ro\norm_{\infty}^2+1)$ on $\widehat U$, where $C$ is a universal constant.
\item For any function $\chi$ compactly supported in $U$ we have
\begin{equation}\label{eqestimWg}
\left\vb W(\j,\chi)-\int \chi\,dg\right\vb\le Cn(\log n+ \norm \ro \norm_{\infty})\norm\nabla\chi\norm_{\infty}
\end{equation}
where $n=\#\{p\in\Lambda\vb B(p,1)\cap\mathrm{supp}(\nabla\chi)\neq\emptyset\}$.
\item For any $S\subset U$
\begin{equation}\label{eqnumberpoint}
\#(\Lambda \cap S)\le C(1+\norm \ro\norm_{\infty}^2\vb \widehat S\vb +g(\widehat S)),
\end{equation}
where $C$ is universal.
\end{itemize}
\end{prop}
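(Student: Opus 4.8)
The standard route to a statement of this type, which I would follow here, is to turn the singular renormalized density $\tfrac12|\j|^2$ into a genuine (signed, but bounded‑below) measure $g$ by performing a \emph{ball construction} around the points of $\Lambda$, in the spirit of Jerrard and Sandier. Throughout I work in $\widehat U$, and I use the decomposition $\j(x)=\nabla\log\vb x-p\vb+f_p(x)$ valid near each $p\in\Lambda$ with $f_p$ continuous (this is where the hypotheses $\curl\j=0$ near $\Lambda$ and $\ro\in L^\infty$ enter), together with the fact that for $\eta$ small the balls $B(p,\eta)$, $p\in\Lambda$, are pairwise disjoint. This is precisely the device that absorbs the divergent renormalization $\pi\log\eta\sum_p\chi(p)$ in \eqref{defWchi} into finite, localized contributions.

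\textbf{Step 1 (ball construction).} Starting from the disjoint family $\{B(p,\eta)\}$ I would run the ball–growth process to obtain, for $r$ in a range $[\eta,r_0]$ with $r_0$ a fixed small constant, an increasing family $\mathcal B(r)$ of pairwise disjoint closed balls whose union contains $\Lambda$, with $\sum_{B\in\mathcal B(r)}\mathrm{rad}\,B=r$, and such that on the annular regions one has a lower bound of the form
\begin{equation*}
\tfrac12\int_{\mathcal B(r)\setminus\cup_{p\in\Lambda}B(p,\eta)}\vb\j\vb^2+\pi\,\#\Lambda\log\eta\ \ge\ \pi\,\#\Lambda\log\frac{r}{\#\Lambda}\ -\ C\,\#\Lambda\big(\log\#\Lambda+\norm\ro\norm_\infty+1\big).
\end{equation*}
The loss $\log\#\Lambda$ reflects the merging of balls; the dependence on $\norm\ro\norm_\infty$ enters because $\di\j=2\pi(\nu-\ro)$ gives only $\frac{1}{2\pi}\int_{\p B}\j\cdot\vnu=d_B-\int_B\ro=d_B+O(\norm\ro\norm_\infty(\mathrm{rad}\,B)^2)$ on each ball, and completing the square in $(d_B+O(\norm\ro\norm_\infty r^2))^2/r$ is what ultimately produces the power $\norm\ro\norm_\infty^2$ below. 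The same argument gives the \emph{localized} inequality: for any $S\subset U$, the analogous bound holds with $\Lambda$ replaced by $\Lambda\cap S$ and the balls restricted to $\widehat S$.

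\textbf{Step 2 (definition of $g$ and the three properties).} I would then let $g$ coincide with $\tfrac12\vb\j\vb^2\,dx$ on $\widehat U\setminus\mathcal B(r_0)$, and on each ball $B\in\mathcal B(r_0)$ be the renormalized energy $\lim_{\eta\to0}\big(\tfrac12\int_{B\setminus\cup_{p\in\Lambda\cap B}B(p,\eta)}\vb\j\vb^2+\pi d_B\log\eta\big)$ spread out over $\widehat B$ (this limit exists by the decomposition of $\j$ near the points). Property (i), $g\ge-C(\norm\ro\norm_\infty^2+1)$, is exactly the output of Step 1 read inside each ball: on $\widehat U\setminus\mathcal B(r_0)$ the density $\tfrac12\vb\j\vb^2$ is nonnegative, while on each $\widehat B$ the spread‑out renormalized energy, once the $1/r^2$ singularities are renormalized, is controlled below by $-C(\norm\ro\norm_\infty^2+1)$ per unit area by the crude bound of Step 1. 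Property (ii) compares $W(\j,\chi)$ with $\int\chi\,dg$: since $g$ agrees with the true energy density away from $\Lambda$, the difference is localized to the (at most $n$) balls meeting $\mathrm{supp}(\nabla\chi)$; on each such ball $\chi$ oscillates by at most $\norm\nabla\chi\norm_\infty\,\mathrm{diam}\,B$, and a direct estimate using $\vb\chi(x)-\chi(p)\vb\le\norm\nabla\chi\norm_\infty\vb x-p\vb$ together with the bound on the renormalized mass these balls carry (again Step 1, which is where $\log n$ and $\norm\ro\norm_\infty$ come from) yields \eqref{eqestimWg}. Property (iii) is the localized lower bound of Step 1 applied to $S$: it gives $g(\widehat S)\ge\pi\,\#(\Lambda\cap S)-C(1+\norm\ro\norm_\infty^2\vb\widehat S\vb)$, which rearranges to \eqref{eqnumberpoint}.

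\textbf{Main obstacle.} The delicate part is the ball construction itself in the present generality — a merely $L^\infty$, non‑constant background $\ro$ and $\curl\j=0$ only in a neighbourhood of $\Lambda$. One must verify that merging does not destroy the logarithmic lower bound, that the background perturbs each circulation by only $O(\norm\ro\norm_\infty(\mathrm{rad}\,B)^2)$ so that no power worse than $\norm\ro\norm_\infty^2$ appears, and that $r_0$ can be taken small enough (shrinking balls if necessary and absorbing the residue into the $O(1)$‑per‑point errors via the boundedness of the $f_p$) that every annulus used in the construction lies where $\curl\j=0$. Once the construction is in place, the remainder is the error bookkeeping already carried out in \cite{sandierserfaty}.
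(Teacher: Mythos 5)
Your proposal follows essentially the same route as the actual proof: the paper does not reprove this proposition but quotes it from \cite{sandierserfaty} (Proposition 4.9 there), whose argument is exactly the ball-construction/mass-displacement scheme you describe — spreading the renormalized self-energies of the points over grown balls to obtain a measure $g$ bounded below, with the circulation perturbation $\int_B\rho=O(\norm\ro\norm_\infty(\mathrm{rad}\,B)^2)$ accounting for the $\norm\ro\norm_\infty^2$ terms and the oscillation of $\chi$ over the displacement balls accounting for \eqref{eqestimWg}. Your sketch also correctly uses $\curl\j=0$ only near the points, consistent with the paper's remark that the global curl-free hypothesis of the original statement is not needed.
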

We note here that in the proof of \cite{sandierserfaty} we can control the error in \eqref{eqestimWg} by the number of points at distance $\le 1$ from $\supp (\nab \chi)$, at the expense of a possibly larger constant $C$ in the first item.
{ We also note that in \cite[Proposition 4.9]{sandierserfaty} the additional assumption $\curl E=0$ is placed, however it is not used at all in the proof, so we removed it here.}

The next lemma gives a control of $L^p$ norm of vector field by the renormalized energy $W$. A better estimate can be found in \cite{st,rs} but we will not need them here.

\begin{lem}[{\cite[Lemma 4.7]{sandierserfaty}}]
	\label{lemnormp} Let $\chi$ be a positive function compactly supported in an open set $U$ and assume that $\di \j=2\pi(\nu-\rho)$ in
	\begin{equation*}
		\hat U=\{x\vb d(x,U)<1\}
	\end{equation*}
	where $\nu=\sum_{p\in \Lambda}\delta_p$ for some finite subset $\Lambda$ of $\hat U$, and { $\curl E =  0 $ in $\cup_{p\in \Lambda} B(p,\eta)\cap \widehat{U}$}. Then, there exists $C>0$ universal and for any $p\in[1,2)$, $C_p>0$ depending only on $p$, such that
	\begin{equation*}
		\int_U\chi^{p/2}\vb \j\vb^{p}\le C(\vb U\vb +C_p)^{1-p/2}(W(\j,\chi)+n(\log n +\norm \rho\norm_{\infty})\norm \chi\norm_{\infty}+n\norm \nabla \chi\norm_{\infty})^{p/2}
	\end{equation*}
	where $n=\# \Lambda$.
\end{lem}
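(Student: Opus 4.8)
\emph{Approach.} The plan is to reduce the estimate to an $L^2$-type bound and then pass to $L^p$, $p<2$, by H\"older's inequality, accepting the volume factor $\vb U\vb^{1-p/2}$. The point is that $W(E,\chi)$ is, up to the renormalization in \eqref{defWchi} and controlled error terms, a weighted $L^2$-norm of $E$, while the non-integrable $1/r^2$ singularity of $\vb E\vb^2$ at the points of $\Lambda$ is exactly what that renormalization removes; this last point is handled by the ball-construction machinery underlying Proposition~\ref{prop49ss}. I would follow the argument of \cite[Lemma 4.7]{sandierserfaty}.

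\emph{The $L^2$ bound away from the points.} Applying Proposition~\ref{prop49ss} on $U$, its proof also produces a finite family of pairwise disjoint closed balls $\mathcal B=\bigcup_i B_i$, $B_i=B(a_i,r_i)$, covering $\Lambda$, with $\sum_i r_i$ bounded by a universal constant, together with the associated ball-construction lower bound for the energy carried by each $B_i$ in terms of its charge $\#(\Lambda\cap B_i)$ and of $r_i$. Feeding the local expansion $E(x)=\nabla\log\vb x-p\vb+f_p(x)$ near each $p$ --- licit since $\curl E=0$ there, with $f_p$ continuous --- into \eqref{defWchi}, subtracting the contribution of the balls, letting $\eta\to0$, and using that in the ball construction each $r_i$ is bounded below by a universal constant times $1/n$ (so each $-\log r_i$ is at most $\log n+C$), one obtains
\begin{equation*}
\tfrac12\int_{U\setminus\mathcal B}\chi\,\vb E\vb^2\le C\Big(W(E,\chi)+n(\log n+\norm\rho\norm_\infty)\norm\chi\norm_\infty+n\norm\nabla\chi\norm_\infty\Big)=:C\,\mathcal W.
\end{equation*}
Then, for the part of $\int_U\chi^{p/2}\vb E\vb^p=\int_{U\setminus\mathcal B}(\chi\vb E\vb^2)^{p/2}$ over $U\setminus\mathcal B$, H\"older's inequality with conjugate exponents $\tfrac2p$ and $\tfrac2{2-p}$ gives $\int_{U\setminus\mathcal B}(\chi\vb E\vb^2)^{p/2}\le\big(\int_{U\setminus\mathcal B}\chi\vb E\vb^2\big)^{p/2}\vb U\vb^{1-p/2}\le(2C\mathcal W)^{p/2}\vb U\vb^{1-p/2}$.

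\emph{The ball interiors and conclusion.} It remains to bound $\sum_i\int_{B_i}\chi^{p/2}\vb E\vb^p$. On each $B_i$ one splits $E$ into its singular part $\sum_{p\in\Lambda\cap B_i}\nabla\log\vb x-p\vb$ --- which, since $\int_{B(0,r)}\vb x\vb^{-p}=\tfrac{2\pi}{2-p}r^{2-p}$ for $p<2$, contributes at most $\tfrac{C}{2-p}\norm\chi\norm_\infty^{p/2}\,\#(\Lambda\cap B_i)\,r_i^{2-p}$ --- and a remainder controlled exactly as on $U\setminus\mathcal B$. Summing over $i$, using $\sum_i r_i\le C$, $\sum_i\#(\Lambda\cap B_i)=n$ and the fact that in the ball construction the $r_i$ are comparable to the charges they carry, one bounds $\sum_i\int_{B_i}\chi^{p/2}\vb E\vb^p\le C_p(\vb U\vb+1)^{1-p/2}\mathcal W^{p/2}$, the constant $C_p$ (blowing up as $p\to2$) coming from the factor $\tfrac1{2-p}$; adding the two contributions gives the claim. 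The main obstacle is this last step, together with the error bookkeeping of the preceding one: when the points of $\Lambda$ cluster, the expansion $E=\nabla\log\vb x-p\vb+f_p$ is valid only below the cluster scale, so one cannot simply excise fixed balls $B(p,\tau)$; the Sandier--Serfaty ball construction is precisely what adaptively selects, cluster by cluster, the correct radius $r_i$ carrying the total local charge, for which the matching energy lower bound holds. Tracking all the errors in terms of $n$, $\norm\rho\norm_\infty$, $\norm\chi\norm_\infty$ and $\norm\nabla\chi\norm_\infty$ is the technical core of the proof.
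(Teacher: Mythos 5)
Your strategy --- deploy the ball-construction machinery underlying Proposition~\ref{prop49ss}, pass from $W(E,\chi)$ to the weighted $L^2$-norm of $E$ off the balls, use H\"older to drop to $L^p$ there, and estimate the ball interiors directly --- is the right one and matches the cited proof in \cite{sandierserfaty}. The $L^2$-off-the-balls step and the H\"older step are correct as sketched.

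The gap is in the ball-interior estimate. By Minkowski's inequality,
$\bigl\|\sum_{p\in\Lambda\cap B_i}\nabla\log\vb\cdot-p\vb\bigr\|_{L^p(B_i)}\le\sum_{p\in\Lambda\cap B_i}\bigl\|\nabla\log\vb\cdot-p\vb\bigr\|_{L^p(B_i)}$, so the $p$-th power of the left side is bounded by $d_i^p\cdot\tfrac{C}{2-p}r_i^{2-p}$ with $d_i=\#(\Lambda\cap B_i)$, not by $d_i\cdot\tfrac{C}{2-p}r_i^{2-p}$ as you state; you have implicitly replaced $\|\sum\cdot\|^p$ by $\sum\|\cdot\|^p$. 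The discrepancy is a factor $d_i^{p-1}$, and it is not negligible: with only $\sum_i r_i\le C$ and $\sum_i d_i=n$ in hand, $\sum_i d_i^p r_i^{2-p}$ can be as large as order $n^p$, whereas the right-hand side is a priori only of order $(n\log n)^{p/2}\|\chi\|_\infty^{p/2}$, which is $\ll n^p$ when $W$ is small. What rescues the estimate is that a ball of radius $r_i\lesssim 1$ carrying $d_i\gg\norm\rho\norm_\infty r_i^2$ points must store energy of order $d_i^2$ in a fixed-size neighborhood (by Gauss's law the flux of $E$ through $\partial B(a_i,s)$ is of order $d_i$ for $s\in[r_i,r_i+1]$, so $\int_{\hat B_i\setminus B_i}\vb E\vb^2\gtrsim d_i^2$), which after the usual bookkeeping gives $\sum_i\chi(a_i)\,d_i^2\lesssim \mathcal W$; one then concludes by H\"older, $\sum_i\chi(a_i)^{p/2}d_i^p r_i^{2-p}\le\bigl(\sum_i\chi(a_i)d_i^2\bigr)^{p/2}\bigl(\sum_i r_i^2\bigr)^{1-p/2}\lesssim\mathcal W^{p/2}$. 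This compensation mechanism, by which $W$ controls clustering and hence the surplus factor $d_i^{p-1}$, is the step your sketch is missing, and it is not a routine error-tracking detail --- without it the argument as written fails for configurations with large clusters.
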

From these two results we can deduce the following  preliminary result which allows to bound the number of points in an ``annular type" region.
\begin{lem}
\label{lembor} Let $\ro \in L^\infty(K_{L+1})$ and { $\j$ satisfy 
$$\div \j = 2\pi (\nu- \rho) \quad \text{in}  \ K_{L+1}$$
with $\nu =\sum_{p \in \Lambda} \delta_p$ for some discrete set $\Lambda$,  and $\curl \j=0$ in $\cup_{p\in \Lambda} B(p, \eta)$ for some $\eta>0$,}
and assume $$W(\j, \chi_{K_L}) \le C_0 L^2 \qquad \nu(K_{L+1}) \le C_0 L^2.$$
 Then  for any positive $r, c_1, c_2$ such that $r+ c_1 \le L-2$, and $p\in (1,2)$,  we have
$$\left|\nu(K_{r+c_1})-\nu(K_{r-c_2})\right|\le C
r^{1-\frac{1}{p}} L^{\frac{2}{p}}\log^{\frac{1}{2}}L+  Cr
$$
where $C$ depends only on $C_0, c_1, c_2, \|\ro\|_{L^\infty}$  and $p$.
\end{lem}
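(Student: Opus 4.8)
The plan is to estimate the difference $\nu(K_{r+c_1}) - \nu(K_{r-c_2})$, which counts the points lying in the annular region $K_{r+c_1}\setminus K_{r-c_2}$, by integrating the flux of $\j$ across a well-chosen intermediate square $\partial K_t$ and controlling that flux via the $L^p$ bound of Lemma \ref{lemnormp}. Since $\div \j = 2\pi(\nu-\rho)$, the divergence theorem gives, for any $t$ with $r-c_2 \le t \le r+c_1$,
\begin{equation}
2\pi \nu(K_t) = \int_{\partial K_t} \j\cdot\vnu + 2\pi\int_{K_t}\rho,
\end{equation}
provided $\Lambda \cap \partial K_t = \emptyset$, so that $\nu(K_{r+c_1}) - \nu(K_{r-c_2})$ is controlled once we bound $\int_{\partial K_{t_1}}\j\cdot\vnu$ and $\int_{\partial K_{t_2}}\j\cdot\vnu$ for suitable $t_1 \in [r-c_2, r]$ and $t_2\in [r, r+c_1]$, plus the trivially bounded term $\|\rho\|_{L^\infty}(|K_{t_1}| - |K_{t_2}|) = O(r)$.

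The key step is a mean-value argument to select good circles $t_1, t_2$. First apply Lemma \ref{lemnormp} on $U = K_L$ (so $\hat U = K_{L+1}$) with a cutoff $\chi_{K_L}$: using the hypotheses $W(\j,\chi_{K_L}) \le C_0 L^2$ and $n = \nu(K_{L+1}) \le C_0 L^2$, together with $\|\rho\|_{L^\infty}$ bounded, we get
\begin{equation}
\int_{K_{L-1}} |\j|^p \le C(|K_L| + C_p)^{1-p/2}\big(C_0 L^2 + C_0 L^2(\log(C_0 L^2) + \|\rho\|_\infty) + C_0 L^2\|\nabla\chi\|_\infty\big)^{p/2} \le C L^{2}\log^{p/2}L,
\end{equation}
where $C$ depends on $C_0, \|\rho\|_\infty, p$. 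Then by Chebyshev/mean-value applied to $t\mapsto \int_{\partial K_t}|\j|^p$ over $t \in [r-c_2, r]$ (a window of fixed length $c_2$), there is $t_1$ in that window with $\int_{\partial K_{t_1}}|\j|^p \le \frac{1}{c_2}\int_{K_{L-1}}|\j|^p \le C L^2\log^{p/2}L$; and automatically $\Lambda\cap\partial K_{t_1}=\emptyset$ for such $t_1$ (the $L^p$ finiteness of the trace excludes points on the circle, exactly as in Remark \ref{reminterseclambda}). Same for $t_2 \in [r, r+c_1]$. Then by Hölder on the one-dimensional set $\partial K_{t_1}$ (which has length $\le C r$),
\begin{equation}
\left|\int_{\partial K_{t_1}} \j\cdot\vnu\right| \le |\partial K_{t_1}|^{1-1/p}\left(\int_{\partial K_{t_1}}|\j|^p\right)^{1/p} \le C r^{1-1/p}\big(L^2\log^{p/2}L\big)^{1/p} = C r^{1-1/p} L^{2/p}\log^{1/2}L,
\end{equation}
and likewise for $t_2$. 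Combining the two flux identities at $t_1$ and $t_2$, $|\nu(K_{r+c_1}) - \nu(K_{r-c_2})| \le |\nu(K_{t_2}) - \nu(K_{t_1})|$ is bounded by the sum of the two boundary-flux bounds plus $\|\rho\|_\infty(|K_{t_2}| - |K_{t_1}|)$, and since $|K_{t_2}| - |K_{t_1}| \le C(c_1+c_2)r = Cr$, we obtain the claimed estimate
\begin{equation}
\left|\nu(K_{r+c_1}) - \nu(K_{r-c_2})\right| \le C r^{1-1/p} L^{2/p}\log^{1/2}L + C r.
\end{equation}

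The main obstacle I anticipate is the bookkeeping around the cutoff function and the region where Lemma \ref{lemnormp} is applied: one needs $r + c_1 \le L - 2$ precisely so that $\partial K_{t_1}, \partial K_{t_2} \subset K_{L-1}$ lie inside the region where the $L^p$ bound is available and away from $\supp(\nabla\chi_{K_L})$, and one must be careful that $\chi_{K_L}^{p/2} \ge c > 0$ on the annulus in question so that $\int_{\partial K_t}|\j|^p$ is genuinely controlled by $\int_{K_{L-1}}\chi^{p/2}|\j|^p$. A minor additional point is ensuring the mean-value selection simultaneously gives a circle avoiding $\Lambda$; this is automatic once the trace is $L^p$, but should be stated. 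Everything else is routine once these are in place.
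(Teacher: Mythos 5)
Your approach is essentially the paper's: apply Lemma~\ref{lemnormp} with the given hypotheses to control $\|\j\|_{L^p(K_{L-1})}$ by $CL^{2/p}\log^{1/2}L$, select good intermediate boundaries by a mean-value (Chebyshev) argument, bound the normal flux by H\"older's inequality (so the trace is $L^p$ and $\Lambda\cap\partial K_t=\emptyset$, as you note), and conclude from the divergence theorem while absorbing the background by $\|\rho\|_{L^\infty}$. The exponents, the intermediate quantities, and the role of the hypothesis $r+c_1\le L-2$ are all correctly identified.

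There is, however, a genuine slip in the combination step. You select $t_1\in[r-c_2,r]$ and $t_2\in[r,r+c_1]$, which gives the inclusions $K_{r-c_2}\subset K_{t_1}\subset K_{t_2}\subset K_{r+c_1}$. By monotonicity of $\nu$ this yields $\nu(K_{t_2})-\nu(K_{t_1})\le \nu(K_{r+c_1})-\nu(K_{r-c_2})$, the \emph{reverse} of the inequality you assert, so it cannot give an upper bound on the quantity of interest. The remedy is to pick the good squares \emph{outside} the annulus: $t_1\in[r-c_2-1,\,r-c_2]$ and $t_2\in[r+c_1,\,r+c_1+1]$, so that $K_{t_1}\subset K_{r-c_2}$ and $K_{t_2}\supset K_{r+c_1}$, which does give $\nu(K_{r+c_1})-\nu(K_{r-c_2})\le \nu(K_{t_2})-\nu(K_{t_1})$. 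This is exactly what the constraint $r+c_1\le L-2$ buys you, namely a unit of room past $\partial K_{r+c_1}$ for the mean-value selection. Equivalently, as the paper does, you can first prove $|\nu(K_t)-\int_{K_t}\rho|\le Ct^{1-1/p}L^{2/p}\log^{1/2}L + \|\rho\|_{L^\infty}t$ for every $t\le L-2$ via the sandwich $\nu(K_{t_-})\le \nu(K_t)\le \nu(K_{t_+})$ with $t_-\in[t-1,t]$, $t_+\in[t,t+1]$, then apply it with $t=r-c_2$ and $t=r+c_1$ and use the triangle inequality. With either fix your argument closes.
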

\begin{proof}
First of all, Lemma \ref{lemnormp} gives that for $p\in (1,2)$
\begin{align}\label{eqestimnormp}
\norm \j\norm_{L^p(K_{L-1})} \le&  CL^{\frac{2}{p} -1}(W(\j,\chi_{K_L})+\nu(K_{L+1})\log \nu(K_{L+1}))^{1/2}\nonumber\\
\le&
C L^{\frac{2}{p}-1}(L^2+L^2\log L)^{\frac{1}{2}}\le  C L^{\frac{2}{p}}\log^{\frac{1}{2}}L,
\end{align}
where $C$ depends only on $p$, $\norm \rho\norm_{L^\infty}$, and $C_0$.

 Next, for $t\le L-2$, a mean value argument gives a $t_-\in [t-1,t]$ and a $t_+\in [t,t+1]$ such that
 \begin{equation}\label{jp}
 \|\j\|_{L^p(\p K_{t_-})} \le \|\j\|_{L^p(K_{L-1})}\le  C L^{2/p}\log^{\frac{1}{2}}L
\qquad \|\j\|_{L^p(\p K_{t_{+}})} \le  C L^{2/p}\log^{\frac{1}{2}}L.\end{equation}
Since $\nu(K_{t_{\pm}}) - \int_{K_{t_\pm}} \ro(x)\, dx= \frac{1}{2\pi}\int_{\p K_{t_\pm}} \j \cdot \vnu$, it follows with H\"older's inequality that
$$\left|\nu(K_{t_{\pm}}) - \int_{K_{t_\pm}} \ro(x)\, dx\right|\le C t^{1-\frac{1}{p}} L^{2/p}\log^{\frac{1}{2}}L.
$$
Since $\nu(K_{t_-})\le \nu(K_t)\le \nu(K_{t_+}) $ and  $\ro \in L^\infty$ we immediately deduce that  for any $t\le L-2$
\begin{equation}
\label{nkt}
\left|\nu(K_{t}) - \int_{K_{t}} \ro(x)\, dx\right|\le C t^{1-\frac{1}{p}} L^{2/p}\log^{\frac{1}{2}}L+ \|\ro\|_{L^\infty} t.
\end{equation}
The result easily follows  with again the  same argument.

\end{proof}

The proof of Theorem \ref{thmmaincb} relies on the  selection of  good boundaries by mean value arguments, which is provided by the following two variants of the same lemma, whose proofs are postponed to the end of the section. The first one concerns vector-fields without boundary conditions, the second one vector fields with given good boundary conditions. Because we will need them later, we state them with varying background.

\begin{lem}\label{lemgoodboundary} Let $p\in (1,2)$, $\gamma\in\left(\frac{3-p}{2},1\right)$, and $K_L$ be some square of sidelength $2L$. Let $\rho\in L^{\infty}(K_{L+1})$ and $\j$ { satisfy 
$$\div \j = 2\pi (\nu- \rho) \quad \text{in}  \ K_{L+1}$$
with $\nu =\sum_{p \in \Lambda} \delta_p$ for some discrete set $\Lambda$, and $\curl \j=0$ in $\cup_{p\in \Lambda} B(p, \eta)$ for some $\eta>0$; and assume that}
 there exist $C_1,C_2>0$ such that  we have
\begin{equation}\label{eqcondnusquare}
 \frac{\nu(K_{L+1})}{\vb K_{L+1}\vb}<C_1
\end{equation}
for the associated $\nu$ and such that
\begin{equation}\label{eqcondenergy}
\frac{W(\j,\chi_{K_L})}{\vb K_L\vb}\le C_2.
\end{equation}
Then, for all $L$ large enough (depending on $\gamma$) and any $\l$ such that
\begin{equation}\label{eqconddelta}
L\ge \l\ge L^{1/\delta} \mbox{ with } 1<\delta < \frac{1}{p}\left(1-\gamma+\frac{p}{2}(1+\gamma)\right)
\end{equation}
and $a\in K_L$ such that $K_\l(a)\subset K_L$, there exists some $t\in[\l-2\l^\gamma,\l-\l^{\gamma}]$ such that
\begin{equation}\label{eqgoodboundary}
\int_{\partial K_{t}(a)} \vb \j\vb^p \le M \l^{2-\gamma},
\end{equation}
\begin{equation}\label{eqgoodboundarypropenergy}
W(\j,\1_{K_{t}(a)}) \le W(\j,\chi_{K_\l(a)})+ C\l^{1+\gamma}+C\l^{\frac{2\delta -1}{p}+1}\log^{\frac{3}{2}}\l,
\end{equation}
\begin{equation}\label{eqgoodboundaryproppoints}
\left\vb\nu(K_{t}(a))-\int_{K_{t}(a)}\rho(x)\,dx\right\vb\le  C \l^{2-\gamma}.
\end{equation}
The constants $C, M$ depend only $p$, $\gamma, \delta$,  $C_1$, $C_2$ and $\norm \rho\norm_{L^\infty(K_{L+1})}$.
\end{lem}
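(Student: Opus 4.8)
Here is how I would attack Lemma~\ref{lemgoodboundary}.

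\medskip

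The plan is a mean-value argument confined to the thin annular strip of width $\sim\l^\gamma$ lying just inside $\partial K_\l(a)$, preceded by a few a priori estimates extracted from \eqref{eqcondnusquare}--\eqref{eqcondenergy}. First I would run, essentially verbatim, the argument of Lemma~\ref{lembor}: applying Lemma~\ref{lemnormp} on $K_L$ with $\chi_{K_L}$ and using \eqref{eqcondnusquare}--\eqref{eqcondenergy} gives the global bound $\|\j\|_{L^p(K_{L-1})}\le C L^{2/p}\log^{\frac12}L$, and then selecting good radii and integrating the divergence relation yields, for every $b$ and $s$ with $K_{s+1}(b)\subset K_{L-1}$,
\begin{equation}\label{plangood}
\Bigl|\nu(K_s(b))-\int_{K_s(b)}\rho\Bigr|\le C\,s^{1-\frac1p}L^{\frac2p}\log^{\frac12}L+C\,s .
\end{equation}
Since $L\le\l^\delta$, and since $\gamma>\frac{3-p}{2}$ forces $\delta<\frac{p+1}{2}$ and $\delta<\frac{1+p\gamma}{2}$, all the exponents below come out as claimed. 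I would also record the crude count $\nu(K_s(a))\le\nu(K_{L+1})\le C L^2=C\l^{2\delta}$ and, via Proposition~\ref{prop49ss} on $U=K_L$ (which produces a measure $g$ on $K_{L+1}$ with $g\ge-C(\|\rho\|_\infty^2+1)$ and $|W(\j,\chi)-\int\chi\,dg|\le C\,n\log n\,\|\nabla\chi\|_\infty$), the local energy bound $W(\j,\chi_U)\le C\l^{2\delta}\log\l$ for any $U\subset K_{L-1}$: write $W(\j,\chi_U)=W(\j,\chi_{K_L})-W(\j,\chi_{K_L}-\chi_U)$, bound the second term below by $\int(\chi_{K_L}-\chi_U)\,dg-Cn\log n\ge -C L^2\log L$ using $g\ge-C$, and invoke \eqref{eqcondenergy}. (Only here does the poor localization of \eqref{eqcondenergy} cost the exponent $2\delta$ rather than $2$.)

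\medskip

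Next I would select the good radius. Applying Lemma~\ref{lemnormp} on the slightly fattened strip $U_0:=K_{\l-\l^\gamma+1}(a)\setminus K_{\l-2\l^\gamma-1}(a)\subset K_{L-1}$, whose area is $\sim\l^{1+\gamma}$, with $n\le C\l^{2\delta}$ and $W(\j,\chi_{U_0})\le C\l^{2\delta}\log\l$, gives
\begin{equation}\label{planLp}
\int_{K_{\l-\l^\gamma}(a)\setminus K_{\l-2\l^\gamma}(a)}|\j|^p\le C\bigl(\l^{1+\gamma}\bigr)^{1-\frac p2}\bigl(\l^{2\delta}\log\l\bigr)^{\frac p2}=C\,\l^{(1+\gamma)(1-\frac p2)+\delta p}\log^{\frac p2}\l .
\end{equation}
The co-area formula for the sup-distance $x\mapsto\|x-a\|_\infty$ (whose gradient has length $1$ a.e.) identifies the left side of \eqref{planLp} with $\int_{\l-2\l^\gamma}^{\l-\l^\gamma}\bigl(\int_{\partial K_t(a)}|\j|^p\bigr)dt$, so the mean-value theorem produces a set of $t\in[\l-2\l^\gamma,\l-\l^\gamma]$ of measure $\ge\frac12\l^\gamma$ on which $\int_{\partial K_t(a)}|\j|^p\le C\l^{\,1-\frac p2(1+\gamma)+\delta p}\log^{\frac p2}\l$; discarding the countable set of $t$ with $\partial K_t(a)\cap\Lambda\neq\emptyset$, I fix one such $t$, which then avoids $\Lambda$. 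Since $\delta<\frac1p\bigl(1-\gamma+\frac p2(1+\gamma)\bigr)$ this exponent is $<2-\gamma$, so for $\l$ large the logarithm is absorbed and \eqref{eqgoodboundary} holds with a constant $M$ depending only on the listed data. The point count \eqref{eqgoodboundaryproppoints} then falls out of $\nu(K_t(a))-\int_{K_t(a)}\rho=\frac1{2\pi}\int_{\partial K_t(a)}\j\cdot\vnu$ and Hölder's inequality: the left side is $\le C\l^{1-\frac1p}(M\l^{2-\gamma})^{1/p}=C\l^{1+\frac{1-\gamma}{p}}\le C\l^{2-\gamma}$ because $p>1$.

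\medskip

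Finally, the energy estimate \eqref{eqgoodboundarypropenergy}. For the selected $t$ one has $\Lambda\cap\partial K_t(a)=\emptyset$ and $t<\l-1$, so $\chi_{K_\l(a)}\equiv1$ on $K_t(a)$; the linearity of $W(\j,\cdot)$ in its weight then gives the exact splitting
\begin{equation}\label{plansplit}
W(\j,\chi_{K_\l(a)})=W(\j,\1_{K_t(a)})+W\bigl(\j,\chi_{K_\l(a)}\1_{K_\l(a)\setminus K_t(a)}\bigr),
\end{equation}
the middle term being meaningful by Remark~\ref{reminterseclambda}. It remains to bound the last term below by $-C\l^{1+\gamma}-C\l^{\frac{2\delta-1}{p}+1}\log^{\frac32}\l$. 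Its weight is nonnegative and Lipschitz except for a jump across the point-free curve $\partial K_t(a)$, so one handles it as in the passage to a sharp cutoff of \cite[Proposition~3.1]{sandierserfaty} (cf.\ Step~4 of the proof of Proposition~\ref{propj+j-}): Proposition~\ref{prop49ss}, in the form where the error is counted by the points within distance $1$ of $\mathrm{supp}\,\nabla\chi$ (as noted after its statement), yields the $g$-contribution $\ge\int(\cdot)\,dg\ge -C(\|\rho\|_\infty^2+1)\,|K_\l(a)\setminus K_t(a)|\ge -C\l^{1+\gamma}$ and the counting error $\ge -C\,n\log n$, where $n$ is the number of points in a bounded neighbourhood of $\partial K_t(a)$ together with $\mathrm{supp}\,\nabla\chi_{K_\l(a)}$, which by \eqref{plangood} (and $L\le\l^\delta$) is $\le C\l^{1+\frac{2\delta-1}{p}}\log^{\frac12}\l$, so that $n\log n\le C\l^{\frac{2\delta-1}{p}+1}\log^{\frac32}\l$. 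Substituting into \eqref{plansplit} gives \eqref{eqgoodboundarypropenergy} (and, as $\delta<\frac{1+p\gamma}{2}<\frac{p+1}{2}$, both error terms are $o(\l^2)$). I expect this last step to be the main obstacle: making the transition from the soft cutoff $\chi_{K_\l(a)}$ to the sharp cutoff $\1_{K_t(a)}$ fully quantitative, i.e.\ controlling the renormalized energy carried by the strip between $\partial K_t(a)$ and $\partial K_\l(a)$ — this is exactly where the lower bound $g\ge-C(\|\rho\|_\infty^2+1)$ and the a priori count \eqref{plangood} are indispensable, and where the precise exponents $1+\gamma$ and $\tfrac{2\delta-1}{p}+1$ and the power $\log^{3/2}\l$ originate.
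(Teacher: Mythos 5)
Your selection of the good radius $t$ and the derivation of \eqref{eqgoodboundary} and \eqref{eqgoodboundaryproppoints} are correct and essentially equivalent to the paper's Step 2 (the paper first pigeonholes to a good unit-width annulus $\mathcal C_k$ with $\int\xi_k\,dg_L\le C\l^{2\delta-\gamma}\log\l$ and applies Lemma~\ref{lemnormp} there, whereas you apply it once on the full $\l^\gamma$-wide strip and average in $t$; the exponents come out identical, and your global bound $W(\j,\chi_U)\le C\l^{2\delta}\log\l$ is obtained the same way the paper obtains it). The H\"older step for \eqref{eqgoodboundaryproppoints} is also exactly the paper's.

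The proof of \eqref{eqgoodboundarypropenergy}, however, has a genuine gap. You apply Proposition~\ref{prop49ss} to the weight $\chi_{K_\l(a)}-\1_{K_t(a)}$, but that weight has a jump across $\partial K_t(a)$, so the error term $Cn(\log n+\|\rho\|_\infty)\|\nabla\chi\|_\infty$ is not finite, and the proposition simply does not apply. The reference to Step~4 of the proof of Proposition~\ref{propj+j-} does not rescue this: that argument passes from a sharp to a smooth cutoff only because the constructed points $\Lambda_-$ are, by construction, separated from each other and from the relevant boundaries by a fixed distance $4r_0$, a property an arbitrary $\j$ satisfying the hypotheses of Lemma~\ref{lemgoodboundary} does not enjoy. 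Concretely, a point of $\Lambda$ lying just outside $\partial K_t(a)$ at distance $d$ contributes roughly $\pi\log(1/d)$ to $W(\j,\1_{K_t(a)})$ while contributing $O(1)$ to $\int\1_{K_t(a)}\,dg_L$, so a comparison with error $Cn\log n$ is false without quantitative control of these distances, and your choice of $t$ (made only through the trace integral) does not supply the needed control of $W(\j,\1_{K_t(a)})$ itself. The paper circumvents this entirely in its Step~1: testing \eqref{eqestimWg} against all Lipschitz radial weights $\vartheta(\|x-a\|_\infty)$ with $\vartheta'$ supported in $[r-2,r-1]$ and using duality, it obtains the averaged bound
\begin{equation*}
\int_{r-2}^{r-1}\bigl\vb W(\j,\1_{K_t(a)})-g_L(K_t(a))\bigr\vb\,dt\le C\,\l^{\frac{2\delta-1}{p}+1}\log^{\frac32}\l ,
\end{equation*}
and then selects $t$ by a \emph{simultaneous} mean-value argument on this quantity and on the trace integral; only then does the comparison $g_L(K_t(a))\le\int\chi_{K_\l(a)}\,dg_L+C\l^{1+\gamma}\le W(\j,\chi_{K_\l(a)})+C\l^{1+\gamma}+Cn\log n$ close the argument. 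This averaging-in-$t$ (or an equivalent quantitative sharp-cutoff lemma) is the missing ingredient in your proposal; it is also where the power $\log^{3/2}\l$ actually originates.
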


\begin{lem}
	\label{lemcenergygb} Let $p\in(1,2)$, $\gamma\in\left(\frac{3-p}{2},1\right)$, $M>0$, $L>0$ and $K_L$ be some square of sidelength $2L$. Let $\rho\in L^{\infty}(K_L)$ and $\varphi\in L^p(\partial K_L(a))$ such that \eqref{h1boundary} and \eqref{h2boundary} are satisfied in $K_L$. { Let $\j$
	satisfy 
$$\begin{cases} \div \j = 2\pi (\nu- \rho) \quad &  \text{in}  \ K_{L}\\
\j \cdot \vnu=\varphi & \text{on} \ \p K_L\end{cases}$$
with $\nu =\sum_{p \in \Lambda} \delta_p$ for some discrete set $\Lambda$, and $\curl \j=0$ in $\cup_{p\in \Lambda} B(p, \eta)$ for some $\eta>0$; and assume that}
	 there exist $C_1,C_2>0$ such that
	\begin{equation}\label{eqcondnusquareext}
		\frac{\nu(K_{L})}{\vb K_{L}\vb}\le C_1
	\end{equation}
	for the associated $\nu$ and such that
	\begin{equation}\label{eqcondenergy1}
		\frac{W(\j,\1_{K_L})}{\vb K_L\vb}\le C_2.
	\end{equation}
	Then, for all $L$ large enough,  we have
	\begin{equation}
		\label{eqcontrolenergychi1}
		W(\j,\chi_{K_L})\le W(\j,\1_{K_L})+\tilde C L^{1+\beta},
	\end{equation}
and 	for all $a\in K_L$ such that $K_\l(a)\subset K_L$
	\begin{equation}
		\label{eqcontrolenergy1}
		W(\j,\chi_{K_l(a)})\le W(\j,\1_{K_L})+CL(L-\l)+\tilde C L^{1+\beta}
	\end{equation}
	where $C$ is a universal constant, $\beta\in(0,1)$ and $\tilde C$ depends on $p$, $M$, $\gamma$, $C_1$ and $\norm \rho\norm_{L^\infty}$.
	Moreover, the results of Lemma \ref{lemgoodboundary}, i.e.  for any $\l$ satisfying \eqref{eqconddelta}, the existence of $t$ with  \eqref{eqgoodboundary}, 
	\eqref{eqgoodboundarypropenergy}, 
	\eqref{eqgoodboundaryproppoints},   hold true.
\end{lem}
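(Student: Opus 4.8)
The proof combines three ingredients already at our disposal: the screening construction of Proposition~\ref{propj+j-}, the bounded‑below ``positive density'' $g$ of Proposition~\ref{prop49ss}, and a short bootstrap on the number of points in a boundary layer via Lemma~\ref{lembor}. First I would use the hypothesis \eqref{h2boundary} to extend $\j$ past $\partial K_L$: the first part of Proposition~\ref{propj+j-} (after extending $\rho$, when it is non‑constant, to a bounded H\"older function outside $K_L$) produces a field $\tilde\j$ on a square $K_T$ with $T=L+O(L^{\a})$, agreeing with $\j$ on $K_L$, with zero normal trace on $\partial K_T$, with $O(L^{1+\a})$ additional points that are mutually separated --- and separated from the boundary of the annulus --- by a fixed distance, and, by \eqref{energyj+}, with $|W(\tilde\j,\1_{K_T\setminus K_L})|\le CL^{1+\beta}$. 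Since the renormalized energy is additive over disjoint regions and $\Lambda\cap\partial K_L=\emptyset$ (Remark~\ref{reminterseclambda}), this gives $W(\tilde\j,\1_{K_T})=W(\j,\1_{K_L})+W(\tilde\j,\1_{K_T\setminus K_L})\le W(\j,\1_{K_L})+CL^{1+\beta}$. Applying Proposition~\ref{prop49ss} to $\tilde\j$ on an interior neighbourhood $U$ of $K_T$ (extended a little further so that $\widehat U$ stays inside its domain) yields a measure $g$ with $g\ge -c$, $c=C(\|\rho\|_{L^\infty}^2+1)$.

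Next I would turn \eqref{eqestimWg} into the engine of the proof: for any cutoff $\chi$ it replaces $W(\tilde\j,\chi)$ by $\int\chi\,dg$ up to an error $Cn(\log n+\|\rho\|_{L^\infty})$, where $n$ is the number of points of $\tilde\Lambda$ within distance $1$ of $\supp\nabla\chi$. The point is that for the cutoffs we care about, $\supp\nabla\chi$ meets either only the $O(L^{1+\a})$ well‑separated screening points (when $\chi$ equals $1$ on $K_{T-1}$ and is supported in $K_T$), or a bounded‑width layer around $\partial K_L$ or $\partial K_\l(a)$. For the former, $n=O(L^{1+\a})$ and, by the argument of the last step in the proof of Proposition~\ref{propj+j-} applied in the explicitly constructed annulus (where the points are separated and $\tilde\j$ is pointwise controlled), $W(\tilde\j,\chi)\le W(\j,\1_{K_L})+CL^{1+\beta}$; combining with $g\ge -c$ to pass from $\int\chi\,dg$ to $g(K_T)$ and using that the positive part of $g$ on $K_T\setminus K_{T-1}$ is bounded by $\frac12\int_{K_T\setminus K_L}|\tilde\j|^2+C\#\Lambda_+\le CL^{1+\beta}$, I get $g(K_T)\le W(\j,\1_{K_L})+\tilde CL^{1+\beta}$. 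For the latter, I first extract from these same inequalities, with the trivial bound $n\le\tilde\nu(K_{L+1})\le C_1L^2+CL^{1+\a}$, the crude estimate $W(\j,\chi_{K_L})=W(\tilde\j,\chi_{K_L})\le CL^2\log L$, then feed $W(\j,\chi_{K_L})\lesssim L^2\log L$ and $\tilde\nu(K_{L+1})\lesssim L^2$ into Lemma~\ref{lembor} --- whose proof only picks up an irrelevant extra power of $\log L$ under the weaker hypothesis, the bound on the circles $\partial K_t(a)$ coming from Lemma~\ref{lemnormp}, and its statement being used in the form centred at $a$ --- to obtain $n\le CL^{1+1/p}\log^{O(1)}L+CL^{1+\a}$, which is $\le CL^{1+\beta}$ once $\beta\in(\max\{1/p,\a\},1)$ and $L$ is large.

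With the sharp bound on $n$, the error in \eqref{eqestimWg} for $\chi=\chi_{K_L}$ is $\le CL^{1+\beta}$, so, using $\chi_{K_L}\le\1_{K_T}$ and $g\ge -c$,
\[
W(\j,\chi_{K_L})\le \int\chi_{K_L}\,dg+CL^{1+\beta}\le g(K_T)+CL^{1+\beta}\le W(\j,\1_{K_L})+\tilde CL^{1+\beta},
\]
which is \eqref{eqcontrolenergychi1}. For \eqref{eqcontrolenergy1}, given $K_\l(a)\subset K_L$, the same scheme gives $W(\j,\chi_{K_\l(a)})\le\int\chi_{K_\l(a)}\,dg+CL^{1+\beta}$ with the number of points near $\partial K_\l(a)$ controlled as above; writing $\chi_{K_\l(a)}\le\1_{K_L}$ and using $g\ge -c$ on $K_L\setminus K_\l(a)$ yields $\int\chi_{K_\l(a)}\,dg\le g(K_L)+C\vb K_L\setminus K_\l(a)\vb\le g(K_L)+CL(L-\l)+CL^{1+\beta}$, while $g(K_L)\le g(K_T)+CL^{1+\a}\le W(\j,\1_{K_L})+\tilde CL^{1+\beta}$, and combining gives \eqref{eqcontrolenergy1}. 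Finally, once \eqref{eqcontrolenergychi1} is proved we have $W(\tilde\j,\chi_{K_L})\le(C_2+o(1))\vb K_L\vb$ and $\tilde\nu(K_{L+1})\le(C_1+o(1))\vb K_{L+1}\vb$, so $\tilde\j$ on $K_{L+1}$ satisfies the hypotheses \eqref{eqcondnusquare} and \eqref{eqcondenergy} of Lemma~\ref{lemgoodboundary}; since $\tilde\j=\j$ on $K_L\supset K_\l(a)$, the conclusions \eqref{eqgoodboundary}, \eqref{eqgoodboundarypropenergy}, \eqref{eqgoodboundaryproppoints} follow at once.

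The main obstacle throughout is that the renormalized energy density is not positive, so $W(\j,\chi)$ cannot be compared to $W(\j,\1)$ by monotonicity; this is what forces the detour through the bounded‑below measure $g$ of Proposition~\ref{prop49ss} and makes us localise every cutoff comparison either against $g$ (where one pays only $O(L^{1+\a}\log L)$, since $\nabla$ of the relevant cutoffs meets only the $O(L^{1+\a})$ separated points created by screening) or inside the constructed screening annulus (where $\tilde\j$ is pointwise controlled). The one estimate requiring genuine work is the bound on the number of points of $\tilde\Lambda$ in a bounded‑width layer around $\partial K_L$ and $\partial K_\l(a)$, i.e. the bootstrap through Lemma~\ref{lembor}, and it is precisely there that the quantitative hypothesis \eqref{h2boundary} on $\varphi$ --- used in the first step for the screening --- is what upgrades the gain from $o(L^2)$ to the explicit power $L^{1+\beta}$ with $\beta<1$.
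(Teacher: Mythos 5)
Your proposal is correct and follows essentially the same strategy as the paper's proof: extend $\j$ past $\partial K_L$ by the screening construction of Proposition~\ref{propj+j-}, replace the renormalized energy density by the bounded-below measure $g$ of Proposition~\ref{prop49ss}, feed the a priori $O(L^2)$ bounds on $W$ and $\nu$ into Lemma~\ref{lembor} to bound the number of points near $\partial K_L$ and $\partial K_\l(a)$ by $CL^{1+1/p}\log^{1/2}L$, and then compare $W(\j,\chi_{K_\l(a)})$ with $W(\j,\chi_{K_{L+1}})$ via \eqref{eqestimWg} and $g\ge -C$. The only deviations are cosmetic: the paper uses the cleaner identity $W(\j,\chi_{K_\l(a)})=W(\j,\chi_{K_{L+1}})-W(\j,\chi_{K_{L+1}}-\chi_{K_\l(a)})$ and compares the second term directly with $\int(\chi_{K_{L+1}}-\chi_{K_\l(a)})\,dg$, avoiding your intermediate step of bounding $g(K_T)$ separately, and the crude $W(\j,\chi_{K_L})\lesssim L^2\log L$ estimate you introduce is not needed, since the screening extension already gives $W(\j,\chi_{K_r})\le W(\j,\1_{K_L})+CL^{1+\beta}\le CL^2$ directly for $r\in[L+1,L+L^\alpha]$, which is what the paper feeds into Lemma~\ref{lembor}.
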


\subsection{Proof of Theorem \ref{thmmaincb}} First of all, we observe that \eqref{eqlimsigma0cb} and \eqref{eqlimsigmaphicb} have been  proven in Proposition \ref{proplimmincon}. There remains to show the equidistribution properties of a minimizer $\j_\vp$.
The proof is based on a bootstrap argument:
by a mean value argument, using the a priori bound on the energy $W(\j_\vp,\indic_{K_L(a)}) \le C L^2$ and the fact that $W$ essentially controls the $L^p$ norm of $\j_\vp$ for $p<2$ (see Lemma \ref{lemnormp}), we can find a square close to $K_\l(a)$  which has a good boundary, i.e. such that \eqref{h2boundary} is satisfied (relative to $\l$). This is only possible if $\l$ is not too small compared to $L$, more precisely if $\l \ge L^{1/\delta}$ for some $\delta>1$ specified later. If indeed $\l\ge L^{1/\delta}$ then we are essentially done: a simple comparison argument in the square with the good boundary, combined with \eqref{eqlimsigmaphicon} allows to conclude that the energy (per unit volume) of $\j_\vp$ in the square is close to $\sigma_m^*$, and the number of points per unit volume is close to $m$.
If $\ell$ is smaller than $L^{1/\delta}$, then we bootstrap the argument: we first obtain by the above argument a control of the energy and the number of points on a square of size $L^{1/\delta}$ containing $K_\ell(a)$, and then we re-apply the reasoning starting from that square. This allows to go down to $\ell \le L^{1/\delta^2}$, and we iterate the procedure until we reach any $\ell$. This iteration will not cumulate any error, its only main restriction is that the final square will have to be at a certain distance away from $\p K_L(a)$, because of the repeated mean value arguments.
This restriction is natural however, since a boundary condition $\vp $ satisfying \eqref{h2boundary} can concentrate locally on $\p K_L(a)$, and it then takes a certain distance for a minimizer to ``absorb" the effect of such a concentration on the boundary.


We start with the following easy comparison  lemma.
\begin{lem}\label{propestimationgb}
Let $m$ be a positive number. Let $q\in (1,2)$, $L>0$ and  $\varphi\in L^q(\partial K_L)$ such that \eqref{h1boundary} is satisfied in $K_L$. Let
 $\j_\varphi$ be a minimizer of $W(\j,\1_{K_L})$ in  the class $\mathcal{A}_{m,\varphi}(K_L)$.
For every $a\in K_L$ such that $K_\l(a)\subset K_L$, let $\psi$ be the restriction of $\j_\varphi\cdot \vnu$ to $\partial K_\l(a)$.
If $\psi$ satisfies \eqref{h2boundary} in $K_\l(a)$ for some $p\in(1,2)$, $\gamma\in\left(\frac{3-p}{2},1\right)$ and $M>0$, then
\begin{equation}
\left\vb\frac{W(\j_\varphi,\1_{K_{\l}(a)} )}{\vb K_{\l(a)}\vb}-\sigma_m^*\right\vb\le o(1)_{\l\to+\infty}.
\end{equation}
\end{lem}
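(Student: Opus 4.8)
The plan is to exploit the global minimality of $\j_\varphi$ in $K_L$: I would show that its restriction to $K_\l(a)$ is itself a minimizer of $W(\cdot,\1_{K_\l(a)})$ over $\mathcal A_{m,\psi}(K_\l(a))$, so that $\frac{W(\j_\varphi,\1_{K_\l(a)})}{|K_\l(a)|}=\sigma_\psi(K_\l(a);m)$, and then conclude with the uniform limit \eqref{eqlimsigmaphicon}. First I would record that the setup is admissible: since $\psi$ satisfies \eqref{h2boundary} it lies in $L^p(\p K_\l(a))$, so by Remark \ref{reminterseclambda} no point of the configuration sits on $\p K_\l(a)$; hence $\j_\varphi$ is $C^1$ near $\p K_\l(a)$, $\j_\varphi|_{K_\l(a)}\in\mathcal A_{m,\psi}(K_\l(a))$, and $W(\j_\varphi,\1_{K_\l(a)})$ is finite. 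Integrating $\div\j_\varphi=2\pi(\nu-m)$ over $K_\l(a)$ and using the divergence theorem shows $\frac{1}{2\pi}\int_{\p K_\l(a)}\psi+m|K_\l(a)|=\nu(K_\l(a))\in\NN$, i.e. $\psi$ satisfies \eqref{h1boundary} in $K_\l(a)$; with \eqref{h2boundary} this makes $\sigma_\psi(K_\l(a);m)$ well defined (Remark \ref{remexistencemin}).

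The core step is local minimality, which I would establish by a cut-and-paste argument modeled on the proof of Lemma \ref{corestimationmin}. Only $\frac{W(\j_\varphi,\1_{K_\l(a)})}{|K_\l(a)|}\le\sigma_\psi(K_\l(a);m)$ needs an argument, the reverse being trivial since $\j_\varphi|_{K_\l(a)}$ is a competitor. Given $\tilde\j\in\mathcal A_{m,\psi}(K_\l(a))$, I would glue it with $\j_\varphi$: set $\j':=\tilde\j$ on $K_\l(a)$ and $\j':=\j_\varphi$ on $K_L\setminus K_\l(a)$. Because the normal traces match on $\p K_\l(a)$, no singular divergence is created there, so $\div\j'=2\pi(\nu'-m)$ with $\nu'$ a point measure avoiding $\p K_\l(a)$ (and also $\p K_L$, since $\varphi\in L^q$), and $\j'\cdot\vnu=\varphi$ on $\p K_L$; only $\curl\j'$ may carry a surface measure on $\p K_\l(a)$. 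Applying the curl-removal lemma (Lemma \ref{lemdiv0}, exactly as in Lemma \ref{corestimationmin}) yields $\j''\in\mathcal A_{m,\varphi}(K_L)$ with $W(\j'',\1_{K_L})\le W(\j',\1_{K_L})$. Since no point lies on $\p K_\l(a)$, the functional \eqref{defWchi} is additive over $K_\l(a)$ and $K_L\setminus K_\l(a)$, for both $\j'$ and $\j_\varphi$, so $W(\j',\1_{K_L})=W(\tilde\j,\1_{K_\l(a)})+W(\j_\varphi,\1_{K_L\setminus K_\l(a)})$; minimality of $\j_\varphi$ then gives $W(\j_\varphi,\1_{K_\l(a)})\le W(\tilde\j,\1_{K_\l(a)})$, and I would infimize over $\tilde\j$. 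To finish, since $\psi$ satisfies \eqref{h1boundary} and \eqref{h2boundary} in $K_\l(a)$ with the prescribed $p,\gamma,M$, the limit \eqref{eqlimsigmaphicon} of Proposition \ref{proplimmincon} (which holds with $K_\l$ replaced by $K_\l(a)$, uniformly in $a$ by translation invariance and uniformly in the boundary data) gives $\sigma_\psi(K_\l(a);m)=\sigma_m^*+o(1)_{\l\to+\infty}$.

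The one genuinely delicate point is that the glued field $\j'$ need not be curl-free across $\p K_\l(a)$ — its tangential trace may jump — so it cannot be used directly as a competitor for $\sigma_\varphi(K_L;m)$; one has to project away the curl first, which is harmless since this only decreases the renormalized energy (the very device already used in the proof of Lemma \ref{corestimationmin}). Everything else (membership in the right admissible classes, additivity of the functional $W$, and the fact that the bound is uniform in $\psi$ and $a$) is routine once $\Lambda\cap\p K_\l(a)=\emptyset$ is in hand.
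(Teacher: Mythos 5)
Your proposal is correct and is exactly the argument the paper has in mind; the paper's own proof simply asserts in one line that ``since $\j_\varphi$ is a minimizer, it must be a minimizer on $K_\l(a)$ with respect to its own boundary condition'' and then applies Proposition \ref{proplimmincon}, whereas you have spelled out the gluing, the curl-removal via Lemma \ref{lemdiv0}, the verification of \eqref{h1boundary} on the subsquare, and the use of Remark \ref{reminterseclambda} to rule out points on $\partial K_\l(a)$ --- all of which the paper leaves implicit but are indeed what makes the one-line assertion legitimate.
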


\begin{proof} Since $\j_\varphi$ is a minimizer, it must be a minimizer on $K_\l(a)$ with respect to its own boundary condition, i.e. we have
$$
 \sigma_\psi (K_\l(a);m) = \frac{W(\j_\varphi,\1_{K_{\l}(a)} )}{\vb K_{\l(a)}\vb}.
$$
The result then follows  by applying Proposition \ref{proplimmincon}.
\end{proof}

\begin{rem}\label{aprelem31}
We note that the results of Proposition \ref{proplimmincon}
and Lemma \ref{propestimationgb} still hold true if $K_L(a)$ and   $K_\l(a)$  are not squares but rectangles whose sides are both  comparable to $L$ as $L\to\infty$.\end{rem}

We now proceed to the proof of Theorem \ref{thmmaincb}.

\noindent
{\bf Step 1.} Proof of \eqref{eqenergycb} in the general case.
\\
Let $p\in(1,2)$, $\gamma\in\left(\frac{3-p}{2},1\right)$ and $\delta>0$  as in \eqref{eqconddelta}; let $\l\le L$ and $a\in K_L$ such that $K_\l(a)\subset K_L$.  
Let $\j_\vp$ be a minimizer for $\sigma_\vp(K_L; m)$.  
In view of \eqref{eqlimsigmaphicon}, we have $W(\j_\vp, \indic_{K_L}) \le (\sigma_m^* +1) \vb K_L \vb$ for $L$ large enough.  Moreover, the number of points (associated to $E_\vp$) in $K_L$ is completely determined by $\vp$, it is $\frac{1}{2\pi} \int_{\p K_L} \vp +m|K_L|$. In view of \eqref{h2boundary}, it is thus controlled by $(m+1) \vb K_L\vb$, for $L$ large enough.
{ Let us assume first that $d(K_\l(a), \p K_L) \ge 3\l^\gamma$ so that 
$d(K_\l(a), \p K_L) \ge \max(3\l^\gamma, L^\beta)$.
\\

{\it Case 1:} $\l + 3\l^\gamma \ge L^{1/\delta}$.  Let us then define the smaller scale $\l_1= \l +  3\l^\gamma$. Since we assumed $d(K_\l(a), \p K_L) \ge 3\l^\gamma$, we have $K_{\l+3\l^\gamma}(a)\subset K_L$, and so 
 there exists a center $a_1$ such that 
 \begin{equation}\label{pincsq}
K_{\l+3\l^\gamma}(a)\subset K_{\l_1}(a_1)\subset K_L
. \end{equation}
If $L$ is large enough (hence $\l$ is too), we may then apply Lemma \ref{lemcenergygb} to $\j_\varphi$ in $K_L$ with smaller square $K_{\l_1}(a_1)  \subset K_L$, $C_1=m+1$ and $C_2=\sigma^*_m+1$.
The lemma gives us the  existence of  $t_1$ satisfying $\l_1-2\l_1^{\gamma}\le t_1<\l_1-\l_1^\gamma$ and such that 
$$
\int_{\partial K_{t_1}(a_1)} \vb \j_\varphi \vb^p \le M_1 \l_1^{2-\gamma}
$$
with $M_1$ and $C$  depending only on $C_1$, $C_2$ and $p$. Moreover,
\begin{align}\label{nukt1}
\left\vb\frac{\nu(K_{t_1}(a_1))}{\vb K_{t_1}(a_1)\vb}-m\right\vb\le \frac{C} {\l_1^{\gamma}}
\end{align}
We note that $t_1\ge \l_1-2\l_1^{\gamma}= (\l+3\l^\gamma) -2 (\l+3\l^\gamma)^\gamma\ge \l$ as soon as $\l$ is large enough, thus, with \eqref{pincsq} we have 
\begin{equation}\label{pincsq2} K_\l(a)\subset K_{t_1}(a_1)\subset K_{\l_1}(a_1) \subset K_L.\end{equation}
Applying then Lemma  \ref{propestimationgb} in the square $K_{t_1}( a_1)$, we deduce
\begin{equation}\label{wt1}
\left\vb\frac{W(\j_\varphi,\1_{K_{t_1}(a_1)} )}{\vb K_{t_1}( a_1)\vb}-\sigma_m^*\right\vb\le o(1)_{\l_1\to+\infty}.
\end{equation}
Applying then Lemma \ref{lemcenergygb} in this same square, 
more precisely applying \eqref{eqcontrolenergy1}  and combining with \eqref{wt1}, we obtain  \eqref{eqenergycb}. 
\\

{\it Case 2:}  $\l + 3\l^\gamma \le L^{1/\delta}$.  Let 
 $\l_1= L^{1/\delta}$.
Since we have  $d(K_\l(a), \p K_L) \ge L^\beta$, we have $K_{\l+L^\beta}(a)\subset K_L$, and so 
 there exists a center $a_1$ such that
 \begin{equation}\label{pincsqp}
K_{\l+L^\beta}(a)\subset K_{\l_1}(a_1)\subset K_L
. \end{equation}
If $L$ is large enough, we then apply Lemma \ref{lemcenergygb} to $\j_\varphi$ in $K_L$ with smaller square $K_{\l_1}(a_1)  \subset K_L$, $C_1=m+1$ and $C_2=\sigma^*_m+1$.
The lemma gives us the  existence of  $t_1$ satisfying $\l_1-2\l_1^{\gamma}\le t_1<\l_1-\l_1^\gamma$ and such that 
$$
\int_{\partial K_{t_1}(a_1)} \vb \j_\varphi \vb^p \le M_1 \l_1^{2-\gamma}
$$
with $M_1$ and $C$  depending only on $C_1$, $C_2$ and $p$, and 
\eqref{nukt1} holds.   
Applying  Lemma  \ref{propestimationgb} in  $K_{t_1}( a_1)$, we have \eqref{wt1} in that square.
We note that we have $t_1-\l_1  \ge  -2\l_1^\gamma =  - 2L^{\gamma/\delta}     $  so in view of \eqref{pincsqp} we have 
$$K_{ \l+L^\beta- 2 L^{\gamma/\delta}}    (a)\subset K_{t_1}(a_1)\subset K_{\l_1}(a_1) \subset K_L.$$
Next, 
we  observe  that if $\l \ge \bar c$ for some $\bar c>0$, the assumptions of Lemma \ref{lemcenergygb} are satisfied in $K_{t_1}(a_1)$ with the same constants $C_1$ and $C_2$.  This is an immediate consequence of \eqref{nukt1} and \eqref{wt1}, as soon as $t_1$, hence $\l$ is large enough. 
We can thus re-apply Lemma \ref{lemcenergygb}  in $K_{t_1}(a_1)$ with new subscale 
$\l_2 = \max(\l+ 3\l^\gamma, \l_1^{1/\delta})$. We may distinguish the two cases $\l+ 3\l^\gamma\ge \l_1^{1/\delta}$ and $\l+ 3\l^\gamma\le \l_1^{1/\delta}$ just as above, and iterate the same proof. This way, we define a finite sequence $\l_k$ with $\l_k=\max(\l+ 3\l^\gamma, \l_{k-1}^{1/\delta}$) with terminates at $ \l + 2\l^\gamma$. This takes at most $s$ steps where \begin{equation}
\label{defs}
s := - \left[\frac{\log \frac{\log \l}{\log L} } {\log \delta}\right],\end{equation} which is the smallest integer  such that $L\ge \l\ge L^{\delta^{-s}} $ (here $[\cdot ]$ denotes the integer part).
Bounding each time $\l_k$ by $L^{1/\delta}$, we obtain this way a sequence of $\l_k,t_k, a_k$ with 
$$K_{\l +L^\beta-2 s L^{\gamma/\delta}}(a) \subset K_{t_s}(a_s)\subset K_{\l_s}(a_s) \subset \dots \subset  K_{t_1}(a_1) \subset K_{\l_1}(a_1) \subset K_L(a)  .$$
Choosing $1> \beta>\gamma/\delta$ (which is possible since $\gamma<1$ and $\delta>1$) in view of the definition of $s$, we have   
$L^\beta-2 s L^{\gamma/\delta}\ge 0$ if $L$ is large enough, which ensures that 
$K_\l(a) \subset K_{t_s}(a_s)$.
 
The result of the final step of applying Lemma \ref{lemcenergygb} yields 
 $t_s$ and $a_s$ such that 
\eqref{h2boundary} holds on $\p K_{t_s}(a_s)$ and
\begin{equation}\label{preconc1}
	\left\vb\frac{W(\j_\varphi, \1_{K_{t_s}(a_s)} )}{\vb K_{t_s}(a_s)\vb}-\sigma_m^*\right\vb\le o(1)_{\l\to+\infty}
\end{equation}
and
\begin{align}\label{preconc2}
\left\vb\frac{\nu(K_{t_s}(a_s))}{\vb K_{t_s}(a_s)\vb}-m\right\vb\le \frac{C} {\l^{\gamma}}.
\end{align}
	 Applying one last time Lemma \ref{lemcenergygb} in  $K_{t_s}(a_s)$ (on which the assumptions are satisfied) which contains $K_\l(a)$, more precisely applying \eqref{eqcontrolenergy1}  and combining with \eqref{preconc1}, we obtain  \eqref{eqenergycb} under the assumption $d (K_\l(a), \p K_L) \ge 3  \l^\gamma$, implied by $3\l^\gamma\le L^\beta$ (note that this includes in particular $\l=\bar{c}$ if $L$ is large enough.) 
 \\

 Finally, let us  remove the assumption $d (K_\l(a), \p K_L) \ge 3  \l^\gamma$ by treating the case $3\l^\gamma\ge L^\beta$ 
  and $d(K_\l(a), \p K_L) \ge L^\beta$. In that case, let us   partition $K_{\l} (a)$ into identical squares $\bar{K_i}$ of sidelengths $\in [\bar{c}, 2\bar{c}]$ and satisfying $d(K_i,\p K_L) \ge L^\beta$. Let us then consider $K_i$ the open squares with same centers and sidelength augmented by $1$. They make an open cover of $K_{\l}(a)$.   Let $\chi_i$ be an associated  partition of unity. We note that $\sum_i \chi_i \chi_{K_\l(a)}=\chi_{K_\l(a)}$. We may also require that each $\chi_i$ satisfies \eqref{defcutoff} relative to each $K_i$, in other words, each $K_i$ is a $\chi_{K_i}$. Then also $\chi_{K_\l(a)} \chi_i$ are equal to $\chi_{K_i'}$ for  $K_i'= K_i\cap K_\l(a)$. Since $W(\j, \chi)$ is linear in $\chi$,  we may write 
 $$W(\j_\vp, \chi_{K_\l(a) })= \sum_i W(\j_\vp, \chi_{K_\l(a)}  \chi_i)= \sum_i W(\j_\vp, \chi_{K_i'}).$$
Inserting \eqref{eqenergycb} which is known to hold for the $K_i'$, we obtain \eqref{eqenergycb} for $K_\l(a)$, and this completes this step.}
 \\
 
 \noindent 
 {\bf Step 2.} Proof of \eqref{eqnumpointscb}.
 First, we note that the result of the previous step applied to $\l = \bar{c}$ implies that there exists  a constant $C>0$ (depending on $m$, $M$, $p$, and  $\gamma$) such that 
 \begin{equation}\label{bornlocw} W(\j_\vp, \chi_{K_{\bar{c}  }  (a)} ) \le C,\end{equation} for any $a$ such that $d(a, \p K_L) \ge CL^\gamma$.
 It also follows that, modifying $\bar{c}$ and $C$ if necessary, we  have
 \begin{equation}\label{bornlocnu}
| \nu(K_{\bar{c}}(a))|\le C,\end{equation} for any $K_{\bar{c}(a)}$ satisfying the same assumption.
 To see that it suffices to apply the last step of the bootstrap above with $\l_s= \hal \bar{c}$, then \eqref{preconc2} and the positivity of the measure $\nu$ allow to deduce \eqref{bornlocnu}.
 Finally, combining  these two  facts and Lemma~\ref{lemnormp}, it follows that 
 \begin{equation}\label{norml1}
 \int_{K_{\bar{c}/2} (a) } |\j_\vp|\le C\end{equation}
 for some other constant depending only on $m$, $M$, $p$, $\gamma$, and squares satisfying the same assumption.  
 
 Let now $ K_\l(a)$ be any square satisfying the requirements of the theorem, with $\l$ large enough.
 Let $\chi_{int}$ be a smooth nonnegative function supported in $K_\l(a)$, and equal to $1$ in $K_{\l-1}(a)$, with $\|\nab \chi_{int}\|_{L^\infty} \le 2$. Similarly, let $\chi_{out}$ be a smooth nonnegative function supported in $K_{\l+1}(a)$ and equal to $1$ in $K_\l (a)$ with $\|\nab \chi_{out}\|_{L^\infty} \le 2$. By positivity of $\nu$, we have 
 $  \int \chi_{int} \nu\le \nu(K_\l(a)) \le \int \chi_{out} \nu$, and also by boundedness of $m$ and definition of $\chi_{int}$ and $\chi_{out}$ we have 
 \begin{equation}\label{eavlnu}
 \int \chi_{int} (\nu - m) - C \l   \le \nu (K_\l(a))- m|K_\l(a)| \le \int \chi_{out} (\nu -m) + C\l\end{equation}
 where $C$ depends only on $m$.
 On the other hand, using that $\div \j_\vp = 2\pi (\nu-m)$ and integrating by parts, we have 
 $$\left| \int \chi (\nu - m)\right|\le \frac{1}{2\pi} \int |\nab \chi| |\j_\vp|,$$
 where $\chi$ stands for $\chi_{int} $ or $\chi_{out}$. 
 But the support of $\nab \chi$ can be split into $O(\l)$ squares of size $\bar{c}/2$, thus on which \eqref{norml1} holds. It thus follows that $\left| \int \chi (\nu - m)\right|\le C \l$, for some other constant $C>0$ depending only on $m$, $M$, $p$, $\gamma$,   and inserting into \eqref{eavlnu}, the result 
 \eqref{eqnumpointscb} follows.
\\

\noindent 
{\bf Step 3.} Case where $\int_{\p K_L\cap K_\l(b)} |\vp|^p \le M \l^{2-\gamma}.$
{ First, we may always extend the vector field $\j_\vp$ outside of $K_L$ to $K_{L+L^\g}$, using  multiple times Proposition \ref{propj+j-} with $\ro=m$.  This gives a vector field $\j_\vp$ satisfying the same a priori bounds. Let $K_\l(a)\subset K_L$. Just as in Step 1, we define  a finite sequence $\l_k$ with $\l_k=\max(\l+ 3\l^\gamma, \l_{k-1}^{1/\delta}$) with terminates at $ \l + 2\l^\gamma$. 

We first find $a_1$ such that  $$K_{\l + L^\beta} (a) \subset K_{\l_1 } (a_1) \subset K_{L+L^\gamma}.$$
We may then apply Lemma~\ref{lemcenergygb} to the extended $\j_\vp$ in $K_{\l_1}(a_1)$. It provides a $K_{t_1}(a_1)$  with $ \l_1 - 2\l_1^\gamma\le t_1 \le \l_1- \l_1^\gamma$, and 
\begin{equation}\label{bonboragain}
\int_{\p K_{t_1}(a_1)} |\j_\vp |^p \le M \l_1^{2-\gamma}\end{equation} and \eqref{nukt1}. If $s\ge 2$,  we may next find $a_2$ such that
$$  K_{\l + L^\beta - 2\l_1^{\gamma}} (a) \subset K_{\l_2}(a_2)   \subset K_{t_1}(a_1) \subset K_{\l_1 } (a_1) \subset K_{L+L^\gamma}.$$
We then consider $R_{t_1}= K_{t_1(a_1)}\cap K_L$.
It is a rectangle and its sidelengths are both comparable to $ \l_1$.
We also note that $R_{t_1}$ contains $K_{\l_2}(a_2)\cap K_L$ which contains $K_\l(a)$. 
 Since $\l_1 \ge \l$ and because of the additional assumption placed on $\vp$ on $\p K_L$, and of \eqref{bonboragain}, we have \begin{equation}\label{bonboragain2}
\int_{\p R_{t_1}} |\j_\vp |^p \le M \l_1^{2-\gamma}\end{equation}
We may then apply Lemma \ref{propestimationgb} or rather Remark \ref{aprelem31}, which yields that 
\begin{equation}\label{wr1}
\left\vb\frac{W(\j_\varphi,\1_{R_{t_1}} )}{\vb R_{t_1}\vb}-\sigma_m^*\right\vb\le o(1)_{\l_1\to+\infty}.
\end{equation}
We now iterate the reasoning: we consider $\j_\vp$ restricted to $R_{t_1}$ and extend it outside $R_{t_1}$ using Proposition \ref{propj+j-} with $\ro=m$.  This gives again a vector field $\j_\vp$ satisfying the same a priori bounds.
We can then reapply Lemma \ref{lemcenergygb} to the extended vector field, with respect to the square $K_{\l_2(a_2)}$, the assumptions being verified with the same constants. This gives a $t_2\ge \l_2-2\l_2^\gamma$, from which we define an $R_{t_2}$, etc.
The last iteration of the reasoning gives $t_s\in [\l_s-2\l_s^\gamma, \l_s - \l_s^\gamma]\subset[\l, \l+3\l^\gamma]$ and a rectangle $R_{t_s}$ containing $ K_\l(a)$ (hence its sidelengths have to be  $\sim\l$) on which 
\begin{equation}\label{wrs}
\left\vb\frac{W(\j_\varphi,\1_{R_{t_s}} )}{\vb R_{t_s}\vb}-\sigma_m^*\right\vb\le o(1)_{\l\to+\infty}.
\end{equation} and 
\begin{align}\label{nukts}
\left\vb\frac{\nu(R_{t_s})}{\vb R_{t_s} \vb}-m\right\vb\le \frac{C} {\l^{\gamma}}
\end{align}
Applying then Lemma \ref{lemcenergygb} over $R_{t_s}$, the conclusion \eqref{eqcontrolenergychi1} provides the desired result. 
If the assumption holds with $\l $ of order $1$, then we can conclude also that 
 \eqref{eqnumpointscb} holds, by arguing exactly as in Step 2.}

\subsection{Proof of Theorem \ref{thmmainper}}

First of all, we observe that \eqref{eqlimsigmaper} is proved in Proposition \ref{proplimmincon}.



With the same arguments used in the proof of Lemma \ref{propestimationgb}, we may  obtain the following lemma.
\begin{lem}\label{propestimationgbper}
Let $m$ be a positive number. Let  $L>1$ and $\mathbb T_{L}:=\RR^2/(2L\mathbb Z)^2$ such that $\vb \mathbb T_L\vb\in \mathbb N$.
Let
 $\bj$ be a minimizer of $W(\j,\1_{\TT_{L}})$ on $\mathcal{A}_{m,per}(\TT_L)$.
For every $a\in \TT_L$, let $\psi$ be the restriction of $\bj\cdot \vnu$ to $\partial K_\l(a)$.

If $\psi$ satisfies \eqref{h2boundary} in $K_\l(a)$ for some $p\in(1,2)$, $\gamma\in\left(\frac{3-p}{2},1\right)$ and $M>0$, then
\begin{equation}
\left\vb\frac{W(\bj,\1_{K_{l}(a)} )}{\vb K_{\l(a)}\vb}-\sigma_m^*\right\vb\le o(1)_{\l\to+\infty}.
\end{equation}
\end{lem}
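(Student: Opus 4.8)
The plan is to follow the proof of Lemma~\ref{propestimationgb}, simply replacing the ambient square $K_L$ by the torus $\TT_L$. Throughout I assume $\l\le L$, so that $K_\l(a)$ may be regarded as a square embedded in $\TT_L$ (this is automatic in the regime in which the lemma will be applied). The first step is to restrict $\bj$ to $K_\l(a)$ and record that there it satisfies $\di\,\bj=2\pi(\nu-m)$ and $\curl\bj=0$, with $\bj\cdot\vnu=\psi$ on $\partial K_\l(a)$. Since by hypothesis $\psi\in L^p(\partial K_\l(a))$, Remark~\ref{reminterseclambda} gives $\Lambda\cap\partial K_\l(a)=\emptyset$, so that $\bj|_{K_\l(a)}\in\mathcal A_{m,\psi}(K_\l(a))$ and $W(\bj,\1_{K_\l(a)})$ is well defined. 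Integrating $\di\,\bj$ over $K_\l(a)$ shows $\frac1{2\pi}\int_{\partial K_\l(a)}\psi+m|K_\l(a)|=\nu(K_\l(a))\in\NN$, so $\psi$ automatically satisfies \eqref{h1boundary}, while \eqref{h2boundary} holds by assumption. In particular $\frac{W(\bj,\1_{K_\l(a)})}{|K_\l(a)|}\ge\sigma_\psi(K_\l(a);m)$.

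The heart of the matter is the reverse inequality, i.e.\ the localization-of-minimality principle: since $\bj$ minimizes $W(\cdot,\1_{\TT_L})$ over $\mathcal A_{m,per}(\TT_L)$, its restriction to $K_\l(a)$ must minimize $W(\cdot,\1_{K_\l(a)})$ over $\mathcal A_{m,\psi}(K_\l(a))$, so that
\begin{equation*}
\frac{W(\bj,\1_{K_\l(a)})}{|K_\l(a)|}=\sigma_\psi(K_\l(a);m).
\end{equation*}
To see this one argues by contradiction: given a competitor $\tilde E\in\mathcal A_{m,\psi}(K_\l(a))$ with strictly smaller renormalized energy, one re-inserts it into $\bj$ on $\TT_L\setminus K_\l(a)$. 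Matching of the normal traces along $\partial K_\l(a)$ prevents any distributional divergence being created on that interface; the only subtlety is the possible mismatch of the tangential traces, which is absorbed by interposing a thin screening layer inside $K_\l(a)$ via Proposition~\ref{propj+j-}, at an energy cost $o(\l^2)$ that is negligible compared to $|K_\l(a)|$. This produces an element of $\mathcal A_{m,per}(\TT_L)$ of strictly smaller energy, a contradiction. This is precisely the step asserted without detail in the proof of Lemma~\ref{propestimationgb}, and it is the one genuinely delicate ingredient of the argument; everything else is bookkeeping.

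Finally, the conclusion follows at once from Proposition~\ref{proplimmincon}: the uniform limit \eqref{eqlimsigmaphicon}, applied to the square $K_\l(a)$, gives $\sigma_\psi(K_\l(a);m)\to\sigma_m^*$ as $\l\to+\infty$ \emph{uniformly} over all $\psi$ satisfying \eqref{h1boundary} and \eqref{h2boundary} with the prescribed $p$, $\gamma$, $M$ — which is exactly what is needed, since we have no explicit control on the trace $\psi$ produced by the minimizer. Combining with the identity above yields $\left|\frac{W(\bj,\1_{K_\l(a)})}{|K_\l(a)|}-\sigma_m^*\right|\le o(1)_{\l\to+\infty}$, as claimed. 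The main obstacle, to reiterate, is the tangential-trace correction in the re-gluing step; the remaining points — that $\psi$ inherits \eqref{h1boundary}, and that Proposition~\ref{proplimmincon} is available for squares $K_\l(a)$ with arbitrary centre — are routine.
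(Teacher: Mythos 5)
Your proposal follows exactly the route the paper takes (and leaves implicit): restrict $\bj$ to $K_\l(a)$, check that its own trace $\psi$ satisfies \eqref{h1boundary} automatically and \eqref{h2boundary} by hypothesis, invoke local minimality with respect to $\psi$, and conclude by the \emph{uniform} convergence \eqref{eqlimsigmaphicon} of Proposition~\ref{proplimmincon}. The one step that is off is your treatment of the re-gluing: a screening layer built from Proposition~\ref{propj+j-} adjusts \emph{normal} traces across an annulus and does nothing to remove the jump of the \emph{tangential} component, so after interposing it the glued field still carries a singular $\curl$ on the interface and is not in $\mathcal A_{m,per}(\TT_L)$. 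The correct (and cost-free) repair is Lemma~\ref{lemdiv0}: writing $\curl\bar\j$ for the glued field, one adds $\nabla^\perp\zeta$ with $-\Delta\zeta=\curl\bar\j$ (mean-zero on the torus), which restores $\curl=0$ while \emph{decreasing} $W$; this is precisely how the paper handles the identical localization claim for the Coulomb gas in Section~\ref{sec5}, and it yields the exact identity $W(\bj,\1_{K_\l(a)})/\vb K_\l(a)\vb=\sigma_\psi(K_\l(a);m)$ rather than one up to $o(\l^2)$. Since even your weaker $o(\l^2)$ version would suffice for the stated $o(1)$ conclusion, this is a misattributed tool rather than a fatal gap, but as written the screening-layer step does not do what you claim.
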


To conclude the proof of the theorem, we proceed as in the proof of Theorem \ref{thmmaincb}.
Let $\bar\j$ be a minimizer for $\sigma_{per}(L;m)$; then $\bj$ is $K_{L}(a)$-periodic and $W(\bar \j)=\sigma_{per}(L;m)$

As a consequence of \eqref{eqlimsigmaper} and of the definition of $W$,
$$
\frac{W(\bar\j,\chi_{K_L})}{\vb \mathbb T_{L}\vb}\le W(\bar \j)+o(1)_{L\to+\infty}\le \sigma^*_m+o(1)_{L\to+\infty}.
$$
Moreover,
$$
\frac{\nu(K_{L+1})}{\vb K_L\vb}\le 2m
$$
Hence, if $L$ is large enough we can take $C_1=2m$ and $C_2=\sigma^*_m+1$ and apply Lemma \ref{lemgoodboundary}. Then we may apply Lemma \ref{lemcenergygb} and iterate as in the proof of Theorem~\ref{thmmaincb} to obtain \eqref{eqcontrolenergychi1} and  conclude the proof. The proof of \eqref{eqnumpointsper}  is the same as above in Theorem \ref{thmmaincb}.


\subsection{Proof of Lemmas \ref{lemgoodboundary} and \ref{lemcenergygb}}

The proof of Lemma \ref{lemgoodboundary} is an adaptation of the proof of Lemma 4.14 of \cite{sandierserfaty}.
{\begin{proof}[Proof of Lemma \ref{lemgoodboundary}]
Let $a\in K_L$ and $\l$ such that (\ref{eqconddelta}) is satisfied.

\noindent\textbf{Step 1:} Denote by $g_L$ the result of applying Proposition \ref{prop49ss}  in  $K_{L+1}$ to $(\j,\nu)$. We apply \eqref{eqestimWg} to functions of the form  $\chi(x)=\vartheta(\norm x-a\norm_\infty)$, i.e. whose level sets are squares centered in $a$, with the additional assumption that $\vartheta'(t)=0$ outside $[r-2,r-1]$ and $\vartheta=0$ on $[r-1,+\infty)$ with $r\le \l-3$. Since for any Radon measure $\mu$ on $K_L$ we have
$$
\int\chi\,d\mu=-\int_0^{r-1}\vartheta'(t)\mu(K_t{(a)})\,dt,
$$
we deduce with \eqref{eqestimWg} that
\begin{align}\label{eqestimWglrho}
\int_{r-2}^{r-1}(W(\j,\1_{K_t{(a)}})&-g_L(K_t{(a)}))\vartheta'(t)\,dt\nonumber\\
&=-W(\j,\chi)+\int \chi\,dg_L\le  Cn(\log n+\norm \rho\norm_{L^\infty(K_L)})\norm \vartheta'\norm_\infty
\end{align} in view of \eqref{eqestimWg}
where $n=\#\{p\in\Lambda\vb B(p,1)\cap\mathrm{supp}(\nabla\chi)\neq\emptyset\}$, so that 
\begin{equation}\label{eqestimn}
 n\le \nu(K_{r+1}{(a)})-\nu(K_{r-2}{(a)})
\le C \l^{1-\frac{1}{p}} L^{\frac{2}{p}} \log^{\hal} L + C \l \le C
 \l^{\frac{2\delta- 1}{p}+1} \log^{\frac{1}{2}}  \l
\end{equation}where we have used Lemma \ref{lembor} and  $L\le \l^\delta$. Here 
 $ C$ depends only on $p$, $\norm \rho\norm_{L^\infty}$, and on the constants in (\ref{eqcondnusquare}) and (\ref{eqcondenergy}).
Inserting this into \eqref{eqestimWglrho}, we deduce by duality that
\begin{equation}\label{eqestimWgl2}
\int_{r-2}^{r-1}\vb W(\j,\1_{K_t{(a)}})-g_L(K_t{(a)})\vb\,dt\le  C\l^{\frac{2\delta -1}{p}+1}\log^{\frac{3}{2}}\l.
\end{equation}

\noindent\textbf{Step 2:} For any integer $k\ge 1$ let $\xi_{k}=\chi_{K_{k+1}{(a)}}-\chi_{K_k{(a)}}$, and let $\xi_0=\chi_{K_1(a)}$. Then $\xi_{k}\ge 0$, since $\chi_{K_{k+1}{(a)}}=1$ on $K_k{(a)}$ and $\chi_{K_k{(a)}}\le 1$ and is supported in $K_k{(a)}$. Moreover $\xi_{k}$ is supported in $\mathcal C_k=K_{k+1}{(a)}\backslash K_{k-1}{(a)}$. Since (\ref{eqcondnusquare}) holds and $\l \le L^\delta$, the number of integers $k$ in $[\l-2\l^{\gamma}+2,\l-\l^{\gamma}-2]$ such that $\nu(K_{k+2}{(a)}\backslash K_{k-2}{(a)})\le \tilde C\l^{2\delta-\gamma}$ is greater than $\frac{\l^\gamma}{2}$ if $\tilde C$ is chosen large enough. On the other hand, using $g_L \ge -C$, we have
\begin{multline*}
\sum_{k=[\l-2\l^{\gamma}+2]}^{[\l-\l^{\gamma}-2]}\int \xi_{k}\,dg_L=\int(\chi_{K_{[\l-\l^{\gamma}]-1}{(a)}}-\chi_{K_{[\l-2\l^{\gamma}+2]}{(a)}})
\,dg_L
\le \int \chi_{K_L} \, dg_L + C L^2.
\end{multline*}
We then  observe that $|\int \chi_{K_L} \, dg_L- W(\j, \chi_{K_L}) |
\le L^2 { \log L}$ using \eqref{eqestimWg}
 and \eqref{eqcondnusquare} to bound $n\log n$. Since $W(\j, \chi_{K_L})\le C_2 |K_L|$ by \eqref{eqcondenergy} and $L\le \l^\delta$, it follows that
$$\sum_{k=[\l-2\l^{\gamma}+2]}^{[\l-\l^{\gamma}-2]}\int \xi_{k}\,dg_L\le
  CL^2{\log L} \le  C \l^{2\delta}{\log \l }.
$$
 Since $g_L\ge -C$ we have $\int \xi_{k}\,dg_L\ge - C\l$ and therefore the number of integer $k$'s between $[\l-2\l^{\gamma}+2]$ and $[\l-\l^{\gamma}-2]$ such that $\int \xi_{k}\,dg_L\le \tilde C\l^{2\delta-\gamma}{ \log \l}$ is larger than $\frac{\l^{\gamma}}{2}$ if $\tilde C$ and $\l$ are chosen large enough. We can thus choose an integer $k\in [\l-2\l^{\gamma}+2,\l-\l^{\gamma}-2]$ satisfying both conditions, i.e.
\begin{equation}\label{eqcontrolCk}
\nu(K_{k+2}{(a)}\backslash K_{k-2}{(a)})\le \tilde C\l^{2\delta-\gamma}, \quad\int \xi_{k}\,dg_L\le \tilde C\l^{2\delta-\gamma}{\log \l},
\end{equation}
for some $\tilde C$ which depends on $C_1$ and $C_2$.

Applying Proposition \ref{prop49ss} in $\mathcal C_k$ to $\xi_{k}$,  and using \eqref{eqcontrolCk} to control $n\log n$, we deduce
\begin{equation*}
\left\vb W(\j,\xi_{k})-\int \xi_{k}\,dg_L\right\vb\le
{ C\l^{2\delta -\gamma}\log \l}
\end{equation*}
hence  $W(\j,\xi_{k})\le   C\l^{2\delta-\gamma}\log \l$.
Applying Lemma \ref{lemnormp} over $\mathcal C_k$, we find for $p<2$
$$
\int_{\mathcal C_k}\vb \xi_{k}\vb^{\frac{p}{2}}\vb \j\vb^p\le \tilde C\l^{1-\frac{p}{2}}(\l^{2\delta-\gamma}\log \l)^{p/2}\le  C \l^{2-\gamma}.
$$
 because $\delta< \frac{1}{p}\left(1-\gamma+\frac{p}{2}(1+\gamma)\right)$. Since $\xi_{k}=1$ if $\norm x-{ a}\norm_{\infty}=k$, and $|\nab \xi_k |\le C$, it follows that
$$
\int_{K_{k+\frac{1}{ C}}{(a)}\backslash K_{k-\frac{1}{ C}}{(a)}}\vb \j\vb^p\le  C \l^{2-\gamma}.
$$
By a mean value argument on this integral as well as on \eqref{eqestimWgl2} (applied to $r=k+1$), 
we deduce the existence of $t\in[k-1,k]$, hence $t\in [\l-2\l^{\gamma},\l-\l^{\gamma}]$, such that, on the one hand
\begin{equation*}
\int_{\partial K_t{(a)}}\vb \j\vb^p\le  C \l^{2-\gamma}
\end{equation*}
proving \eqref{eqgoodboundary};  and on the other hand
\begin{equation}\label{eqestimWgl3}
\vb W(\j,\1_{K_t{(a)}})-g_L(K_t{(a)})\vb\le  C\l^{\frac{2\delta -1}{p}+1}\log^{\frac{3}{2}}\l.
\end{equation}
Next, using again that $g_L\ge  - C$ we have that
$$
 g_L(K_t(a))\le \int \chi_{K_{\l}(a)} \,dg_L + C \l^{1+\gamma}.$$
Combining with  \eqref{eqestimWg} and using  \eqref{eqestimn} to control the error,  we are led to
$$
g_{L}(K_t{(a)})\le W(\j,\chi_{K_\l{(a)}})+ C \l^{1+\gamma}+ C\l^{\frac{2\delta -1}{p}+1}\log^{\frac{3}{2}}\l,
$$
which together with \eqref{eqestimWgl3} yields \eqref{eqgoodboundarypropenergy}. 
Finally, from \eqref{eqgoodboundary} and H\"older's inequality, we have
\begin{align*}
\left\vb \nu(K_t{(a)})-\int_{K_t(a)}\rho\,dx\right\vb=&\,\left\vb\int_{\partial K_t}\j\cdot\vnu\right\vb\le\norm \j \norm_{L^p(\partial K_t)}\vb\partial K_t\vb^{1-\frac{1}{p}}\\
\le&\, C \l^{\frac{2-\gamma}{p} + 1-\frac{1}{p}}\le C \l^{2-\gamma}
\end{align*}
since $\frac{2-\gamma}{p} + 1-\frac{1}{p}<2-\gamma$ whenever $p>1$.
\end{proof}}


\begin{proof}[Proof of Lemma \ref{lemcenergygb}] 
First of all, we apply Proposition \ref{propj+j-} to $\vp$ with $\rho(x)=1$ in $ K_{L+2L^\a}\backslash K_L$; this gives us a vector field $\j_+$, through which we  can extend $\j$  into a vector field, still denoted  $\j$, on $K_{L+L^\a}$. It satisfies,
 for all $r\in[L+1,L+L^\a]$,
\begin{equation*}
 	W(\j,\chi_{K_{r}})\le W(\j,\1_{K_{L}})+ C L^{1+\beta}
 \end{equation*}
 and
\begin{equation}\label{eqnumpointextr}
	\nu(K_{r})\le \nu(K_{L})+ CL^{1+ \gamma}
\end{equation}
for some positive constant $ C$ and for some $\beta \in (0,1)$, depending on $p$, $M$, $\gamma$. Next, let $g_{L}$  be the result of applying Proposition \ref{prop49ss} in $K_{L+1}$; by using \eqref{eqestimWg}, we have for $a\in K_L$ such that $K_\l(a)\subset K_L$,
\begin{multline}\label{dete}
 W(\j,\chi_{K_{\l}(a)})= W(\j,\chi_{K_{L+1}})-W(\j,\chi_{K_{L+1}}-\chi_{K_{\l}(a)})\\
  \le  W(\j,\chi_{K_{L+1}})-\int   (\chi_{K_{L+1}}-\chi_{K_{\l}(a)})  \,   dg_{L} + Cn_{L+1}\log n_{L+1} +Cn_\l\log n_\l
\end{multline}
with $n_{L+1}\le \nu(K_{L+2})-\nu(K_{L-1})$ and $n_\l\le\nu(K_{\l+1}{(a)})-\nu(K_{\l-2}{(a)})$.
In view of Lemma \ref{lembor} applied in $K_{L+4}$, we have
\begin{align*}  
  n_{L+1}\le C L^{1+\frac{1}{p}}\log^{\frac{1}{2}}L\ \mbox{and}\ n_\l\le  C \l^{1-\frac{1}{p}}L^{\frac{2}{p}}\log^{\frac{1}{2}}L\le C L^{1+\frac{1}{p}}\log^{\frac{1}{2}}L.
\end{align*}
Inserting into \eqref{dete} and using that $g_L\ge -C$ it follows that
\begin{align*}
  W(\j,\chi_{K_{\l}(a)})
  \le&\, W(\j,\chi_{K_{L+1}})+C\vb K_{L+1}\backslash K_{\l-1}(a)\vb+ C L^{1+\frac{1}{p}}\log^{\frac{3}{2}}L
\end{align*}
which gives \eqref{eqcontrolenergy1}, as well as \eqref{eqcontrolenergychi1} by choosing $K_l(a)=K_L$.

Finally, taking $r=L+1$ in \eqref{eqnumpointextr} and using \eqref{eqcontrolenergychi1}, we find that  the assumptions of Lemma \ref{lemgoodboundary} are satisfied in  $K_{L+1}$ for the extended $\j$. We may then obtain the same results.

\end{proof}

\section{The Case of a Non-Constant Background}\label{sectionnonconstant}
The goal of this section is to obtain similar results as Proposition \ref{proplimmincon}, but in the case of a varying background, in preparation for the study of the Coulomb gas minimizers.

The proofs are   similar to the above and also to  those of \cite[Section 7]{sandierserfatygas}, except that we have to be more careful with error terms due to the more general varying background. We outline the main differences.

\begin{prop} \label{proplimminrhoup} Let $a\in\mathbb R^2$, $L>0$ and $\frac{1}{2}\le\lambda\le 1$. Let $\rho$ be a nonnegative $\mathcal C^{0,\lambda}(K_L(a))$ function for which there exist $\underline \rho, \overline \rho>0$ such that $\underline \rho \le \rho(x)\le \overline \rho$. 
If $\int_{K_L(a)} \rho(x)\,dx \in \NN$, we have
\begin{align}\label{eqlimsigmarhoup}
\sigma_{0}(K_{L}(a); \rho)\le& \frac{1}{\vb K_L(a)\vb}\int_{K_L(a)}\min_{\mathcal A_{\rho(x)}}W\,dx+C\left(\norm \rho\norm_{\mathcal C^{0,\lambda}(K_L(a))}L^{\beta \lambda}\right)\\
&+C\left(\norm \rho\norm_{\mathcal C^{0,\lambda}(K_L(a))}L^{\beta \lambda}\right)^2+o(1)_{L\to+\infty}\nonumber
\end{align}
for some $\beta\in(0,1)$ and $C$ positive constant.
\end{prop}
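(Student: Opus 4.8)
The plan is to produce, for $L$ large, an admissible competitor $E\in\mathcal B_{\rho,0}(K_L(a))$ with $W(E,\1_{K_L(a)})$ bounded by the right-hand side times $|K_L(a)|$; since $\sigma_0(K_L(a);\rho)\le W(E,\1_{K_L(a)})/|K_L(a)|$ this gives the claim. Fix a small exponent $\beta_0>0$ and set $\ell=L^{\beta_0}$. By mean-value arguments as in Step~1 of the proof of Proposition~\ref{propj+j-}, and using the compatibility hypothesis $\int_{K_L(a)}\rho\in\NN$, partition $K_L(a)$ into cells $\{Q_i\}$ of sidelengths comparable to $\ell$ with $\int_{Q_i}\rho\in\NN$ for each $i$. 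Put $\rho_i:=\fint_{Q_i}\rho$, so that $\rho_i|Q_i|\in\NN$, $\underline\rho\le\rho_i\le\overline\rho$, and $\|\rho-\rho_i\|_{L^\infty(Q_i)}\le C\|\rho\|_{\mathcal C^{0,\lambda}(K_L(a))}\ell^\lambda$ by the Hölder estimate on a set of diameter $\lesssim\ell$.

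First I would build a competitor for the piecewise-constant background $\bar\rho:=\sum_i\rho_i\1_{Q_i}$. On each $Q_i$, let $E_i^*$ be a minimizer of $W$ over $\mathcal A_{\rho_i}$ (which exists). Since $W$-minimizers satisfy $\|E_i^*\|_{L^p(K_R)}\lesssim R^{2/p}$ for $p<2$ by Lemma~\ref{lemnormp}, a mean-value argument yields a square $Q_i^-\subset Q_i$ of sidelength in $[\ell-2\ell^\gamma,\ell-\ell^\gamma]$ with $E_i^*\cdot\vnu$ satisfying \eqref{h2boundary} on $\partial Q_i^-$ relative to $\ell$; I restrict $E_i^*$ to $Q_i^-$ and use Proposition~\ref{propj+j-} (item~1, with the \emph{constant} background $\rho_i$, together with the residual-layer filling of its Step~3) to screen $Q_i\setminus Q_i^-$ down to zero normal trace on $\partial Q_i$, producing $E_i^0\in\mathcal A_{\rho_i,0}(Q_i)$. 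Because $\rho_i$ is constant, the $\mathcal C^{0,\lambda}$-seminorm terms of Proposition~\ref{propj+j-} vanish and the screening costs at most $C\ell^{1+\beta_1}$ ($\beta_1<1$) in energy; by Proposition~\ref{proplimmincon}(1), made uniform in $\rho_i\in[\underline\rho,\overline\rho]$ via the scaling~\eqref{eqscalingW}, one has $W(E_i^*,\1_{Q_i^-})\le(\min_{\mathcal A_{\rho_i}}W+o(1)_{\ell\to\infty})|Q_i|$. Gluing the $E_i^0$ — their zero normal traces match across all cell interfaces, so no divergence is created — gives $E^{(0)}$ on $K_L(a)$ with background $\bar\rho$ and, by additivity of $W(\cdot,\chi)$ in $\chi$, $W(E^{(0)},\1_{K_L(a)})=\sum_i W(E_i^0,\1_{Q_i})\le\sum_i\big(\min_{\mathcal A_{\rho_i}}W\big)|Q_i|+o(1)_{L\to\infty}|K_L(a)|+C(L/\ell)^2\ell^{1+\beta_1}$. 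Since $m\mapsto\min_{\mathcal A_m}W$ is smooth (explicit from~\eqref{eqscalingminW}), $|\min_{\mathcal A_{\rho_i}}W-\min_{\mathcal A_{\rho(x)}}W|\le C\|\rho\|_{\mathcal C^{0,\lambda}}\ell^\lambda$ for $x\in Q_i$, so $\sum_i\big(\min_{\mathcal A_{\rho_i}}W\big)|Q_i|\le\int_{K_L(a)}\min_{\mathcal A_{\rho(x)}}W\,dx+C\|\rho\|_{\mathcal C^{0,\lambda}}\ell^\lambda|K_L(a)|$.

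Next I would correct the background from $\bar\rho$ to $\rho$. On each $Q_i$ solve $-\Delta u_i=2\pi(\rho_i-\rho)$ with $\partial_\vnu u_i=0$ on $\partial Q_i$ — solvable since $\int_{Q_i}(\rho_i-\rho)=0$ by the choice of $\rho_i$ — set $\tilde E:=\nabla u_i$ on $Q_i$ (again with matching zero normal traces at interfaces and on $\partial K_L(a)$), and put $E:=E^{(0)}+\tilde E\in\mathcal B_{\rho,0}(K_L(a))$ (simplicity of the points is inherited from the $E_i^*$ and the screening, distinct cells' points being mutually separated). By Lemma~\ref{lemsysum} with $\varphi=0$, for any $q\in(2,2p)$ one gets $\int_{Q_i}|\tilde E|^q\le C\ell^{q+2}\|\rho-\rho_i\|_{L^\infty(Q_i)}^q$, hence $\|\tilde E\|_{L^q(K_L(a))}\le CL^{2/q}\|\rho\|_{\mathcal C^{0,\lambda}}\ell^{1+\lambda}$ and, with $q=2$, $\|\tilde E\|_{L^2(K_L(a))}^2\le CL^2\big(\|\rho\|_{\mathcal C^{0,\lambda}}\ell^{1+\lambda}\big)^2$. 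As in the proof of Proposition~\ref{propj+j-}, Lemma~\ref{lemsumvf} then gives $W(E,\1_{K_L(a)})\le W(E^{(0)},\1_{K_L(a)})+\tfrac12\|\tilde E\|_{L^2}^2+\|E^{(0)}\|_{L^{q'}}\|\tilde E\|_{L^q}$ with $q'=q/(q-1)\in(1,2)$, and Lemma~\ref{lemnormp} applied to $E^{(0)}$ over $K_{L-1}(a)$ (using $W(E^{(0)},\1_{K_L(a)})\lesssim L^2$, $\#\Lambda\lesssim L^2$) bounds $\|E^{(0)}\|_{L^{q'}(K_{L-1}(a))}\le CL^{2/q'}(\log L)^{1/2}$. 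So the correction adds at most $C\big(\|\rho\|_{\mathcal C^{0,\lambda}}\ell^{1+\lambda}\big)^2+C(\log L)^{1/2}\|\rho\|_{\mathcal C^{0,\lambda}}\ell^{1+\lambda}$ per unit area.

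Collecting everything, the error over $\frac{1}{|K_L(a)|}\int_{K_L(a)}\min_{\mathcal A_{\rho(x)}}W\,dx$ is at most $o(1)_{L\to\infty}+C\ell^{\beta_1-1}+C\|\rho\|_{\mathcal C^{0,\lambda}}\ell^\lambda+C(\log L)^{1/2}\|\rho\|_{\mathcal C^{0,\lambda}}\ell^{1+\lambda}+C\big(\|\rho\|_{\mathcal C^{0,\lambda}}\ell^{1+\lambda}\big)^2$. With $\ell=L^{\beta_0}$ the first two terms are $o(1)_{L\to\infty}$; absorbing the $(\log L)^{1/2}$ into a negligibly larger power of $L$ and choosing $\beta$ just above $\beta_0(1+\lambda)/\lambda$ makes $\ell^\lambda,\ell^{1+\lambda}\le L^{\beta\lambda}$, so all remaining terms are $\le C\|\rho\|_{\mathcal C^{0,\lambda}}L^{\beta\lambda}+C\big(\|\rho\|_{\mathcal C^{0,\lambda}}L^{\beta\lambda}\big)^2$, with $\beta<1$ \emph{provided} $\beta_0<\lambda/(1+\lambda)$ — possible precisely because $\lambda\ge\tfrac12$ forces $\lambda/(1+\lambda)\ge\tfrac13>0$; this, together with the use of Proposition~\ref{propj+j-} (item~1), which also requires $\lambda\ge\tfrac12$, is where the hypothesis on $\lambda$ enters. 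I expect the main obstacle to be exactly this balancing of scales: the background-correction field $\tilde E$ carries $L^2$-mass per unit area of order $\big(\|\rho\|_{\mathcal C^{0,\lambda}}\ell^{1+\lambda}\big)^2$, much larger than the naive $\big(\|\rho\|_{\mathcal C^{0,\lambda}}\ell^\lambda\big)^2$, which forces the cells to have only a small-power size $\ell=L^{\beta_0}$ with $\beta_0<\lambda/(1+\lambda)$; the remaining difficulty, routine but tedious, is the integrality bookkeeping (which cells, and which residual screening layers, carry an integer amount of $\rho$), handled exactly as in the proof of Proposition~\ref{propj+j-}.
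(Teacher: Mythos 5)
Your overall strategy matches the paper's: partition $K_L(a)$ into cells of sidelength $\ell=L^{\beta_0}$ carrying integer charge, build a competitor cell-by-cell for the piecewise-constant background $\bar\rho$, correct to $\rho$ with a Neumann potential, and estimate via Lemmas \ref{lemsysum}, \ref{lemsumvf}, \ref{lemnormp}; your exponent arithmetic and the constraint $\beta_0<\lambda/(1+\lambda)$ also match.

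There is, however, a genuine gap in your construction of the cell vector field $E_i^0$. You take $E_i^*$ to be a global minimizer of $W$ over $\mathcal A_{\rho_i}$, restrict it to $Q_i^-$, and assert that $W(E_i^*,\1_{Q_i^-})\le(\min_{\mathcal A_{\rho_i}}W+o(1)_{\ell\to\infty})\vb Q_i\vb$ ``by Proposition~\ref{proplimmincon}(1)''. That proposition does not say this: it computes the limit of $\sigma_0(K_L;m)$, i.e.\ the value of the minimum of the \emph{boundary-constrained} problem, not the energy on a finite (and translated) square of the restriction of a global $\mathcal A_{\rho_i}$-minimizer. The definition of $W(E_i^*)$ uses $\limsup_R W(E_i^*,\chi_{K_R})/\vb K_R\vb$ with squares centered at the origin; controlling $W(E_i^*,\1_{Q_i^-})$ for an arbitrary cell $Q_i^-$ is exactly the energy-equidistribution content which the theorem aims to prove, so invoking it here is either circular or requires a separate truncation result. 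The paper avoids this by citing a specific truncation lemma, \cite[Proposition~7.4]{sandierserfatygas}, which says precisely that a global minimizer of $W$ can be truncated onto a given rectangle with near-minimal energy per unit area. A cleaner repair that stays within the lemmas you have: drop $E_i^*$ and the Proposition~\ref{propj+j-} screening layer altogether, and instead take $E_i^0$ to be a minimizer of $\sigma_0(Q_i;\rho_i)$. It has zero normal trace by definition, so the gluing is immediate, and $W(E_i^0,\1_{Q_i})/\vb Q_i\vb=\sigma_0(Q_i;\rho_i)\to\min_{\mathcal A_{\rho_i}}W$ by Proposition~\ref{proplimmincon}(1), extended to rectangles with comparable sides as in Remark~\ref{aprelem31} and made uniform in $\rho_i\in[\underline\rho,\overline\rho]$ via the scaling \eqref{eqscalingW}. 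The rest of your argument (the $-\nabla u_i$ correction and the $L^q$/$L^2$ estimates) then goes through as written.
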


\begin{proof} Since the proof is very similar to \cite[Section 7]{sandierserfatygas}, although in a simpler setting,  we only sketch the main steps.\\
\noindent
{\bf Step 1.} 
We choose $\alpha<\frac{\lambda}{1+\lambda}-\varepsilon$.
We claim that if $L$ is  large enough,  we can construct  a collection $\mathcal K$ of rectangles which partition  $K_L(a)$,  whose sidelengths are between $2L^{\alpha}-O\left(\frac{1}{L^{\alpha}}\right)$
and $2L^{\alpha}+O\left(\frac{1}{L^{\alpha}}\right)$, and such that for all $K\in \mathcal K$ we have
$
\int_K\rho(x)\,dx\in \NN.
$
To show this, it suffices to proceed as in Step 1 of the proof of Proposition \ref{propj+j-}, i.e. cutting first $K_L(a)$ into horizontal strips  of width $\sim  L^{\alpha}$ in which $\int \ro$ is an integer, and then cutting each strip vertically into rectangles in which $\int \ro$ is again an integer.

We then set $\l=L^\alpha$.\\
\noindent 
{\bf Step 2.} We denote by $x_K$ the center of each $K$ and $\rho_{K}=\fint_{K}\rho(x)\,dx$.
Using \cite[Proposition 7.4]{sandierserfatygas}, which allows to truncate a minimizer of $W$ into a given rectangle,  and rescaling the obtained vector-field by  $\sqrt{\ro_K}$ we obtain in each $K\in \mathcal K$ a vector-field $E_K$ satisfying
$$
\left\{\begin{aligned}&\di \j_K=2\pi\Big(\sum_{p\in \Lambda_K}\delta_p -\ro_ K \Big)&\mbox{in } K\\ &\j_K\cdot \vnu=0 &\mbox{on }\partial  K \end{aligned}\right.;
$$
for some discrete subset $ \Lambda_K\in  K$, and
\begin{equation}\label{eqestimenergy1}
\frac{W(\j_K,\1_{ K})}{\vb  K\vb}\le \min_{\mathcal A_{\ro_K}}W+o(1)_{\l\to +\infty}.
\end{equation}
Using Lemma \ref{lemnormp} after extending  for example $\j_K$ via Proposition \ref{propj+j-} we also control the $L^p$ norm of $\j_K$ for $p\in (1,2)$:
\begin{equation}\label{lpnormjk}
\|\j_K\|_{L^p(K)} \le C\l^{\frac{2}{p}}\log^{\hal}\l.
\end{equation}
Then, we have to rectify the weight $\rho_K$. For $K\in \mathcal K$, we let $H _K$ solve $-\Delta H_K=2\pi(\rho_K-\rho)$ on $K$ and $\nab H_K\cdot \vnu=0$ on $\partial K$. Using Lemma \ref{lemsysum} we have for any $q>1$ that
\begin{equation}\label{eqestimweight}
\norm \nabla H_K\norm_{L^q(K)}\le C\l^{1+2/q} \norm \rho-\rho_K\norm _{L^\infty(K)}\le C\l^{1+2/q+\lambda} \|\ro\|_{\mathcal C^{0,\lambda} } .
\end{equation}
We then define $\j$ to be $\j_K - \nab H_K$ in each $K\in \mathcal{K}$. Pasting these together  defines a  $\j$ over the whole $K_L(a)$, satisfying
$$
\left\{\begin{aligned}&\di \j=2\pi\Big(\sum_{p\in \Lambda}\delta_p -\ro \Big)&\mbox{in } K_L(a)\\ &\j\cdot \vnu=0 &\mbox{on }\partial  K_L(a) \end{aligned}\right.;$$ for some discrete set $\Lambda$,
since no divergence is created at the boundaries between the $K$'s, and such that $\curl \j=0$ near $\Lambda$.
Next we evaluate $W(\j, \indic_{K_L(a)})$.  The control follows from \eqref{eqestimenergy1}, \eqref{lpnormjk} and \eqref{eqestimweight}, using Lemma \ref{lemsumvf}: for $p\in (1,2)$ and $1/p+1/q=1$, we find
\begin{multline*}
W(E,\indic_K) \le |K|\min_{\mathcal{A}_{\ro_K}} W + |K|o_{\l\to \infty}(1)
+ C \l^{4+2\lambda} \|\ro\|_{\mathcal C^{0,\lambda}}^2 + C \l^{3+\lambda} \log^{1/2}\l  \|\ro\|_{\mathcal C^{0,\lambda} }
\\
\le |K|\min_{\mathcal{A}_{\ro_K}} W + |K|o_{\l\to \infty}(1)
+ L^{\alpha(4+2\lambda) } \|\ro\|_{\mathcal C^{0,\lambda}}^2+ C L^{\alpha( 3+ \lambda)} \log^{1/2} L  \|\ro\|_{\mathcal C^{0,\lambda} }.\end{multline*}
Summing over all squares (there are $L^{2-2\alpha}$ of them), and using the  H\"older continuity of $\min_{\mathcal A_m}W$ as a function of $m$ (in view of \eqref{eqscalingminW}) we find
\begin{multline*}
W(E,\indic_{K_L(a)})\le \int_{K_L(a)} \min_{\mathcal{A}_{\ro(x)} } W\, dx
+ o_{L\to \infty}(L^2) \\+ CL^{2+2\alpha + 2\alpha \lambda} \|\ro\|_{\mathcal C^{0,\lambda}}^2 +  CL^{2+\alpha + \alpha \lambda}\log^{1/2} L  \|\ro\|_{\mathcal C^{0,\lambda} }. \end{multline*}
Since  $\hal \le \lambda \le 1 $ and we chose  $\alpha$ such that $\alpha + \alpha \lambda<\lambda (1- \ep)$, after dividing by $L^2$ we have exponents  $2\alpha + 2\alpha \lambda <2\beta \lambda$ and $\alpha+ \alpha\lambda <\beta\lambda$ for some $\beta<1$.

To conclude we apply Lemma \ref{lemdiv0} to $\j$ to obtain a new  vector field in $\mathcal B_{\rho,0}(K_L(a))$ while decreasing $W(\j ,\1_{K_L(a)})$.
 The result follows.
 \end{proof}We next state a lemma that allows to reduce to the situation of a constant background density, modulo some error terms.
  \begin{lem}
  \label{lemequivsmallsquare}
   Let $p\in(1,2)$, $\frac{1}{2}\le\lambda\le 1$, $a\in \RR^2 $, $\l>0$ and 
   $\varphi\in L^p(\partial K_\l(a))$.
    Let $\rho$ be a $\mathcal C^{0,\lambda}(K_\l(a))$ function for which there exist $\underline \rho, \overline \rho>0$ such that $\underline \rho \le \rho(x)\le \overline \rho$.Then, for all $\varepsilon>0$,
\begin{align}\label{eqestimsigmarho}
\vb\sigma_{\varphi}(K_{\l}(a); \rho)- \sigma_{\varphi}(K_{\l}(a); \rho_\l)\vb\le C\left(\norm \rho\norm_{\mathcal C^{0,\lambda}(K_{\l}(a))}\l^{\lambda+1+\varepsilon}+(\norm \rho\norm_{\mathcal C^{0,\lambda}(K_{\l}(a))}\l^{\lambda+1})^2\right)
\end{align}
where $\rho_\l=\fint_{K_\l(a)} \rho(x)\,dx>0$.
\end{lem}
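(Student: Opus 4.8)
The plan is to reduce the varying-background problem on $K_\l(a)$ to the constant-background problem with density $\rho_\l$ by correcting the background with an auxiliary gradient field, and to bound the cost of this correction by the oscillation of $\rho$. Without loss of generality assume $a=0$. First I would pick $H$ solving $-\Delta H = 2\pi(\rho_\l - \rho)$ in $K_\l$ with $\nabla H\cdot \vnu = 0$ on $\partial K_\l$; note this is solvable since $\int_{K_\l}(\rho_\l - \rho) = 0$ by definition of $\rho_\l$. Lemma~\ref{lemsysum} (applied on $K_\l$, or a version of it) gives, for any $q>1$,
\begin{equation*}
\norm{\nabla H}_{L^q(K_\l)} \le C \l^{1+2/q}\norm{\rho - \rho_\l}_{L^\infty(K_\l)} \le C \l^{1+2/q+\lambda}\norm{\rho}_{\mathcal C^{0,\lambda}(K_\l)}.
\end{equation*}

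Next I would use this to pass from an admissible field for one problem to the other: given $E\in \mathcal B_{\rho,\varphi}(K_\l)$ with $W(E,\1_{K_\l})$ close to the minimum, set $\tilde E = E + \nabla H$. Then $\di \tilde E = 2\pi(\nu - \rho) + 2\pi(\rho_\l - \rho) $... — more carefully, I want $\di(E+\nabla H) = 2\pi(\nu - \rho) - 2\pi(\rho_\l-\rho) = 2\pi(\nu - \rho_\l) - 4\pi(\rho - \rho_\l)$, so the sign must be chosen so that $-\Delta H = 2\pi(\rho - \rho_\l)$ instead; then $\di\tilde E = 2\pi(\nu - \rho_\l)$, $\curl \tilde E = 0$ near $\Lambda$ (adjusting $\curl$ of $\nabla H$ is automatic), and $\tilde E\cdot\vnu = \varphi$ on $\partial K_\l$, so $\tilde E \in \mathcal B_{\rho_\l,\varphi}(K_\l) = \mathcal A_{\rho_\l,\varphi}(K_\l)$. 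The energies compare via the elementary bound used throughout (Lemma~\ref{lemsumvf}): with $1/p+1/q=1$, $p\in(1,2)$,
\begin{equation*}
W(\tilde E,\1_{K_\l}) \le W(E,\1_{K_\l}) + \tfrac12\norm{\nabla H}_{L^2(K_\l)}^2 + \norm{E}_{L^p(K_\l)}\norm{\nabla H}_{L^q(K_\l)}.
\end{equation*}
The $L^p$ norm of $E$ is controlled by Lemma~\ref{lemnormp} together with the number-of-points bound (which is fixed by $\varphi$ and $\rho$), giving $\norm{E}_{L^p(K_\l)} \le C\l^{2/p}\log^{1/2}\l$; combining with the $L^q$ bound on $\nabla H$ (taking $q>2$ close to $2$ to absorb the log and the $\varepsilon$) yields the cross term $\le C(\norm{\rho}_{\mathcal C^{0,\lambda}}\l^{\lambda+1+\varepsilon})$ and the quadratic term $\le C(\norm{\rho}_{\mathcal C^{0,\lambda}}\l^{\lambda+1})^2$. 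This proves one inequality; the reverse follows by the symmetric construction ($E\in \mathcal B_{\rho_\l,\varphi}$, subtract $\nabla H$), with the roles of $\rho$ and $\rho_\l$ interchanged but the same oscillation bound since $\norm{\rho_\l - \rho}_{L^\infty}$ is symmetric.

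One technical point to be careful about: Lemma~\ref{lemsysum} as stated is for rectangles with a mean-zero right-hand side built from a constant $m$ and boundary data $\varphi$; here I would instead invoke the (standard) Neumann estimate on the square $K_\l$ for $-\Delta H = 2\pi(\rho-\rho_\l)$ with zero normal derivative and zero right-hand-side mean, which is exactly the content of Lemma~\ref{lemsysum} with $\varphi \equiv 0$ and the source $\rho - \rho_\l$ (so the $\norm{\varphi}_{L^p}$ term drops and only the $L^q$-of-oscillation term remains). The compatibility condition \eqref{h1boundary} is preserved under $E\mapsto E\pm\nabla H$ since $\frac{1}{2\pi}\int_{\partial K_\l}\varphi + \int_{K_\l}\rho = \frac{1}{2\pi}\int_{\partial K_\l}\varphi + \int_{K_\l}\rho_\l$ by definition of $\rho_\l$, so both minimization problems are over nonempty classes. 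The main obstacle — really the only nontrivial accounting — is tracking the exponents: one must check that choosing $q>2$ sufficiently close to $2$ makes $1+2/q < 2+\varepsilon$, so that $\l^{1+2/q+\lambda}\log^{1/2}\l$ absorbed against $\l^{2/p}$ (with $2/p = 2/q' $, $q'<2$) gives a total power $\le \lambda + 1 + \varepsilon$ after the remaining $\l$-powers cancel against $\l^{-2}$ when passing to $\sigma$ (energy per volume); similarly the quadratic term $\l^{2(1+2/q+\lambda)}$ divided by $\l^2$ gives $\l^{2(\lambda+1)}$ up to $\varepsilon$, matching the stated bound. These are routine once the fields are set up, so I would not belabor them.
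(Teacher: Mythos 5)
Your proposal is correct and follows essentially the same route as the paper: solve the Neumann problem $-\Delta H=2\pi(\rho_\l-\rho)$ on $K_\l(a)$ (solvable since the source has zero mean), shift a minimizer of one problem by $\pm\nabla H$ to produce an admissible competitor for the other, and bound the energy change via Lemma~\ref{lemsumvf} together with the $L^q$ estimate on $\nabla H$ from Lemma~\ref{lemsysum} and an $L^p$ bound on the minimizer. The only slip is in your sign discussion: the original choice $-\Delta H=2\pi(\rho_\l-\rho)$ already gives $\di (E+\nabla H)=2\pi(\nu-\rho)+\Delta H=2\pi(\nu-\rho_\l)$, so the ``corrected'' equation $-\Delta H=2\pi(\rho-\rho_\l)$ has the wrong sign --- immaterial for the argument, since all the estimates only involve $\vb\nabla H\vb$.
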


\begin{proof}
The proof follows the same ideas as the previous one.
Let $\j$ a minimizer of $\sigma_\varphi(K_\l(a);\rho_l)$; hence satisfying
$$
\left\{\begin{aligned}&\di \j=2\pi(\sum_{p\in\Lambda}\delta_p-\rho_\l)& &\mbox{in } K_\l(a)\\&\j\cdot \vnu=\varphi& &\mbox{on }\partial K_\l(a)\end{aligned}\right.
$$
and $\frac{W(\j,\1_{K_\l(a)})}{\vb K_\l(a)\vb}=\sigma_\varphi(K_\l(a);\rho_\l)$. Next, let $H$ solve
$$
\left\{\begin{aligned}-\Delta H & =2\pi(\rho_\l-\rho)& &\mbox{in } K_\l(a)\\ \nab H \cdot{\vnu} & =0& &\mbox{on }\partial K_\l(a)\end{aligned}\right..
$$
By Lemma \ref{lemsysum}, for $q>1$
$$
\norm \nabla H\norm_{L^q(K_\l(a))}\le C \l^{1+2/q} \norm \rho_\l-\rho\norm_{L^{\infty}(K_l(a))}\le C \l^{1+2/q+\lambda} \|\ro\|_{\mathcal{C}^{0,\lambda}}.
$$
Now, let $\tj=\j-\nabla H$. Using  again Lemma \ref{lemsumvf} and the same arguments as in the previous proof, we find
$$
\frac{W(\tj,\1_{K_\l(a)})}{\vb K_\l(a)\vb}\le \frac{W(\j,\1_{K_\l(a)})}{\vb K_\l(a)\vb}+C\left(\norm \rho\norm_{\mathcal C^{0,\lambda}}\l^{\lambda+1+\varepsilon}+(\norm \rho\norm_{\mathcal C^{0,\lambda}}\l^{\lambda+1})^2\right).
$$
Finally, we conclude by modifying $\tj$ to have $\curl \tj=0$ using Lemma \ref{lemdiv0}; this way  $\tj\in \mathcal B_{\rho,\varphi}(K_L(a))$ and
$$
\sigma_{\varphi}(K_{\l}(a); \rho)\le\frac{W(\tj,\1_{K_\l(a)})}{\vb K_\l(a)\vb}\le \sigma_{\varphi}(K_{\l}(a); \rho_\l)+C\left(\norm \rho\norm_{\mathcal C^{0,\lambda}}\l^{\lambda+1+\varepsilon}+(\norm \rho\norm_{\mathcal C^{0,\lambda}}\l^{\lambda+1})^2\right).
$$
To obtain the other inequality, it suffices to take $\oj=\j+\nabla H$ with $H$ defined as above and $\j$ such that  $\frac{W(\j,\1_{K_\l(a)})}{\vb K_\l(a)\vb}=\sigma_\varphi(K_\l(a);\rho)$.
\end{proof}

\begin{prop} \label{proplimminrholow} Let $p\in(1,2)$, $\frac{1}{2}\le\lambda\le 1$ and $\rho$ be a $\mathcal C^{0,\lambda}(K_L(a))$ function for which there exist $\underline \rho, \overline \rho>0$ such that $\underline \rho \le \rho(x)\le \overline \rho$.  Suppose $\int_{K_L(a)} \ro(x)\, dx\in \mn$. Then\begin{align}\label{eqlimsigmarholow}
\sigma_{0}(K_{L}(a); \rho)\ge& \frac{1}{\vb K_L(a)\vb}\int_{K_L(a)}\min_{\mathcal A_{\rho(x)}}W\,dx-C\left(\norm \rho\norm_{\mathcal C^{0,\lambda}(K_L(a))}L^{\beta \lambda}\right)\\
&-C\left(\norm \rho\norm_{\mathcal C^{0,\lambda}(K_L(a))}L^{\beta \lambda}\right)^2+o(1)_{L\to+\infty}\nonumber
\end{align}
for some $\beta\in(0,1)$.
\end{prop}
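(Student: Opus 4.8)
The plan is to establish the lower bound \eqref{eqlimsigmarholow} by a localization argument that is dual to the upper bound of Proposition \ref{proplimminrhoup}, combined with the comparison Lemma \ref{lemequivsmallsquare} that lets us freeze the background on small squares. First I would take a minimizer $\j$ of $\sigma_0(K_L(a);\rho)$ (which exists by Proposition \ref{existencemin}), so that $\di \j = 2\pi(\nu-\rho)$ in $K_L(a)$ with $\j\cdot\vnu=0$ on the boundary and $W(\j,\1_{K_L(a)}) = \sigma_0(K_L(a);\rho)|K_L(a)|$. From \eqref{eqlimsigmarhoup} and \eqref{minom}-type bounds on $\ro$ we may assume the energy per unit volume is bounded, hence $W(\j,\1_{K_L(a)})\le C L^2$; moreover the number of points $\nu(K_L(a)) = \int_{K_L(a)}\ro$ is exactly prescribed and of order $L^2$, so the hypotheses of Proposition \ref{prop49ss} and Lemma \ref{lembor} are available.

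Next I would fix $\a<\frac{\lambda}{1+\lambda}-\ep$ as in the proof of Proposition \ref{proplimminrhoup}, set $\l = L^\a$, and partition $K_L(a)$ into $O(L^{2-2\a})$ rectangles $K$ of sidelength $\sim 2L^\a$. The key point is to select, by a mean-value argument on the scales between $L^\a - L^{\a\gamma}$ and $L^\a$ (using the positivity-corrected density $g_L$ of Proposition \ref{prop49ss} together with the $L^p$ control of $\j$ from Lemma \ref{lemnormp} and Lemma \ref{lembor}), a subdivision into slightly perturbed rectangles $K'$ on each of which (i) the trace $\vp_K := \j\cdot\vnu|_{\p K'}$ lies in $L^p$ with $\int_{\p K'}|\vp_K|^p\le M\l^{2-\gamma}$, i.e.\ \eqref{h2boundary} holds at scale $\l$, (ii) $\int_{K'}\ro \in \NN$ (achievable by adjusting the cutting positions as in Step 1 of Proposition \ref{propj+j-}), and (iii) the energy is controlled from below: $W(\j,\1_{K'}) \ge g_L(K') - o(L^2/\#\{K'\})$ so that summing over all $K'$ recovers $W(\j,\1_{K_L(a)})$ up to a lower-order error. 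This is exactly the sub-additivity-from-below that \eqref{eqnumberpoint} and $g_L\ge -C(\|\ro\|_\infty^2+1)$ provide: since $g_L$ is a genuine measure bounded below, $\sum_{K'} g_L(K') = g_L(K_L(a)) + O(L^{2-\gamma} \cdot \#\text{boundary layers})$ and $g_L(K_L(a))$ differs from $W(\j,\chi_{K_L(a)})$ by a controlled error via \eqref{eqestimWg}.

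Then, on each $K'$, since $\j$ restricted to $K'$ is a competitor for $\sigma_{\vp_K}(K';\rho)$ we get $W(\j,\1_{K'}) \ge |K'|\,\sigma_{\vp_K}(K';\rho)$. By Lemma \ref{lemequivsmallsquare} we replace $\rho$ by its average $\rho_{K'}$ at a cost $C(\|\ro\|_{\mathcal C^{0,\lambda}}\l^{\lambda+1+\ep} + (\|\ro\|_{\mathcal C^{0,\lambda}}\l^{\lambda+1})^2)$ per unit volume, i.e. $\sigma_{\vp_K}(K';\rho) \ge \sigma_{\vp_K}(K';\rho_{K'}) - (\text{error})$. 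By Proposition \ref{proplimmincon}, part (3), applied with the constant background $\rho_{K'}$ — uniformly over $\vp_K$ satisfying \eqref{h2boundary} at scale $\l$ — we have $\sigma_{\vp_K}(K';\rho_{K'}) \ge \sigma^*_{\rho_{K'}} - o(1)_{\l\to\infty} = \min_{\mathcal A_{\rho_{K'}}} W - o(1)$. Finally, using the H\"older continuity of $m\mapsto \min_{\mathcal A_m}W$ (which follows from \eqref{eqscalingminW}) to pass from $\sum_{K'}|K'|\min_{\mathcal A_{\rho_{K'}}}W$ to $\int_{K_L(a)}\min_{\mathcal A_{\rho(x)}}W\,dx$, and collecting the error exponents (each dividing by $L^2$ gives $\a(\lambda+1) < \beta\lambda$ and $2\a(\lambda+1)< 2\beta\lambda$ for some $\beta<1$, exactly as in Proposition \ref{proplimminrhoup}), we obtain \eqref{eqlimsigmarholow}.

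The main obstacle I expect is step (iii): arranging the lower-bound-compatible partition. One must pick the cutting scales so that simultaneously the density $\ro$ integrates to an integer on each cell (needed for $\mathcal B_{\rho_{K'},\vp_K}(K')$ to be nonempty and for Lemma \ref{lemequivsmallsquare}), the boundary trace $\j\cdot\vnu$ is genuinely in $L^p$ with the right bound on all of $\partial K'$ (needing a mean-value argument on $\|\j\|_{L^p}$ over a one-parameter family of nested squares, as in Lemma \ref{lemgoodboundary}), and no energy is lost in the summation — the last requiring that the total contribution of the "interface layers" between cells be $o(L^2)$, which is where one uses that there are only $O(L^{1-\a})$ strips each of width $O(L^{\a\gamma})$ and that $g_L$ is bounded below so the negative part contributes at most $O(L^2 \cdot L^{-\text{something}})$. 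Keeping all three constraints compatible while tracking the Hölder-norm error terms is the delicate bookkeeping, but it is entirely parallel to the arguments already carried out in the proofs of Propositions \ref{propj+j-} and \ref{proplimminrhoup} and Lemma \ref{lemgoodboundary}.
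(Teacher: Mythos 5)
Your overall strategy (localize, freeze the background via Lemma \ref{lemequivsmallsquare}, invoke Proposition \ref{proplimmincon} on each cell, then use H\"older continuity of $m\mapsto\min_{\mathcal A_m}W$) is the right one and matches the paper, but there is a genuine gap in how you descend from scale $L$ to scale $\l=L^\alpha$. You propose to select good boundaries for cells of sidelength $\sim L^\alpha$ in a \emph{single} mean-value step inside $K_L(a)$, citing Lemma \ref{lemgoodboundary}. That lemma, however, only works under the constraint \eqref{eqconddelta}, namely $\l\ge L^{1/\delta}$ with $\delta<\frac{1}{p}\left(1-\gamma+\frac{p}{2}(1+\gamma)\right)$, which is always close to $1$ (in particular $\delta<2$). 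Since $\alpha<\frac{\lambda}{1+\lambda}\le\frac12$, the target scale $L^\alpha$ badly violates this. The obstruction is quantitative: the mean-value selection controls $\int_{\p K_t}|\j|^p$ by first bounding the energy of an annulus of width $\l^\gamma$ via $g_L$, and the only available global input is $g_L(K_L)\lesssim L^2\log L=\l^{2\delta}\log\l$; dividing by the $\sim\l^\gamma$ candidate annuli and feeding the result into Lemma \ref{lemnormp} gives $\int_{\p K_t}|\j|^p\lesssim \l^{1-\frac p2}(\l^{2\delta-\gamma}\log\l)^{p/2}$, which is $\le M\l^{2-\gamma}$ only when $\delta$ satisfies \eqref{eqconddelta}. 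With $\delta=1/\alpha\ge 2$ this bound (and likewise the error $\l^{\frac{2\delta-1}{p}+1}\log^{3/2}\l$ in \eqref{eqgoodboundarypropenergy}, which must be $o(\l^2)$ for the cell-by-cell errors to sum to $o(L^2)$) is far too large. This is exactly why the paper's proof does \emph{not} partition directly at scale $L^\alpha$: it runs a bootstrap through a bounded chain of intermediate scales $\l_0=L$, $\l_k\ge\l_{k-1}^{1/\delta}$, down to $\l_s\le L^\alpha$, applying the partition-of-unity decomposition and Lemma \ref{lemcenergygb} at each stage, and only at the final scale invokes Lemma \ref{lemequivsmallsquare} and Proposition \ref{proplimmincon}.

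Two smaller points. First, when you apply the good-boundary machinery on an individual cell you need the local hypothesis $W(\j,\chi_i)\le C_2|K_i|$, which does not follow from the global energy bound because $W$ can be very negative on other cells; the paper handles this by working with $\min\left(C_1|K_i|,\,W(\j,\1_{K_{t_i}(a_i)})\right)$, noting that if the local energy exceeds $C_1|K_i|$ the desired lower bound on that cell is trivial. Second, your item (iii) is phrased as if additivity of $g_L$ plus $g_L\ge -C$ directly yields the no-energy-loss statement; what is actually needed is the comparison $W(\j,\chi_i)\ge W(\j,\1_{K_{t_i}(a_i)})-o(\l^2)$ of \eqref{compchi}, which again requires the point-count control near $\p K_{t_i}$ coming from the good-boundary selection — so this step is also tied to the scale constraint above. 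Once the descent is reorganized as a bounded-length bootstrap, the rest of your argument (freezing $\rho$, applying Proposition \ref{proplimmincon}, and tracking the $\|\rho\|_{\mathcal C^{0,\lambda}}L^{\beta\lambda}$ errors) goes through as you describe.
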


\begin{proof} Consider $\j$ a minimizer for $\sigma_0 (K_L(a); \ro)$. Since $\j \in \mathcal{B}_{\ro, 0}(K_L(a))$ we have   $\nu (K_L(a))= \int_{K_L(a)} \ro \, dx\le \overline\ro |K_L|$.




We  start by extending 
 $\j$  to a slightly bigger square using Proposition \ref{propj+j-},  having extended $\ro$ to a function with same $C^{0,\lambda}$ norm (or at most double),  in such a way that \eqref{energyj+chi} holds and $\nu(K_{L+1}(a))\le ( \overline \ro +1) |K_{L+1}|$.

The proof consists in a combination of a partitioning argument together with the bootstrapping method employed to prove Theorem \ref{thmmaincb}. We let $\alpha$ be as in the proof of Proposition \ref{proplimminrhoup}, and we choose a sequence of lengthscales $\l_k$ with $\l_0=L $, $\l_1 = L^{1/\delta}$,  $\l_s\le L^\a$
and $\l_{k-1}\ge \l_k \ge \l_{k-1}^{1/\delta}$ for $k=1, \dots, s$, where $\delta$ is as in \eqref{eqconddelta} and such that $\delta<\frac{1+p}{2}$. We can do this with $s\le -\frac{ \log \alpha}{\log \delta}$, so in a {\it bounded} number of steps (by contrast to the proof of Theorem \ref{thmmaincb}), and we will reason by iteration on $s$.

First let us partition $K_{L+1}(a)$ into an integer number of identical  squares $\bar{K_i}(a_i)$ of sidelength $\ell_1 $.
Let us then  consider $K_i$ the open squares with same centers $a_i$ but sidelength augmented by $1$. The $K_i$'s are overlapping squares which make an open cover of $\overline{K}_{L+1}(a)$. Let $\chi_i$ be an associated partition of unity.  We may require that each $\chi_i$ satisfies \eqref{defcutoff} relatively to each $K_i$, in other words $\chi_i$ is a $\chi_{K_i}$.
Since $\sum_i \chi_i= 1 $ in $K_L(a)$ and $W(\j, \chi)$ is linear with respect to $\chi$,  we  check that
\begin{equation*}\label{partunit0}
W(\j, \indic_{K_L(a)})= \sum_i  W(\j , \chi_i \indic_{K_L(a)})= \sum_i W(\j,\chi_i) -  W(\j, \indic_{K_L(a)^c}\sum_i \chi_i)
.\end{equation*}
Combining this with \eqref{energyj+chi} it follows that
\begin{equation}\label{partunit}
W(\j, \indic_{K_L(a)})\ge \sum_i W(\j, \chi_i) - C(L^{1+\beta}+L^{1+\gamma})-C  L^2 \|\ro\|_{C^{0,\lambda}}L^{\beta\lambda} (1 + \|\ro\|_{C^{0,\lambda}}L^{\beta\lambda}).\end{equation}

We next turn to bounding from below $\sum_i W(\j, \chi_i)$. 
We may assume that 
\begin{equation}\label{assummaxw}
W(\j ,\chi_i) \le   C_1  |K_i|, 
\end{equation}
for otherwise, we have a lower bound $W(\j, \chi_i)\ge C_1|K_i|$ which will suffice (if $C_1$ is chosen large enough).


Applying Lemma \ref{lemcenergygb} in $K_{L}(a)$, we find that there exists $K_{t_i}(a_i) \subset K_i \cap K_L(a)$ such that $\ell_1- 2\ell_1^\gamma \le t_i \le \ell- \ell_1^\gamma$,
\begin{equation}\label{bonnnnbor}
\int_{\partial K_{t_i}(a_i)}  |\j|^p \le M\ell_1^{2-\gamma},\end{equation} and
\begin{equation}\label{compchi}
W(\j, \chi_i) \ge W(\j, \1_{K_{t_i}(a_i)} ) - C \ell_1^{\frac{2\delta-1}{p} + 1} \log^{3/2} \ell_1 \ge W(\j, \1_{K_{t_i}(a_i)} ) - o(\ell_1^2) ,\end{equation} by choice of $\delta$. Combining with \eqref{assummaxw} it follows that 
\begin{equation}\label{bonborneW}
W(\j, \1_{K_{t_i}(a_i)} )\le (C_1+1) |K_{t_i}|.
\end{equation}
Moreover, from \eqref{bonnnnbor} and if $L$ is large enough,  we deduce that 
$\nu(K_{t_i}) \le (\|\ro\|_{L^\infty}+1) |K_{t_i}|$.
The assumptions of Lemma \ref{lemcenergygb} are thus satisfied again on $K_{t_i}$. Inserting \eqref{compchi} into \eqref{partunit}, we are led to 
\begin{multline}\label{partunit2}
W(\j, \indic_{K_L(a)})\ge \sum_i\min \(C_1 |K_i|, W(\j, \1_{K_{t_i}(a_i)})\) \\- o(L^2)- 
C  L^2 \|\ro\|_{C^{0,\lambda}}L^{\beta\lambda} (1 + \|\ro\|_{C^{0,\lambda}}L^{\beta\lambda}).\end{multline}
We then iterate the reasoning starting from $K_{t_i}(a_i)$, and partitioning it into squares of size $\l_2$,  which themselves get partitioned, etc, until the scale $\l_s\le L^\alpha$. 
Let $K_t(b)$ be one of the squares obtained this way at the scale $\l_s$, and let $\vp$ be the associated boundary data controlled by a relation of the form \eqref{bonnnnbor}. 
In view  Lemma \ref{lemequivsmallsquare} we  have
$$\sigma_{\vp}  (K_{t}(b); \ro) \ge \sigma_{\vp} (K_{t}(b); \tilde{\ro}) - C (\|\ro\|_{\mathcal C^{0,\lambda}} \ell_s^{1+\lambda +\ep}  +   \|\ro\|_{\mathcal C^{0,\lambda} }^2 \ell_s^{2+2\lambda } ),
$$
with $\tilde{\ro}= \fint_{K_{t}(b)} \ro.$
Since $W(E, \1_{K_{t}(b)}) \ge 
 \sigma_{\vp}(K_{t}(b); \ro) |K_{t}|$ and $\l_s\le L^\alpha$,  using Proposition \ref{proplimmincon}, we find
\begin{multline*}
W(\j, \1_{K_t (b)} ) \\
\ge |K_t|\min \(C_1 ,  \min_{\mathcal A_{\tilde \ro}} W  +o(1)
 - C (\|\ro\|_{\mathcal C^{0,\lambda}} L^{\alpha(1+\lambda +\ep)}  +   \|\ro\|_{\mathcal C^{0,\lambda} }^2 L^{\alpha(2+2\lambda)})\).
 \end{multline*} 
  Since $\alpha$ has been chosen as in Proposition 
\ref{proplimminrhoup}, we get in the same way as there that 
$L^{\alpha(1+\lambda +\ep)}\le L^{\beta\lambda}$ and $L^{\alpha(2+2\lambda)}\le L^{2\beta \lambda}$ for some $\beta<1$.
Since also $L^{2\alpha}\le O(|K_{t}|)$  and $\min_{\mathcal{A}_\ro} W$ is continuous with respect to $\ro$, we conclude that
\begin{multline*}
W(\j, \1_{K_t (b)} ) \\
\ge |K_t|\min \(C_1 , 
 \fint_{K_t(b)}  \min_{\mathcal A_{\ro(x)}}  W \, dx
  - C   (\|\ro\|_{\mathcal C^{0,\lambda}} L^{\beta\lambda}  +   \|\ro\|_{\mathcal C^{0,\lambda} }^2 L^{2\beta \lambda}) +o(1) \) .\end{multline*}
  Inserting all the estimates obtained at all these scales, until  $\l_s$ into \eqref{partunit2}, using the fact that the number of steps is bounded, and choosing $C_1 \ge \min_{m \in [\underline \ro, \overline \ro]} \min_{\mathcal A_m} W$,  
   we are led to 
  \begin{multline}\label{partunit3}
W(\j, \indic_{K_L(a)})\ge \int_{K_L(a)}  \min_{\mathcal A_{\ro(x)}}  W \, dx
 \\- o(L^2)- 
C  L^2 \|\ro\|_{C^{0,\lambda}}L^{\beta\lambda} (1 + \|\ro\|_{C^{0,\lambda}}L^{\beta\lambda}).\end{multline}
Since $\j$ is a minimizer for $\sigma_0(K_L(a); \ro)$, this proves the result (after dividing by $L^2$).

\end{proof}

We have the following corollary as a direct consequence of 
Propositions \ref{proplimminrhoup} and  \ref{proplimminrholow}, using  Lemma \ref{corestimationmin}.

\begin{cor} \label{proplimminnc}
Let $p\in(1,2)$, $a\in\mathbb R^2$ and $\frac{1}{2}\le\lambda\le 1$. Let $\rho$ be a nonnegative $\mathcal C^{0,\lambda}(K_L(a))$ function for which there exist $\underline \rho, \overline \rho>0$ such that $\underline \rho \le \rho(x)\le \overline \rho$. 
 If $\norm \rho\norm_{\mathcal C^{0,\lambda}(K_L(a))}L^{\beta \lambda}=o(1)_{L\to +\infty}$ for all $\beta \in (0,1)$, then
\begin{enumerate}
\item for all sequences of real numbers $L$ such that $\int_{K_L(a)}\rho(x)\,dx \in \NN$, we have
\begin{align}\label{eqlimsigma0nc}
\lim\limits_{L\to+\infty} \sigma_{0}(K_L(a);\rho)- \frac{1}{\vb K_L(a)\vb}\int_{K_L(a)}\min_{\mathcal A_{\rho(x)}}W\,dx=0;
\end{align}
\item  given $\gamma\in\left(\frac{3-p}{2},1\right)$ and $M>0$, we have
\begin{align}\label{eqlimsigmaphinc}
\lim\limits_{L\to+\infty} \sigma_{\varphi}(K_L(a);\rho)- \frac{1}{\vb K_L(a)\vb}\int_{K_L(a)}\min_{\mathcal A_{\rho(x)}}W\,dx=0.
\end{align}
 uniformly w.r.t. $\varphi$ such that \eqref{h1boundary} and \eqref{h2boundary} are satisfied in $K_L(a)$.
\end{enumerate}
\end{cor}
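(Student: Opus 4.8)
The plan is to derive Corollary~\ref{proplimminnc} directly by combining the two matching estimates already established. The key point is that Propositions~\ref{proplimminrhoup} and~\ref{proplimminrholow} give, for the zero boundary data problem,
\begin{equation*}
\left| \sigma_{0}(K_L(a);\rho) - \frac{1}{|K_L(a)|}\int_{K_L(a)} \min_{\mathcal A_{\rho(x)}} W\,dx \right| \le C\left(\|\rho\|_{\mathcal C^{0,\lambda}(K_L(a))} L^{\beta\lambda}\right) + C\left(\|\rho\|_{\mathcal C^{0,\lambda}(K_L(a))} L^{\beta\lambda}\right)^2 + o(1)_{L\to\infty}
\end{equation*}
for some $\beta\in(0,1)$ depending only on $p$, $\gamma$, $\lambda$ (and the bounds $\underline\rho,\overline\rho$ enter the constant $C$). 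Under the hypothesis $\|\rho\|_{\mathcal C^{0,\lambda}(K_L(a))} L^{\beta\lambda} = o(1)_{L\to\infty}$ for every $\beta\in(0,1)$ — in particular for the specific $\beta$ produced by those two propositions — both the linear and quadratic error terms go to zero, and we obtain~\eqref{eqlimsigma0nc}. A minor subtlety is that the $\beta$ in the upper bound and the $\beta$ in the lower bound need not coincide; one takes the maximum of the two (still in $(0,1)$), and the hypothesis being stated for \emph{all} $\beta\in(0,1)$ covers this. One should also check that the constraint $\int_{K_L(a)}\rho \in \NN$ is exactly the compatibility condition under which $\sigma_0(K_L(a);\rho)$ is defined via Proposition~\ref{existencemin}, which it is.

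For part~\eqref{eqlimsigmaphinc}, the strategy is to import the $\varphi$-to-$0$ comparison from Lemma~\ref{corestimationmin}. That lemma gives, for $\varphi$ satisfying~\eqref{h2boundary} with parameters $\gamma\in(\frac{3-p}{2},1)$ and $M$, the two-sided bound
\begin{equation*}
\sigma_0(K_{t_+}(a);\rho) - C\,\mathcal E_+(\l) \le \sigma_\varphi(K_\l(a);\rho) \le \sigma_0(K_{t_-}(a);\rho) + C\,\mathcal E_-(\l),
\end{equation*}
where $t_\pm \in [\l-2\l^\alpha, \l+2\l^\alpha]$ and the error $\mathcal E_\pm(\l)$ consists of a term $\l^{\beta-1}$ plus terms of the form $(\|\rho\|_{\mathcal C^{0,\lambda}}\l^{\beta\lambda})(1+\|\rho\|_{\mathcal C^{0,\lambda}}\l^{\beta\lambda})$ evaluated on the thin annuli $\mathcal K_\pm^a$. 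Under the decay hypothesis on $\|\rho\|_{\mathcal C^{0,\lambda}}L^{\beta\lambda}$, these annular error terms are $o(1)_{\l\to\infty}$ (since $\|\rho\|_{\mathcal C^{0,\lambda}(\mathcal K^a_\pm)}\le \|\rho\|_{\mathcal C^{0,\lambda}(K_{2L}(a))}$, say, after extending $\rho$ preserving the Hölder norm up to a factor $2$), and $\l^{\beta-1}\to 0$. Therefore $\sigma_\varphi(K_\l(a);\rho)$ differs from $\sigma_0(K_{t_\pm}(a);\rho)$ by $o(1)$; applying part~\eqref{eqlimsigma0nc} on $K_{t_\pm}(a)$ — whose sidelength is $2t_\pm$, comparable to $2\l$, and for which the averaged quantity $\frac{1}{|K_{t_\pm}|}\int_{K_{t_\pm}}\min_{\mathcal A_{\rho(x)}}W$ differs from the one over $K_\l(a)$ by $o(1)$ thanks to the continuity of $m\mapsto \min_{\mathcal A_m}W$ and the fact that $|K_\l(a)\triangle K_{t_\pm}(a)| = O(\l^{1+\alpha}) = o(\l^2)$ — we conclude~\eqref{eqlimsigmaphinc}. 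The uniformity in $\varphi$ is automatic because all error bounds depend on $\varphi$ only through the constants $M$ and $\gamma$ in~\eqref{h2boundary}, not on $\varphi$ itself.

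The one place requiring a little care, and the main (mild) obstacle, is the bookkeeping around the squares $K_{t_\pm}(a)$ of slightly different size than $K_\l(a)$: one must ensure that the compatibility condition $\int_{K_{t_\pm}(a)}\rho\in\NN$ can be arranged (this is precisely what the mean-value selection of $t_\pm$ in the proof of Proposition~\ref{propj+j-}, resp. Lemma~\ref{corestimationmin}, achieves), and that replacing $K_\l(a)$ by $K_{t_\pm}(a)$ in the averaged right-hand side introduces only an $o(1)$ error. Both are straightforward given the H\"older continuity~\eqref{minom}–type bound $|\nabla m_0|\le C$ transported to the present setting via the scaling~\eqref{eqscalingminW}, so the corollary follows by assembling these pieces; there is no new analytic difficulty beyond what Propositions~\ref{proplimminrhoup}, \ref{proplimminrholow} and Lemma~\ref{corestimationmin} already contain.
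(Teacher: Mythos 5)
Your proposal is correct and takes exactly the route the paper intends: the paper states Corollary~\ref{proplimminnc} as ``a direct consequence of Propositions~\ref{proplimminrhoup} and~\ref{proplimminrholow}, using Lemma~\ref{corestimationmin}'' without further detail, and your argument supplies precisely those combinations. The bookkeeping you flag (taking the worse of the two exponents $\beta$, the integrality of $\int_{K_{t_\pm}(a)}\rho$ arranged by the mean-value selection of $t_\pm$, the extension of $\rho$ to the outer annulus, and the $o(1)$ change in the averaged integral when passing from $K_\l(a)$ to $K_{t_\pm}(a)$) is exactly the set of routine verifications the authors left implicit.
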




\section{The study of 2D Coulomb Gases}\label{sec5}

In this section, we turn to the Coulomb gas minimizers, and we use the notation of the introduction. 
\subsection{Separation of points}
In this subsection we prove  item (1) of Theorem \ref{th3} and Theorem \ref{thlieb}. They rely on the same idea, quite independent from the rest of the paper: exploiting the fact that in the ground state of a Coulomb system, each point is at the minimum of the potential generated by the rest of the charges (i.e. the other points and the background charge).  These results be used in the rest of the proof of Theorem \ref{th3}.

The starting point is  the following:
\begin{lem}
	\label{lempropminpoints}
 Let $x_1, \dots, x_n$ be $n$ distinct points in $\mr^2$, $m_0'$ be as in Section \ref{sec1.3},
 $$H_n=- 2\pi \Delta^{-1} \( \sum_{i=1}^n \delta_{x_i'} - m_0'(x')\)= - \log * \( \sum_{i=1}^n \delta_{x_i'} - m_0'(x')\) , $$
 and $E_n= -\nab H_n$ as in  \eqref{En}, and let $U=H_n + \log |x-x_1'|$.  For any  point $y'\in \mr^2 $, letting 
  $$\tilde{E_n}=2\pi \nab \Delta^{-1}\( \sum_{i=2}^n \delta_{x_i'} + \delta_{y'} - m_0'(x')\),$$ it holds that
 \begin{equation}\label{51}
 W(E_n, \indic_{\mr^2})- W(\tilde{E_n}, \indic_{\mr^2})
 = U(x_1')- U(y').\end{equation}
 If $(x_1, \dots, x_n)$ minimizes $w_n$, then for every $y\in \mr^2$, $y'= \sqrt{n} y$, we have
 \begin{equation}
 \label{Uz}
 U(x_1') + 2n \zeta(x_1) \le U(y') + 2n \zeta(y).\end{equation}

 \end{lem}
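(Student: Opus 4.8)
The plan is to prove \eqref{51} by a direct comparison of the two renormalized energies, exploiting that $E_n$ and $\tilde E_n$ differ only by moving one point from $x_1'$ to $y'$. Write $E_n = -\nabla H_n$ and $\tilde E_n = -\nabla \tilde H_n$, where $\tilde H_n = -\log * (\sum_{i=2}^n \delta_{x_i'} + \delta_{y'} - m_0')$. Then $H_n - \tilde H_n = -\log|x - x_1'| + \log|x-y'|$, a smooth function away from $x_1'$ and $y'$. The idea is to compute $W(E_n,\indic_{\mr^2}) = \lim_{\eta\to 0}\big(\tfrac12\int_{\mr^2\setminus \cup B(p,\eta)} |E_n|^2 + \pi n\log\eta\big)$ and similarly for $\tilde E_n$, and take the difference. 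Near each common point $x_i'$, $i\ge 2$, both fields have the same singular part, so the $-\log|x-x_i'|$ contributions cancel in the difference; the divergences at $x_1'$ (resp.\ $y'$) are exactly what is subtracted off by the $\pi\log\eta$ counterterm, so the difference of the two renormalized energies is a convergent integral. Expanding $|E_n|^2 - |\tilde E_n|^2 = |\nabla(H_n - \tilde H_n)|^2 + 2\nabla \tilde H_n\cdot\nabla(H_n - \tilde H_n)$ and integrating by parts (carefully excising small balls around $x_1'$ and $y'$, using $-\Delta \tilde H_n = 2\pi(\sum_{i\ge2}\delta_{x_i'} + \delta_{y'} - m_0')$ and tracking the boundary terms on $\partial B(x_1',\eta)$ and $\partial B(y',\eta)$), the bulk terms combine to evaluate the harmonic function $H_n - \tilde H_n + \log|x-x_1'|$, i.e.\ $U - \log|x-y'| $ adjusted appropriately, at the two points. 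After bookkeeping one is left with $U(x_1') - U(y')$, which is \eqref{51}. I expect this integration-by-parts bookkeeping — matching the boundary terms at $\eta$-circles around $x_1'$ and $y'$ against the $\pi\log\eta$ counterterms and checking the limits exist — to be the main technical obstacle, though it is a routine (if delicate) Bethuel–Brezis–Hélein-type renormalization computation.

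For \eqref{Uz}, the plan is to use the splitting formula \eqref{splitting}. If $(x_1,\dots,x_n)$ minimizes $w_n$, then replacing $x_1$ by any $y$ (keeping $x_2,\dots,x_n$ fixed) cannot decrease $w_n$; the configuration $(y,x_2,\dots,x_n)$ corresponds via \eqref{En} to the field $\tilde E_n$ (up to relabeling, since $\tilde E_n$ is the field of $\{y', x_2',\dots,x_n'\}$). Applying \eqref{splitting} to both configurations, the terms $n^2 I(\mu_0)$ and $-\tfrac n2\log n$ are unchanged, so
\[
0 \le w_n(y,x_2,\dots,x_n) - w_n(x_1,\dots,x_n) = 2n\big(\zeta(y) - \zeta(x_1)\big) + \tfrac1\pi\big(W(\tilde E_n,\indic_{\mr^2}) - W(E_n,\indic_{\mr^2})\big).
\]
Substituting \eqref{51} into the last term gives $0 \le 2n(\zeta(y)-\zeta(x_1)) + \tfrac1\pi\big(U(y') - U(x_1')\big)$, which rearranges to \eqref{Uz} (absorbing or tracking the factor $1/\pi$ according to the paper's normalization of $U$ — one should check whether the intended statement carries a $\pi$; if so, rescale $U$ accordingly, or note that \eqref{51} as written already has the $\pi$ folded in so that the clean inequality \eqref{Uz} emerges). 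This second part is short and purely formal once \eqref{51} and \eqref{splitting} are in hand.

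One subtlety to flag: the identity \eqref{51} a priori requires both $W(E_n,\indic_{\mr^2})$ and $W(\tilde E_n,\indic_{\mr^2})$ to be finite, or at least that their difference be well-defined. Since $m_0'$ has compact support $\E'$ while the points and $y'$ may range over all of $\mr^2$, one should either restrict attention to $y'$ in a bounded region (which is all that is needed for the application, where $y\in\mr^2$ but the inequality is only exploited for $y$ near $\E$), or observe that $W(E_n,\indic_{\mr^2})$ is finite for a minimizer (which follows from \eqref{splitting} and finiteness of $w_n$), and that the comparison field $\tilde E_n$ has the same far-field behavior, so the difference of renormalized energies is unambiguous regardless. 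I would handle this by first establishing \eqref{51} as an identity of the (always well-defined) differences of the truncated-and-renormalized integrals, then passing to the limit.
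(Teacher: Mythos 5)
Your plan is correct and follows essentially the same route as the paper: set $\bar H=\tilde H_n-H_n=-\log|x-y'|+\log|x-x_1'|$, expand the square, evaluate the self term and the cross term by Green's theorem on the domain with excised $\eta$-balls (using the decay of $H_n,\tilde H_n$ at infinity, which holds because they are potentials of zero total charge), let the $\pi\log\eta$ counterterms absorb the divergences at $x_1'$ and $y'$, and then obtain \eqref{Uz} from the splitting formula by comparing with the configuration $(y,x_2,\dots,x_n)$. Your remark about the factor $1/\pi$ is apt --- the paper's own deduction of \eqref{Uz} from \eqref{splitting} drops the $1/\pi$ in front of $W$, which only rescales the $2n\zeta$ terms by a positive constant and does not affect any later use of the inequality.
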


\begin{proof}
We denote $\tilde{H_n}= - 2\pi  \Delta^{-1}\( \sum_{i=2}^n \delta_{x_i'} + \delta_{y'} - m_0'(x')\)$ so that $\tilde{E_n}= -\nab \tilde{H_n}.$ We note that $H_n$ and $\tilde{H_n}$ are well-defined by convolution with $-\log $ and, since they correspond to the potential generated by a zero total charge, they both decay like $1/|x|$ as $x \to \infty$ while their gradients decay like $1/|x|^2$.

We now let $\bar{H}(x) = \tilde{H_n}(x)- H_n(x)= - \log |x-y'|+ \log |x-x_1'|$. By definition of $W$, we have
\begin{multline}\label{decW1}
W(\tilde{E_n}, \1_{\mr^2})=
\lim_{\eta\to 0}\Big( \hal \int_{\mr^2 \backslash (\cup_{i=1}^n B(x_i', \eta) \cup B(y',\eta))} |\nab (H_n+ \bar{H})|^2 + \pi n \log \eta\Big)\\
= W(E_n, \1_{\mr^2}) 
+ \lim_{\eta\to 0}\Big( \hal \int_{\mr^2 \backslash (\cup_{i=1}^n B(x_i', \eta) \cup B(y',\eta))}|\nab \bar{H}|^2 \\+ \int_{\mr^2 \backslash (\cup_{i=1}^n B(x_i', \eta) \cup B(y',\eta))}\nab H_n \cdot \nab \bar{H}\Big),\end{multline}
where we have expanded the square.
We now turn to computing
\begin{multline}\label{52bis}
\lim_{\eta\to 0} \Big(\hal \int_{\mr^2 \backslash (\cup_{i=1}^n B(x_i', \eta) \cup B(y',\eta))}|\nab \bar{H}|^2 + \int_{\mr^2 \backslash (\cup_{i=1}^n B(x_i', \eta) \cup B(y',\eta))}\nab H_n \cdot \nab \bar{H}\Big)
\\
= \lim_{\eta\to 0}\Big( \hal \int_{\mr^2 \backslash (B(x_1', \eta) \cup B(y',\eta))}|\nab \bar{H}|^2 + \int_{\mr^2 \backslash (\cup_{i=1}^n B(x_i', \eta) \cup B(y',\eta))}\nab H_n \cdot \nab \bar{H}\Big).\end{multline} This relies on direct computations for renormalized energies, \`a la \cite{bbh}.
First, using Green's theorem and noting that $\bar{H}$ and $\nab \bar{H}$ also decay sufficiently fast at infinity, we have
$$\hal \int_{\mr^2 \backslash (B(x_1', \eta) \cup B(y',\eta))}|\nab \bar{H}|^2
= -\hal  \int_{\p B(x_1', \eta)} \bar{H} \nab \bar{H} \cdot \vnu
- \hal\int_{\p B(y', \eta)} \bar{H} \nab \bar{H}\cdot \vnu$$
where $\vnu$ is the outer unit normal to the balls.
Inserting $\bar{H}=   - \log |x-y'|+ \log |x-x_1'|$ and using again Green's theorem and the fact that $- \Delta \bar{H}=2\pi ( \delta_{y'}- \delta_{x_1'}) $, we find, if $y' \neq x_1'$, 
\begin{equation}\label{partiehb}
\hal \int_{\mr^2 \backslash (B(x_1', \eta) \cup B(y',\eta))}|\nab \bar{H}|^2
= - 2 \pi  \log \eta + ( 2\pi+o_\eta(1))  \log |x_1'-y'| .\end{equation}

On the other hand, using similar computations based on Green's theorem, we have, if $y'\notin\{x_1', \dots, x_n'\}$,
\begin{multline*}
 \int_{\mr^2 \backslash (\cup_{i=1}^n B(x_i', \eta) \cup B(y',\eta))}\nab H_n \cdot \nab \bar{H}
= - \sum_{i= 1}^n \int_{\p B(x_i', \eta)} \bar{H}\nab H_n \cdot  \vnu \\ -2 \pi  \int_{\mr^2} \bar{H}(x') m_0'(x') \, dx' +o_\eta(1),\end{multline*}
where we have used that $m_0'\in L^\infty$ and is compactly supported.
It follows with the explicit expression of $\bar{H}$ that
\begin{multline*}
 \int_{\mr^2 \backslash (\cup_{i=1}^n B(x_i', \eta) \cup B(y',\eta))}\nab H_n \cdot \nab \bar{H}\\
=  - 2\pi \sum_{i=1}^n \log |x_i'- y'|   + 2\pi \log \eta
 + 2\pi \sum_{i=2}^n \log |x_i'-x_1'| \\
 + 2\pi \int_{\mr^2} (\log |x-y'|- \log |x-x_1'|) \,m_0'(x') \, dx'+o_\eta(1).\end{multline*}
 Combining with the results of \eqref{decW1}, \eqref{52bis} and \eqref{partiehb},  we find that if $y'\notin \{x_1', \dots, x_n'\}$,
 \begin{multline*}
 W(\tilde{E_n}, \1_{\mr^2})- W(E_n, \1_{\mr^2})\\ =
  - 2\pi \sum_{i=2}^n \log |x_i'- y'|
 + 2\pi \sum_{i=2}^n \log |x_i'-x_1'| -  2\pi \int_{\mr^2} (\log |x-y'|- \log |x-x_1'|) \,m_0'(x') \,
 \\
 = U(y')- U(x_1').\end{multline*}

Of course, if $y'=x_1'$ then both sides equal zero and the result is true. If $y'\in\{x_2', \dots, x_n'\}$ we may also verify that $W(\tilde{E_n}, \1_{\mr^2})=+\infty$ and $U(y')=+\infty$, and the result also holds in a generalized sense.

Let us now turn to the application to $(x_1, \dots, x_n)$ minimizing $w_n$.
In that case, in view of the splitting formula \eqref{splitting},  by comparing with the energy of $(y, x_2, \dots, x_n)$, for any $y \in \mr^2$ and letting $y'= \sqrt{n} y$,  we have 
$$W(E_n,\1_{\mr^2})+ 2n \zeta(x_1) \le W(\tilde{E_n}, \1_{\mr^2}) + 2n \zeta(y).$$
Inserting \eqref{51},  it follows that  \eqref{Uz} holds.
\end{proof}

\begin{proof}[Proof of item (1) of Theorem \ref{th3}]
Let $(x_1 ,\dots, x_n)$ minimize $w_n$, and let $U$ be as in Lemma \ref{lempropminpoints}. 
Since $m_0'=0$ in $\mr^2 \backslash \Sigma'$,  by definition of $U$  we have that $-\Delta U\ge 0$ in $\mr^2 \backslash \Sigma'$, $U$ is thus superharmonic in that set. In addition $U \to + \infty$ as $x \to \infty$, since $H_n\to 0 $ as $ x \to \infty$. Thus, $U$ can only achieve its minimum on $\overline{\mr^2 \backslash \Sigma'}$ at some point $\bar{y}' \in \p \Sigma'.$ Since $\zeta=0 $ in $\Sigma $ and $\zeta \ge 0$ everywhere, we also have $\zeta(\bar{y}) \le \zeta(x_1)$, where $\bar{y} = \frac{\bar{y}'}{\sqrt{n}}$.
Comparing with \eqref{Uz} we obtain a contradiction, unless $x_1' \in \Sigma'.$
Since $w_n$ is symmetric in the labelling of the points, this proves that all the $x_i$'s must belong to $\Sigma$, concluding the proof.
\end{proof}

\begin{proof}[Proof of Theorem \ref{thlieb}]
We use the original argument of \cite{lieb}, adapted to the case of a varying background and a finite size $\E$. Let again $(x_1, \dots, x_n)$ minimize $w_n$.
First we note that now that we know that all the points $x_i'$ belong to $\Sigma$, \eqref{Uz} gives that for any $y\in \mr^2$,
\begin{equation}\label{Uz2}
U(x_1') \le U(y')+2n \zeta(y).\end{equation}

Let now $x_2$ (up to relabelling) be again a point in the minimizing collection $x_1, \dots, x_n$, and assume $x_1$ is its nearest neighbor in the collection (again, up to relabelling). We will use the minimality relation \eqref{Uz2} with respect to $x_1$.

We note that  since  \eqref{minom} holds, we may find $0<r_1<1$ such that
\begin{equation}\label{charge0}
\int_{B(x_2', r_1)} m_0'(x')\le \hal.\end{equation}
Let us split  $U$  into 
\begin{equation}
\label{splitu}
U= U^{near}  + U^{rem}+U^{corr}\end{equation}
where
\begin{align*}
& U^{near}= - \log * \Big( \delta_{x_2'}-  \1_{B(x_2', r_1)} \fint_{B(x_2', r_1)}  m_0'(x')\Big)
\\ &
U^{corr}= - \log *\(\1_{B(x_2', r_1)}\Big(-m_0'(x')+ \fint_{B(x_2', r_1)}  m_0'(x')\Big)\)\\
&
U^{rem}= -\log * \Big( \sum_{i=3}^n \delta_{x_i'}- m_0'(x') \1_{\mr^2 \backslash B(x_2', r_1)}\Big).\end{align*}

One observes that  $U^{near}$ is radial with respect to the origin at $x_2'$ that is $U^{near}(x')= f(|x'-x_2'|)$.
In fact $f$ can be computed explicitly, and with the choice \eqref{charge0}, it is decreasing, tends to $+\infty$ at $0$,  and $-\infty$  at $+\infty$. This implies that given any constant $M>0$ there exist $0< r_0<r_2<r_1$ such that \begin{equation}
\label{decrf}
\forall r<r_0, \quad f (r_2)<f(r)-M.\end{equation}

Let us now set $C_1=\sup_{\dist (x', \Sigma')\le 1} 2n \zeta(x)$. We claim that $C_1$ is bounded, independently of $n$.
In fact, $\zeta$ is related to a solution of an obstacle problem and it is stated in \cite[(1.17)]{sandierserfatygas} that $\zeta(x)\le
C\dist (x,  \Sigma)^2$ for some $C>0$, from \cite[Lemma 2]{caffarelli}.
It follows from this estimate that $\sup_{\dist (x', \Sigma')\le 1} 2n \zeta(x) \le 2 C$, hence the claim.

 Let us also observe that $U^{corr}$ is a bounded  function since $m_0'$ is,  and  set $C_2=2 \|U^{corr}\|_{L^\infty}$ (again  we could get a better estimate by using the fact that $|\nab m_0'|\le C/\sqrt{n}$).
We then set $M=C_1+C_2$ and have \eqref{decrf}. The constant $r_0$ depends only on the bounds in \eqref{minom} and the growth  of $\zeta$ away from $\Sigma$, so depends only on $V$.

If $|x_1'- x_2' |\ge r_0$ then we have  the desired conclusion. Assume thus that $|x_1'-x_2'|< r_0$.
 We  note that $U^{rem}$ is superharmonic in $B(x_2',r_2)$ since $r_2< r_1$. Let us then denote by $\bar{y}'$ the point in $\p B(x_2', r_2)$ where it achieves its minimum. Thus $U^{rem}(\bar{y}') \le U^{rem} (x_1').$ We also note that since $r_2<r_1<1$ and $x_2'\in \Sigma'$ by the result of item (1) of Theorem \ref{th3} proved above, we have $\dist (\bar{y}',\Sigma')  \le 1$ hence
 $2n\zeta(\bar{y})\le C_1$.
Since $|x_1'-x_2'| <r_0$, and in view of \eqref{decrf}, we have   $U^{near} (\bar{y}')<U^{near}(x_1')-M$. Combining the two, with  \eqref{splitu} it follows that
$$
U(\bar{y}')+ 2n\zeta(\bar{y}')< U^{near}(x_1')+ U^{rem} (x_1')+2n\zeta(\bar{y}')+ \hal C_2 -M  \le U(x_1')$$ in view of the choice of $M$ and the definitions of $C_1$ and $C_2$, a contradiction with \eqref{Uz2}.
This concludes the proof of Theorem \ref{thlieb}.\end{proof}

From now on we will use the fact that for a minimizer of $w_n$,  all the points are in $\Sigma$, and in view of \eqref{splitting} and the result of \cite[Theorem 2]{sandierserfatygas} we have the a priori bound
\begin{equation}
\label{bapriori}
W(E_n, \1_{\mr^2}) \le n \int_{\Sigma} \min_{\mathcal{A}_{\mu_0(x)}} W \, dx+o(n), \quad \text{as } \ n \to \infty.
\end{equation}

\subsection{Application of the previous analysis: end of the proof of Theorem \ref{th3}}\label{sec6}
We turn to the proof of item (2) of Theorem \ref{th3}, which  relies on applying the results that we obtained in Section \ref{sectionnonconstant}.
 We will use in particular that since   $m_0$ is $\mathcal C^1$ in $\E$, we have
\begin{equation}\label{eqgradm0}
\norm \nabla m_0'\norm_{L^\infty(\E')} \le \frac{C}{\sqrt n},
\end{equation}
and, whenever $K_L(a)\subset \E'$,  $L\le\sqrt{n}$, \begin{equation}\label{petitesvar}
\norm m'_0\norm_{\mathcal C^{0,\lambda}(K_L(a))}L^{\beta \lambda}\le C\frac{1}{(\sqrt{n})^{(1-\beta)\lambda}}\le o_n(1)\end{equation}
for all $\beta \in (0,1)$ and $0<\lambda\le 1$.
This will be inserted into the result of Corollary \ref{proplimminnc}.
We will also use that $\p \E$ is $\mathcal C^1$ so $\p \E'$ is locally almost flat, as $n\to \infty$.


Let $(x_1,\dots, x_n)$ minimize $w_n$ and $\j_n$ be as in \eqref{En}. As seen in Section \ref{sec5}, since all the points are in $\E$, in view of \eqref{splitting}, by minimality of $x_1, \dots, x_n$
we have 
\begin{equation}\label{compar}
W(E_n, \indic_{\mr^2}) \le  W(\tilde{E_n}, \indic_{\mr^2})
\end{equation}
for any $\tilde{E_n}= 2\pi \nabla \Delta^{-1} (\sum_{i=1}^n \delta_{y_i'} -  m_0' (x'))$ 
such that the $y_i'$ are all in $\E'$. 

We claim that  for any set $\Omega$,  $\j_n$ minimizes $W(E, \Omega)$ in the class $\mathcal{B}_{m_0' } (\Omega)$ with respect to its own boundary condition $\vp$, as long as competitors have all their points in $\E'$.
Indeed, let $\j$ be a competitor in $\Omega$ with $\j \cdot \vnu =\vp$ on $\p \Omega$ and all the points included in $\E'$. Then  consider $\bar{\j}$ the vector field equal to $\j $ in $\Omega $ and $\j_n$ in $\Omega^c$. 
We have 
$$W(\bar{\j}, \1_{\mr^2}) = W(\j, \1_{\Omega}) + W(\j_n, \1_{\Omega^c}).$$
Moreover, since no divergence is created at the boundary, $\bar{\j}$ still satisfies $\div \bar{\j} = 2\pi (\sum_i{\delta_{y_i'} } - m_0')$ with $y_i' \in \E'$.
Using Lemma \ref{lemdiv0}  (or rather its proof), we can modify $\bar{\j}$ to make it curl-free, while decreasing its energy. This gives a vector field $\tilde{\j}\in \mathcal B_{m_0'}(\mr^2)$  and with 
$$W(\tilde{\j}, \1_{\mr^2}) \le  W(\j, \1_{\Omega}) + W(\j_n, \1_{\Omega^c}).$$
But $W(\tilde{\j}, \1_{\mr^2}) \ge W(E_n, \indic_{\mr^2})$ by \eqref{compar}, so we must have $W(\j, \1_{\Omega}) \ge W(\j_n, \1_{\Omega})$, which proves the claim.

For the proof of  Theorem \ref{th3}, item (2),
  we note that  since  $K_\l(a)\subset \E'$ and    $\dist(K_\l(a),\p \E' )\ge   \sqrt{n}^{\beta}$,  the situation is essentially the same as in Theorem \ref{thmmaincb} with $L$ replaced by $\sqrt{n}$. 
We can apply the proof of Theorem \ref{thmmaincb} combined with the results  of Section \ref{sectionnonconstant}: we start from the initial scale $L=\sqrt{n}$ with the a priori bound \eqref{bapriori} and replace the initial square $K_L$ by $\E'$ itself, 
   and  we use the fact  that   $\j_n$ is  a minimizer on any square included in $\E'$ with respect to its own boundary conditions (by the claim above). Then 
it suffices to copy the proof of Theorem \ref{thmmaincb}, but in order to deal with the nonconstant background, replacing the use of Lemma \ref{propestimationgb} by Corollary \ref{proplimminnc}.  Note that the  constants  $C_1$ and $C_2$ can be chosen to be $C_1= \wb+1 $ and $C_2= \max_{m\in [\bw,\wb]} \min_{\mathcal{A}_m} W +1.$
This proves  \eqref{137}.
For \eqref{eqnumpointsgas}, the proof is identical to Step 2 of the proof of Theorem \ref{thmmaincb}, except that we use the fact that $\|m_0'\|_{L^\infty}$ is bounded by a constant.

This completes the proof of Theorem \ref{th3}.

\appendix

\section{Technical Results}
\begin{proof}[Proof of Lemma \ref{lemsysum}]
 The proof of this lemma is inspired from that of Lemma $4.15$ of \cite{sandierserfaty}.
We write the solution $u$ of \eqref{eqsysurho} as $u=u_1+u_2+u_3$ where
\begin{equation}\label{eqsysu1rho}
\left\{\begin{aligned}
-\Delta u_1 &=2\pi(m-\rho_{\mathcal R})& &\mbox{in}& &\mathcal{R}\\
\frac{\partial u}{\partial \vnu}&=\bar \varphi & & \mbox{on}& &\partial\mathcal{R}
\end{aligned}\right.
\end{equation}
where $\bar\varphi$ is equal to the average of $\varphi$ on the side where $\varphi$ is supported and is $0$ on the other sides;
\begin{equation}\label{eqsysu2rho}
\left\{\begin{aligned}
-\Delta u_2 &=0& &\mbox{in}& &\mathcal{R}\\
\frac{\partial u_2}{\partial \vnu}&=\varphi-\bar \varphi & & \mbox{on}& &\partial\mathcal{R}
\end{aligned}\right.;
\end{equation}
and \begin{equation}\label{eqsysu3rho}
\left\{\begin{aligned}
-\Delta u_3 &=2\pi(\rho_{\mathcal R}-\rho(x))& &\mbox{in}& &\mathcal{R}\\
\frac{\partial u_3}{\partial \vnu}&=0 & & \mbox{on}& &\partial\mathcal{R}
\end{aligned}\right..
\end{equation}
As in the proof of Lemma $4.15$ of \cite{sandierserfaty}, we have
\begin{equation}\label{eqnormu1rho}
\int_{\mathcal{R}}|\nabla u_1|^q\le C_{p,q}L^{2-\frac{q}{p}}\|\varphi\|^q_{L^p(\partial\mathcal{R})}
\end{equation}
and
\begin{equation}\label{eqnormu2rho}
\int_{\mathcal{R}}|\nabla u_2|^q\le C_{p,q}L^{2-\frac{q}{p}}\|\varphi\|^q_{L^p(\partial\mathcal{R})}
\end{equation}
where $C_{p,q}$ is a constant that depends only on $p$ and $q$.

Finally, by elliptic regularity and a scaling argument, we find that for any $q>1$
\begin{equation}\label{eqnormu3rho}
\norm\nabla u_3\norm_{L^q(\mathcal{R})}\le CL^{1+2/q}\norm \rho-\rho_\mathcal{R}\norm_{L^{\infty}(\mathcal R)}
\end{equation}
with $C$ is an universal constant (see \cite{grisvard} for more details on elliptic regularity).

Combining \eqref{eqnormu1rho}, \eqref{eqnormu2rho} and \eqref{eqnormu3rho}, we obtain \eqref{eqnormurho}.
\end{proof}


The following lemma serves to estimate the energy of a sum of vector fields.

\begin{lem}\label{lemsumvf}			
	Let $U\subset \RR^2$ be an open set and $\j_1$, $\j_2$ be two vector fields satisfying
	\begin{equation*}
		\left\{
		\begin{aligned}
			&\di E_1=2\pi (\sum_{p\in \Lambda}\delta_p- \rho_1)&&\mbox{in}\ U\\
			&\di E_2=\rho_2&&\mbox{in}\ U
		\end{aligned}
		\right.
	\end{equation*}
	for some discrete subset $\Lambda$ of $\RR^2$, and some nonnegative bounded functions $\rho_1$, $\rho_2$, and such that $\curl \j_1 $  vanishes in a neighborhood of each $p\in \Lambda$. If for some $q>2$ and $q'$ its conjuguate exponent, we have $E_1\in L^{q'}(U)$, $E_2\in L^{q}(U)$, then
	\begin{equation}
		\label{eqsumvf}
		W(E_1+E_2,\1_U)\le W(E_1,\1_U)+\frac{1}{2}\norm E_2\norm^2_{L^2(U)}+\norm E_1\norm_{L^{q'}(U)}\norm E_2\norm_{L^{q}(U)}
	\end{equation}
\end{lem}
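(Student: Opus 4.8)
\textbf{Plan of proof of Lemma \ref{lemsumvf}.} The idea is simply to expand the square in the definition \eqref{defWchi} of $W(E_1+E_2, \1_U)$ and control the resulting cross term and the pure $E_2$ term. The only subtlety is the renormalization at the points of $\Lambda$: since $E_2 \in L^q(U)$ with $q>2$, it has no singularity at the points $p\in\Lambda$, so all the renormalized mass sits in $E_1$. Concretely, write $U_\eta = U \setminus \cup_{p\in\Lambda}B(p,\eta)$ and
\begin{equation*}
\hal \int_{U_\eta} |E_1+E_2|^2 + \pi \log\eta \sum_{p\in\Lambda} \1
= \left(\hal \int_{U_\eta}|E_1|^2 + \pi\log\eta\, \#\Lambda\right) + \hal\int_{U_\eta}|E_2|^2 + \int_{U_\eta} E_1\cdot E_2.
\end{equation*}
First I would argue that each of the last two terms converges as $\eta\to 0$: since $E_2\in L^q(U) \subset L^2(U)$ (the set $U$ being, in the applications, bounded — or one restricts to the relevant bounded piece; in any case $\hal\int_{U_\eta}|E_2|^2$ increases to $\hal\|E_2\|_{L^2(U)}^2$), and the cross term is handled by H\"older with exponents $q'$ and $q$, $\left|\int_{U_\eta}E_1\cdot E_2\right| \le \|E_1\|_{L^{q'}(U)}\|E_2\|_{L^q(U)}$, a bound uniform in $\eta$, so by dominated convergence $\int_{U_\eta}E_1\cdot E_2 \to \int_U E_1\cdot E_2$. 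Note here that $E_1 \in L^{q'}(U)$ with $q'<2$ is exactly the kind of integrability that $W(E_1,\1_U)$-type objects carry near the points (logarithmic gradient is in every $L^p$, $p<2$), so the hypothesis is natural.

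Given these convergences, I would pass to the limit $\eta\to 0$ in the displayed identity: the left-hand side converges by definition to $W(E_1+E_2,\1_U)$ (which is legitimate because the limit \eqref{defWchi} exists, $\curl E_1$ vanishing near each $p$), the first parenthesized term converges to $W(E_1,\1_U)$, and the remaining two converge as above. This yields the exact identity
\begin{equation*}
W(E_1+E_2,\1_U) = W(E_1,\1_U) + \hal\|E_2\|_{L^2(U)}^2 + \int_U E_1\cdot E_2,
\end{equation*}
and then the H\"older estimate on the cross term immediately gives \eqref{eqsumvf} (in fact with equality up to the cross-term bound). One could equally keep the cross term as an equality; the inequality form is all that is needed downstream.

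\textbf{Main obstacle.} The only genuine point to be careful about is justifying that the cross term $\int_{U_\eta} E_1\cdot E_2$ has a finite limit and that no extra contribution is generated at the points $p\in\Lambda$ when removing the cutoff — i.e.\ that $\int_{\partial B(p,\eta)}$-type boundary terms vanish. This is where $q>2$ is used: near $p$, $|E_1|\sim |x-p|^{-1}$ while $E_2$ is bounded in $L^q$, so $E_1\cdot E_2 \in L^1$ near $p$ by H\"older ($q'<2$), and the integral over $B(p,\eta)$ tends to $0$; no renormalization of the cross term is needed. Everything else is a routine expansion of a square plus H\"older's inequality, so I do not anticipate real difficulty.
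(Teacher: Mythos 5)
Your proof is correct and follows exactly the same route as the paper's: expand the square over $U_\eta = U\setminus\cup_p B(p,\eta)$, bound the pure $E_2$ term by $\tfrac12\|E_2\|_{L^2(U)}^2$ and the cross term by H\"older with exponents $q',q$, then add $\pi\#\Lambda\log\eta$ and send $\eta\to 0$. The extra care you take about the $\eta\to 0$ limit of the cross term (and the remark that $L^q\subset L^2$ needs $U$ bounded, though the appearance of $\|E_2\|_{L^2(U)}$ in the statement already presumes finiteness) is sound but not beyond what the paper implicitly does.
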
			

\begin{proof}
It suffices to write that 
\begin{multline*}\hal \int_{U\backslash \cup_{p\in\Lambda} B(p, \eta)} |E_1+E_2|^2\\=
\hal  \int_{U\backslash \cup_{p\in\Lambda} B(p, \eta)} |E_1|^2 +\hal  \int_{U\backslash \cup_{p\in\Lambda} B(p, \eta)}|E_2|^2 + \int_{U\backslash \cup_{p\in\Lambda} B(p, \eta)} E_1 \cdot E_2.\end{multline*}
We have $\hal  \int_{U\backslash \cup_{p\in\Lambda} B(p, \eta)}|E_2|^2 \le 
\hal 
\norm E_2\norm_{L^{2}(U)}^2$  and $$\int_{U\backslash \cup_{p\in\Lambda} B(p, \eta)} E_1 \cdot E_2\le \norm E_1\norm_{L^{q'}(U)}\norm E_2\norm_{L^{q}(U)}$$ by H\"older's inequality. The result then follows easily by adding $\# \Lambda \log \eta $ and letting $\eta \to 0$.
\end{proof}

To construct a vector field $\j$ which belongs to the admissible class, we need $\curl \j=0$. This can be done though the following lemma.

\begin{lem}\label{lemdiv0} Let $\Omega\subset\RR^2$ such that $\partial \Omega$ is Lipschitz. Let $p\in (1,2)$ and  $\varphi\in L^p(\partial \Omega)$. Let $\rho$ be a nonnegative $\mathcal C^0$ function. Let $\j$ be a vector field in $\RR^2$ such that
\begin{equation}
\left\{
\begin{aligned}
&\di \j=2\pi(\nu-\rho)&&\mbox{in } \Omega\\
&\j\cdot\vnu=\varphi&&\mbox{on }\partial \Omega\\
\end{aligned}
\right.
\end{equation}
where $\nu$ has the form
$$
\nu=\sum_{p\in\Lambda}\delta_p\quad\mbox{for some discrete set}\ \Lambda\subset \Omega
$$
and \eqref{eqcondnu} is satisfied. Then there exists $\tj\in\mathcal B_{\rho,\varphi}(\Omega)$ such that
\begin{equation}\label{eqendiv0}
W(\tj,\1_{\Omega})\le W(\j,\1_{\Omega})
\end{equation}
\end{lem}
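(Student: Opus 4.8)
The plan is to use a Helmholtz/Hodge decomposition to kill the curl of $\j$ while keeping the divergence and the normal trace intact, and to check that the $L^2$ energy (hence the renormalized energy $W$) can only decrease. First I would write $\j = \j^* + \nabla^\perp \xi$ where $\j^*$ is a curl-free vector field with the same divergence as $\j$ in $\Omega$ and the same normal trace $\varphi$ on $\partial\Omega$, namely $\j^* = -\nabla H$ with $H$ solving $-\Delta H = 2\pi(\nu-\rho)$ in $\Omega$, $\partial_\vnu H = -\varphi$ on $\partial\Omega$ (well-posed up to a constant since the compatibility condition holds: $2\pi(\nu(\Omega)-\int_\Omega\rho) = \int_{\partial\Omega}\varphi$). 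Then $\j - \j^*$ has zero divergence and zero normal trace, so on the simply connected (or at least each connected component) of $\Omega$ it can be written as $\nabla^\perp\xi$ with $\xi$ constant on $\partial\Omega$; subtracting a constant we may take $\xi = 0$ on $\partial\Omega$. Set $\tj = \j^*$; by construction $\di\tj = 2\pi(\nu-\rho)$, $\curl\tj = 0$ in $\Omega$, and $\tj\cdot\vnu = \varphi$ on $\partial\Omega$, so $\tj \in \mathcal B_{\rho,\varphi}(\Omega)$.

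The second step is the energy comparison. Away from the points, on $\Omega_\eta := \Omega\setminus\cup_{p\in\Lambda}B(p,\eta)$, I would expand
\begin{equation}
\frac12\int_{\Omega_\eta}|\j|^2 = \frac12\int_{\Omega_\eta}|\tj|^2 + \frac12\int_{\Omega_\eta}|\nabla^\perp\xi|^2 + \int_{\Omega_\eta}\tj\cdot\nabla^\perp\xi.
\end{equation}
The cross term should vanish (or tend to $0$ as $\eta\to0$): integrating by parts, $\int_{\Omega_\eta}\nabla^\perp\xi\cdot\tj = -\int_{\Omega_\eta}\xi\,\di(\nabla^\perp\cdot)\,\ldots$ — more precisely, since $\di(\nabla^\perp\xi)=0$ and $\tj = -\nabla H$, write $\int_{\Omega_\eta}\nabla^\perp\xi\cdot\nabla H = \int_{\partial\Omega_\eta}\xi\,\partial_\tau H$, and the boundary of $\Omega_\eta$ consists of $\partial\Omega$ (where $\xi=0$) and the small circles $\partial B(p,\eta)$. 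On each $\partial B(p,\eta)$ the contribution is $O(\eta\cdot\sup_{\partial B(p,\eta)}|\nabla\xi|\cdot\sup|\partial_\tau H|)$; since $H$ has only a logarithmic singularity, $|\nabla H| = O(1/\eta)$ there, and one checks that $\xi$ is bounded near $p$ (because $\nabla^\perp\xi = \j+\nabla H$ has an $L^2$-type control or at worst the same logarithmic behaviour, which is integrable), giving a contribution $o_\eta(1)$. Hence, adding $\pi\#\{p : \ldots\}\log\eta$ to both sides and letting $\eta\to0$,
\begin{equation}
W(\tj,\1_\Omega) = W(\j,\1_\Omega) - \frac12\int_\Omega|\nabla^\perp\xi|^2 \le W(\j,\1_\Omega),
\end{equation}
which is exactly \eqref{eqendiv0}.

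I expect the main obstacle to be the rigorous justification that the cross term $\int_{\Omega_\eta}\tj\cdot\nabla^\perp\xi$ contributes nothing in the limit $\eta\to0$, i.e. controlling the regularity of $\xi$ near the points $p\in\Lambda$ and near $\partial\Omega$ (where $\varphi$ is only $L^p$). Near each $p$, both $\j$ and $\nabla H$ decompose as $\nabla\log|x-p| + (\text{bounded})$, so their difference $\nabla^\perp\xi$ is in fact bounded near $p$, which makes the circle integrals genuinely $o(\eta)$; this is the delicate but manageable point. Near $\partial\Omega$, one uses that $\varphi\in L^p$ with $p>1$ so that $\nabla^\perp\xi\in L^p(\Omega)$ and the boundary trace $\xi|_{\partial\Omega}$ makes sense and can be normalized to $0$, and $\partial\Omega$ Lipschitz suffices for the Hodge decomposition to hold (working component by component if $\Omega$ is not simply connected, using that on each component the flux of $\j-\tj$ through the boundary is zero). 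A secondary technical point is to make sure the number of points is unchanged and \eqref{eqcondnu} is preserved — but both are immediate since $\nu$ and $\rho$ are untouched.
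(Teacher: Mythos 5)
Your proof is correct and follows essentially the same route as the paper's: the paper defines $\zeta$ by solving $-\Delta\zeta=\curl\j$ in $\Omega$ with $\zeta=0$ on $\partial\Omega$ and sets $\tj=\j+\nabla^\perp\zeta$, which is the same curl-free field as your $\tj=-\nabla H$ (with $\zeta=-\xi$), and then shows $W$ decreases by exactly the same expansion and the vanishing of the cross term via Green's theorem. Your reformulation through the Neumann problem for $H$ and the Helmholtz split $\j=\tj+\nabla^\perp\xi$ is a cosmetic dual presentation; note only that the paper's Dirichlet formulation sidesteps the simple-connectedness issue you flag (there is no need to reconstruct a single-valued stream function from $\j-\tj$).
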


\begin{proof}
We have to modify $\j$ so that $\curl\, \j=0$. Let $\zeta$ be the solution of
$$
\left\{
\begin{aligned}
&-\Delta \zeta= \mathrm{curl}\,\j&&\mbox{in } \Omega\\
&\zeta=0&&\mbox{on }\partial \Omega
\end{aligned}
\right.,
$$
and let $\tj=\j+\nabla^{\bot}\zeta$. We obtain
\begin{equation*}
\left\{
\begin{aligned}
&\di \tj=2\pi(\nu-\rho)&&\mbox{in } \Omega\\
&\curl\tj=0&&\mbox{in } \Omega\\
&\tj\cdot\vnu=\varphi&&\mbox{on }\partial \Omega\\
\end{aligned}
\right.,
\end{equation*}
since $\tj\cdot\vnu=\j\cdot\vnu$. Moreover,
\begin{align*}
 \int_{\Omega\backslash \bigcup B(p,\eta)}\vb \j\vb^2-&\, \int_{\Omega\backslash \bigcup B(p,\eta)}\vb \tj\vb^2=-2\int_{\Omega\backslash \bigcup B(p,\eta)}\tj\cdot\nabla^{\bot}\zeta\\
 &+\int_{K_l(a)\backslash \bigcup B(p,\eta)}\vb\nabla\zeta\vb^2\ge -2\int_{\Omega\backslash \bigcup B(p,\eta)}\tj\cdot\nabla^{\bot}\zeta.
\end{align*}
Since the term on the right-hand side converges as $\eta\to0$ to the integral over $\Omega$ and $\int_{\Omega}\tj\cdot\nabla^{\bot}\zeta=0$ (by applying Green's theorem and using the fact that $\curl\tj=0$ in $\Omega$ and $\zeta=0$ on $\partial \Omega$), we obtain
$$
W(\j,\1_{K_l(a)})-W(\tj,\1_{K_l(a)})\ge 0.
$$
\end{proof}

\vskip .2cm

\noindent
\sc Simona Rota Nodari\\
Laboratoire AGM, Universit\'e de Cergy-Pontoise\\
2 av. Adolphe Chauvin \\
95302 Cergy-Pontoise\\
{\tt simona.rota-nodari@u-cergy.fr}
\\

\noindent
\sc Sylvia Serfaty\\
UPMC Univ  Paris 06, UMR 7598 Laboratoire Jacques-Louis Lions,\\
 Paris, F-75005 France ;\\
 CNRS, UMR 7598 LJLL, Paris, F-75005 France \\
 \&  Courant Institute, New York University\\
251 Mercer st, NY NY 10012, USA\\
{\tt serfaty@ann.jussieu.fr}

\end{document}